\numberwithin{equation}{section}
\newcommand{\wxstar}[1]{\vW^\star_{{#1}}}
\newcommand{\wxj}[2]{\vW_{{#1},{#2}}}
\newcommand{\wq}{\sum_{j=1}^{s-1} 2^{j/s-1} w_j}
\renewcommand\log{\ln}
\newcommand\SPIV{{\tt SPIV}}
\newcommand\DD{{\tt DD}}
\newcommand{\vX}{\vec X}
\renewcommand{\epsilon}{\eps}
\newcommand\nix{\,\cdot\,}
\newcommand\vW{\vec W}
\newcommand\vS{\vec S}
\newcommand\dd{{\mathrm d}}
\renewcommand{\vec}[1]{\boldsymbol{#1}}
\newcommand\KL[2]{D_{\mathrm{KL}}\bc{{{#1}\|{#2}}}}
\newcommand\SIGMA{\vec\sigma}
\newcommand\CHI{\vec\chi}
\newcommand\TAU{\vec\tau}
\newtheorem{definition}{Definition}[section]
\newtheorem{claim}[definition]{Claim}
\newtheorem{theorem}[definition]{Theorem}
\newtheorem{lemma}[definition]{Lemma}
\newtheorem{proposition}[definition]{Proposition}
\newtheorem{corollary}[definition]{Corollary}
\newtheorem{fact}[definition]{Fact}
\newcommand\fE{\mathfrak{E}}
\newcommand\cA{\mathcal{A}}
\newcommand\cD{\mathcal{D}}
\newcommand\cE{\mathcal{E}}
\newcommand\cU{\mathcal{U}}
\newcommand\cN{\mathcal{N}}
\newcommand\cS{\mathcal{S}}
\newcommand\cL{\mathcal{L}}
\newcommand\cM{\mathcal{M}}
\newcommand\cP{\mathcal{P}}
\newcommand\cX{\mathcal{X}}
\newcommand\cW{\mathcal{W}}
\newcommand\cZ{\mathcal{Z}}
\def\cE{{\mathcal E}}
\newcommand\eul{\mathrm{e}}
\newcommand\eps{\varepsilon}
\newcommand\Erw{\mathbb{E}}
\newcommand{\vecone}{\vec{1}}
\newcommand{\Po}{{\rm Po}}
\newcommand{\Bin}{{\rm Bin}}
\newcommand{\Hyp}{{\rm Hyp}}
\newcommand\bc[1]{\left({#1}\right)}
\newcommand\cbc[1]{\left\{{#1}\right\}}
\newcommand\bcfr[2]{\bc{\frac{#1}{#2}}}
\newcommand\brk[1]{\left\lbrack{#1}\right\rbrack}
\newcommand\abs[1]{\left|{#1}\right|}
\newcommand{\Whp}{W.h.p.}
\newcommand{\whp}{w.h.p.}
\newcommand{\Erdos}{Erd\H{o}s}
\newcommand{\Renyi}{R\'enyi}
\newcommand{\Mezard}{M\'ezard}
\newcommand\pr{\mathbb{P}} 
\renewcommand\Pr{\pr} 
\newcommand\Lem{Lemma}
\newcommand\Prop{Proposition}
\newcommand\Thm{Theorem}
\newcommand\Cor{Corollary}
\newcommand\Sec{Section}
\newcommand{\ceil}[1]{\left\lceil#1\right\rceil}
 \def\G{{\vec G}}
\def\pr{{\mathbb P}}
\newcommand{\remove}[1]{}
\newcommand{\one}{V_1}
\newcommand{\zeroplus}{V_{0+}}
\newcommand{\oneplus}{V_{1+}}
\newcommand{\zeroplusi}[1]{V_{0+}[{#1}]}
\newcommand\madapt{m_{\mathrm{ad}}}
\newcommand\minf{m_{\mathrm{inf}}}
\newcommand\malg{m_{\mathtt{DD}}}
\newcommand\mDDB{m_{\mathtt{DD},\mathrm{Be}}}
\newcommand\mBL{m_{\mathrm{BL,ad}}}
\newcommand\mMT{m_{\mathrm{MTT,ad}}}
\newcommand\mQGT{m_{\mathrm{QGT}}}
\newcommand{\be}{\begin{equation}}
	\newcommand{\bel}[1]{\begin{equation}\lab{#1}\ }
		\newcommand{\ee}{\end{equation}}
	\newcommand{\bea}{\begin{eqnarray}}
		\newcommand{\eea}{\end{eqnarray}}
	\newcommand{\bean}{\begin{eqnarray*}}
		\newcommand{\eean}{\end{eqnarray*}}
\newcommand\FigG{
\begin{figure}
\begin{minipage}[t][][b]{0.95 \textwidth}
\begin{tikzpicture}[scale=0.8]
\begin{axis}[
axis lines = left,
 xlabel = density parameter $\theta$,
ylabel style={at={(0,0.3)}},
xtick={0, 0.409, 0.5, 1},
xticklabels={$0$, $\frac{\log2}{1+\log2}$, $\frac{1}{2}$, $1$},
ytick={2.081368981, 1.442695041, 1.040684490, 0.8520789316},
yticklabels={$\log^{-2}2$,$\log^{-1}2$, $(2\log^22)^{-1}$, $((1+\log2)\log 2)^{-1}$},
ymin = 0,
ymax = 2.2,
xmin = 0,
xmax=1.05,
height=6.1cm,
width=15cm,
legend style={at={(axis cs:1.01,1)},anchor=south west},
x label style={at={(axis description cs:0.9,-0.1)},anchor=north},
font = \small
]

\addplot [
name path=it,
domain=0:1, 
samples=100, 
style={ultra thick},
solid,
color=black,
]
{max(x/(ln(2)^2), (1-x)/ ln(2)};
\addlegendentry{$\minf/(n^\theta\log n)$}

\addplot [
name path=dd,
domain=0:1, 
samples=500, 
color=cyan,
style={ultra thick},
dashed
]
{max(x/(ln(2)^2), (1-x)/( ln(2)^2)};
\addlegendentry{$\malg/(n^\theta\log n)$}

\addplot [
name path=counting,
domain=0:1, 
samples=100, 
color=red,
style={ultra thick},
dashed
]
{(1-x)/(ln(2))};
\addlegendentry{$m_{\text{ad}}/(n^\theta\log n)$}

\addplot [
name path=border_up,
domain=0:1, 
samples=10, 
color=black,
style={thin},
draw opacity=0.00,
dashed
]
{1/(ln(2)*ln(2))+0.01};

\addplot [
name path=border_low,
domain=0:1, 
samples=10, 
color=black,
style={thin},
draw opacity=0.00,
dashed
]
{0.001};

\addplot [
        thick,
        color=white,
        fill=green, 
        fill opacity=0.15
    ]
    fill between[
        of=dd and it,
        soft clip={domain=0:1},
    ];

\addplot [
        thick,
        color=black,
        fill=yellow, 
        fill opacity=0.4
    ]
    fill between[
        of=it and counting,
        soft clip={domain=0.409:1},
    ];

\addplot [
        thick,
        color=blue,
        fill=blue, 
        fill opacity=0.2
    ]
    fill between[
        of=dd and border_up,
        soft clip={domain=0:1},
    ];

\addplot [
        thick,
        color=red,
        fill=red, 
        fill opacity=0.2
    ]
    fill between[
        of=it and border_low,
        soft clip={domain=0:0.409},
    ];

\addplot [
        thick,
        color=red,
        fill=red, 
        fill opacity=0.2
    ]
    fill between[
        of=counting and border_low,
        soft clip={domain=0.409:1},
    ];

\path[name path=axis] (axis cs:0,0) -- (axis cs:1,0);

\addplot+[
mark=none,
color=black,
dotted,
]
coordinates
{(0.5,0) (0.5,1.040684490)};
\addplot+[
mark=none,
color=black,
dotted,
]
coordinates
{(0.409,0) (0.409,0.8520789316)};
\addplot+[
mark=none,
color=black,
dotted,
]
coordinates
{(0,1.040684490) (0.5,1.040684490)};
\addplot+[
mark=none,
color=black,
dotted,
]
coordinates
{(0,0.8520789316) (0.409,0.8520789316)};

\end{axis}

\end{tikzpicture}
\end{minipage}

\caption{The phase transitions in group testing.
    The best previously known algorithm {\tt DD} succeeds in the blue but not in the green region.
    The new algorithm {\tt SPIV} succeeds in both the blue and the green region.
    The  black line indicates the non-adaptive information-theoretic threshold $\minf$, below which non-adaptive group testing is impossible.
	In the red area even (multi-stage) adaptive inference is impossible.
    Finally, the two-stage adaptive group testing algorithm from \Thm~\ref{Thm_ad} succeeds in the yellow region.}
\label{fig_bounds_illustration}
\end{figure}

}
\newcommand\FigSP{
\begin{figure}
\begin{tikzpicture}

\foreach \i in {1,...,12}
{
        \def\lab{x_\i};
        \node[circle,draw=black,fill=black,minimum size=1] (\lab) at (0.4*\i,0) {};
}
\foreach \i in {13,...,24}
{
        \def\lab{x_\i};
        \node[circle,draw=blue,fill=blue,minimum size=1] (\lab) at (0.4*\i,0) {};
}
\foreach \i in {25,...,36}
{
        \def\lab{x_\i};
        \node[circle,draw=black, fill=black,minimum size=1] (\lab) at (0.4*\i,0) {};
}
\foreach \i in {0,...,9}
{
        \def\x{4*\i};
        \draw[dashed] (0.4*\x+0.2,0.5) -- (0.4*\x+0.2,-2.5);
}
\foreach \i in {1,...,9}
{
        \def\labx{c_\i};
        \def\labaone{done_\i};
        \def\labatwo{dtwo_\i};
        \pgfmathsetmacro{\xcoord}{0.4*(\i+1.5)+3*0.4*(\i-1))};

        \coordinate (\labx) at (\xcoord,-0.3);
        \coordinate (\labaone) at (\xcoord-0.25,-1.4);
        \coordinate (\labatwo) at (\xcoord+0.25,-1.4);
}
\foreach \i in {13,...,15}
{
       \def\laba{a_\i};
        \pgfmathsetmacro{\xcoord}{0.4*\i+0.2};
        \node[rectangle, minimum size=8,draw=blue] (\laba) at (\xcoord,-1.7){};
}
\foreach \i in {17,...,19}
{
        \def\laba{a_\i};
        \pgfmathsetmacro{\xcoord}{0.4*\i+0.2};
       \node[rectangle, minimum size=8,draw=blue] (\laba) at (\xcoord,-1.7){};
}
\foreach \i in {21,...,23}
{
        \def\laba{a_\i};
        \pgfmathsetmacro{\xcoord}{0.4*\i+0.2};
       \node[rectangle, minimum size=8,draw=blue] (\laba) at (\xcoord,-1.7){};
}

\foreach \j in {2,3,4,8,9,10}{
    \pgfmathsetmacro{\ione}{2*\j-1};
    \pgfmathsetmacro{\itwo}{2*\j};
    \def\laba{a_\ione};
    \def\labb{a_\itwo};
    \pgfmathsetmacro{\xcoordone}{0.4*4*(\j-2)+1.2};
    \pgfmathsetmacro{\xcoordtwo}{0.4*4*(\j-2)+0.8};
    \node[rectangle, minimum size=8,draw=black] (\laba) at (\xcoordone,-1.7){};
    \node[rectangle, minimum size=8,draw=black] (\labb) at (\xcoordtwo,-1.7){};

}

\foreach \i in {4,5,6}
{
    \filldraw[fill=blue] (c_\i) -- (done_\i) -- (dtwo_\i) -- cycle;
}
\foreach \i in {1,2,3,7,8,9}
{
    \filldraw[fill=black] (c_\i) -- (done_\i) -- (dtwo_\i) -- cycle;
}

\foreach \i in {1,...,8}
{
    \pgfmathsetmacro{\x}{\i+1};
    \filldraw[fill=black!25] (c_\i) -- (done_\x) -- (dtwo_\x) -- cycle;
}
\foreach \i in {1,...,7}
{
    \pgfmathsetmacro{\x}{\i+2};
    \filldraw[fill=black!10] (c_\i) -- (done_\x) -- (dtwo_\x) -- cycle;
}

\filldraw[black!25] (0.2,-0.7) -- (1.2,-1.4)--(0.7,-1.4)--(0.2,-0.9)--cycle;
\filldraw[black!10] (0.2,-1.1) -- (1.2,-1.4)--(0.7,-1.4)--(0.2,-1.2)--cycle;
\filldraw[black!10] (0.2,-0.5) -- (2.8,-1.4)--(2.35,-1.4)--(0.2,-0.55)--cycle;
\filldraw[black!25] (13.8,-0.3) -- (14.6,-0.7)--(14.6,-0.9)--cycle;
\filldraw[black!10] (13.8,-0.3) -- (14.6,-0.5)--(14.6,-0.55)--cycle;
\filldraw[black!10] (12.2,-0.3) -- (14.6,-1.1)--(14.6,-1.2)--cycle;
\foreach \i in {1,2,5,6,9,10}
{
        \pgfmathsetmacro{\x}{\i-16};
        \pgfmathsetmacro{\xcoord}{28*0.4 + 0.4*\x+0.4};
        \node[rectangle, minimum size=8,draw=black] at (\xcoord,-2.1){};
}
\node (A) at (1,0.5) {$V[7]$};
\node (B) at (2.63,0.5) {$V[8]$};
\node (C) at (4.26,0.5) {$V[9]$};
\node[text=blue] (D) at (5.87,0.5) {$V[1]$};
\node[text=blue] (E) at (7.45,0.5) {$V[2]$};
\node[text=blue]  (F) at (9.05,0.5) {$V[3]$};
\node (G) at (10.65,0.5) {$V[4]$};
\node (H) at (12.28,0.5) {$V[5]$};
\node (I) at (13.91,0.5) {$V[6]$};

\node (J) at (1,-2.5) {$F[7]$};
\node (K) at (2.63,-2.5) {$F[8]$};
\node (L) at (4.26,-2.5) {$F[9]$};
\node (M) at (5.87,-2.5) {$F[1]$};
\node (N) at (7.45,-2.5) {$F[2]$};
\node (O) at (9.05,-2.5) {$F[3]$};
\node (P) at (10.65,-2.5) {$F[4]$};
\node (Q) at (12.28,-2.5) {$F[5]$};
\node (R) at (13.91,-2.5) {$F[6]$};
\node (S)[text=blue]  at (5.87,-3) {$F[0]$};
\node (S)[text=blue]  at (7.45,-3) {$F[0]$};
\node (S)[text=blue]  at (9.03,-3) {$F[0]$};
\node at (-0.2,-1){$\cdots$};
\node at (14.9,-1){$\cdots$};

\draw[->] (15.2, -1) -- (15.45, -1) -- (15.45, 0.9) -- (-0.75, 0.9) -- (-0.75, -1) -- (-0.45, -1);

\end{tikzpicture}
\caption[Idea]{The spatially coupled test design with $n = 36, \ell = 9, s = 3$. The individuals in the seed groups $V[1]\cup\cdots\cup V[s]$ (blue) are equipped with additional test $F[0]$ (blue rectangles).
The black rectangles represent the tests $F[1]\cup\cdots\cup F[\ell]$.}
\label{Fig_spatial_coupling_idea}

\end{figure}
}
\begin{document}
	
\title{Optimal group testing}
	
\thanks{Supported by DFG CO 646/3 and Stiftung Polytechnische Gesellschaft. An extended abstract version of this work has been submitted to the COLT 2020 conference.}

\author{Amin Coja-Oghlan, Oliver Gebhard, Max Hahn-Klimroth, Philipp Loick}
\address{Amin Coja-Oghlan, {\tt acoghlan@math.uni-frankfurt.de}, Goethe University, Mathematics Institute, 10 Robert Mayer St, Frankfurt 60325, Germany.}
\address{Oliver Gebhard, {\tt gebhard@math.uni-frankfurt.de}, Goethe University, Mathematics Institute, 10 Robert Mayer St, Frankfurt 60325, Germany.}
\address{Max Hahn-Klimroth, {\tt hahnklim@math.uni-frankfurt.de}, Goethe University, Mathematics Institute, 10 Robert Mayer St, Frankfurt 60325, Germany.}	
\address{Philipp Loick, {\tt loick@math.uni-frankfurt.de}, Goethe University, Mathematics Institute, 10 Robert Mayer St, Frankfurt 60325, Germany.}

\begin{abstract}%
 In the group testing problem the aim is to identify a small set of $k\sim n^\theta$ infected individuals out of a population size $n$, $0<\theta<1$.
We avail ourselves of a test procedure capable of testing groups of individuals, with the test returning a positive result iff at least one individual in the group is infected.
The aim is to devise a test design with as few tests as possible so that the set of infected individuals can be identified correctly with high probability.
We establish an explicit sharp information-theoretic/algorithmic phase transition $\minf$ for non-adaptive group testing, where all tests are conducted in parallel.
Thus, with more than $\minf$ tests the infected individuals can be identified in polynomial time \whp, while learning the set of infected individuals is information-theoretically impossible with fewer tests.
In addition, we develop an optimal adaptive scheme where the tests are conducted in two stages.
\hfill {\em MSc: 	05C80, 	60B20, 68P30}
\end{abstract}

\maketitle

\section{Introduction}

\subsection{Background and motivation.}

Various intriguing combinatorial problems come as inference tasks where we are to learn a hidden ground truth by means of indirect queries.
The goal is to get by with as small a number of queries as possible.
The ultimate solution to such a problem should consist of a positive algorithmic result showing that a certain number of queries suffice to learn the ground truth efficiently, complemented by a matching information-theoretic lower bound showing that with fewer queries the  problem is insoluble, regardless of computational resources.

Group testing is a prime example of such an inference problem~\cite{Aldridge_2019}.
The objective is to identify within a large population of size $n$ a subset of $k$ individuals infected with a rare disease.
We presume that the number of infected individuals scales as a power $k=\lceil n^\theta\rceil$ of the population size with an exponent $\theta\in(0,1)$,  a parametrisation suited to modelling the pivotal early stages of an epidemic~\cite{Wang}.
Indeed, since early on in an epidemic test kits might be in short supply, it is vital to get the most diagnostic power out the least number of tests.
To this end we assume that the test gear is capable of not merely testing a single individual but an entire group.
The test comes back positive if any one individual in the group is infected and negative otherwise.
While in {\em non-adaptive}  group testing all tests are conducted in parallel, in {\em adaptive} group testing test are conducted in several stages.
In either case we are free to allocate individuals to test groups as we please.
Randomisation is allowed.
What is the least number of tests required so that the set of infected individuals can be inferred from the test results with high probability?
Furthermore, in adaptive group testing, what is the smallest depth of test stages required?

Closing the considerable gaps that the best prior bounds left, the main results of this paper furnish matching algorithmic and information-theoretic bounds for both adaptive and non-adaptive group testing.
Specifically, the best prior information-theoretic lower bound derives from the following folklore observation.
Suppose that we conduct $m$ tests that each return either `positive' or `negative'.
Then to correctly identify the set of infected individuals we need the total number $2^m$ of conceivable test results to asymptotically exceed the number $\binom nk$ of possible sets of infected individuals.
Hence,  $2^m\geq(1+o(1))\binom nk$.
Thus, Stirling's formula yields the lower bound
\begin{align}\label{eqInfAdapt} 
\madapt&=\frac{1-\theta}{\log2}n^\theta \log n,
\end{align}
which applies to both adaptive and non-adaptive testing.
On the positive side, a randomised non-adaptive test design with
\begin{align}\label{eqDD}
\malg\sim\frac{\max \cbc{\theta,1-\theta}}{\ln^2 2}  n^\theta \log n
\end{align}
tests exists from which a greedy algorithm called {\tt DD} correctly infers the set of infected individuals \whp~\cite{Johnson_2019}.
Clearly, $\madapt<\malg$ for all infection densities $\theta$ and $\malg/\madapt\to\infty$ as $\theta\to1$.
In addition, there is an efficient adaptive three-stage group testing scheme that asymptotically matches the lower bound $\madapt$~\cite{Scarlett_2019}.

We proceed to state the main results of the paper.
First, improving both the information-theoretic and the algorithmic bounds, we present optimal results for non-adaptive group testing.
Subsequently we show how the non-adaptive result can be harnessed to perform adaptive group testing with the least possible number $(1+o(1))\madapt$ of tests in only two stages.

\subsection{Non-adaptive group testing}\label{Sec_intro_non}
A {\em non-adaptive test design} is a bipartite graph $G=(V\cup F,E)$ with one vertex class $V=V_n=\{x_1,\ldots,x_n\}$ representing individuals and the other class $F=F_m=\{a_1,\ldots,a_m\}$ representing tests.
For a vertex $v$ of $G$ denote by $\partial v=\partial_Gv$ the set of neighbours of $v$.
Thus, an individual $x_j$ takes part in a test $a_i$ iff $x_j\in\partial a_i$.
Since we can shuffle the individuals randomly, we may safely assume that the vector $\SIGMA\in\cbc{0,1}^{V}$ whose $1$-entries mark the infected individuals is a uniformly random vector of Hamming weight $k$.
Furthermore, the test results induced by $\SIGMA$ read
\begin{align*}
\hat\SIGMA_{a_i}=\hat\SIGMA_{G,a_i}&=\max_{x\in\partial a_i}\SIGMA_x.
\end{align*}
Hence, given $\hat\SIGMA=\hat\SIGMA_{G}=(\hat\SIGMA_{G,a})_{a\in F}$ and $G$ we aim to infer $\SIGMA$.
Thus, we can represent an inference procedure by a function $\cA_G:\{0,1\}^m\to\{0,1\}^n$.
The following theorem  improves the lower bound on the number of tests required for successful inference.
Let
\begin{align}\label{eqInfUpper}
\minf=\minf(n,\theta)&=\max \cbc{\frac{\theta}{\ln^2 2},\frac{1-\theta}{\log2}}  n^\theta \log n.
\end{align}

\begin{theorem}\label{opt}
For any $0<\theta<1$, $\eps>0$ there exists $n_0=n_0(\theta,\eps)$ such that for all $n>n_0$,
all test designs $G$ with $m\leq(1-\eps)\minf$ tests and for every function $\cA_G:\{0,1\}^m\to\{0,1\}^n$ we have
\begin{align}\label{eqopt1}
\pr\brk{\cA_G(\hat\SIGMA_G)=\SIGMA}<\eps.
\end{align}
\end{theorem}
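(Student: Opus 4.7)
My plan is to establish the two terms in $\minf$ separately. The folklore bound $(1-\theta)n^\theta\log n/\log 2$ follows from a standard information-theoretic argument: since $\SIGMA$ is uniform over $\binom{V_n}{k}$ and $\hat\SIGMA_G$ is a function taking at most $2^m$ values, Fano's inequality yields $\Pr[\cA_G(\hat\SIGMA_G)=\SIGMA]\le (m+1)/\log_2\binom{n}{k}$, and Stirling gives $\log_2\binom{n}{k}=(1+o(1))(1-\theta)n^\theta\log_2 n$. This part is routine.

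The new content is the bound $\theta n^\theta \log n/\log^2 2$, and I would attack it via the existence of indistinguishable configurations. Borrowing the notation already introduced in the paper (cf.\ Figure~\ref{figure_types_of_individuals}), call an infected individual $x$ \emph{disguised} (type $\oneplus$) if every test $a\in\partial x$ contains another infected individual, and call a healthy individual $y$ a \emph{phantom} (type $\zeroplus$) if every test $a\in\partial y$ contains at least one infected individual. If the random $\SIGMA$ produces both an $\oneplus$-individual $x$ and a $\zeroplus$-individual $y$, then transposing their coordinates yields $\SIGMA'\ne\SIGMA$ with $\hat\SIGMA'_G=\hat\SIGMA_G$, so conditional on $\hat\SIGMA_G$ the posterior charges at least two distinct vectors equally; consequently no function $\cA_G$ can succeed with probability exceeding $1/2$ on this event. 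Therefore it suffices to prove that for every test design $G$ with $m\le(1-\eps)\theta n^\theta \log n/\log^2 2$,
\begin{align*}
\Pr\bigl[\oneplus\ne\emptyset\text{ and }\zeroplus\ne\emptyset\bigr]\ge 1-\eps/2.
\end{align*}

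The core of the argument is thus an existence statement over the random $\SIGMA$ that must hold \emph{uniformly} over all test designs $G$. I would handle this in three steps. First, reduce to designs with controlled test sizes and vertex degrees: discard tests of atypical size (too large or too small) and individuals of atypical degree, absorbing the loss in an $o(1)$ term. The key tool here is a convexity/majorisation observation — the probability that a fixed individual is in $\oneplus$, respectively $\zeroplus$, depends through a concave/convex expression on its degree and the sizes of its incident tests, so a regular design minimises the probability of \emph{both} events occurring jointly. Second, on such a regularised design, compute the expected sizes $|\oneplus|$ and $|\zeroplus|$: a direct calculation, balancing the parameters so that test outcomes are roughly equidistributed between $0$ and $1$ (which forces test size $\sim (n/k)\log 2$), shows that below the claimed threshold both expectations tend to infinity. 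Third, upgrade expectations to high-probability existence. I would do this by conditioning on $\hat\SIGMA_G$ (equivalently, on the set of positive tests), under which the $\oneplus$/$\zeroplus$ events become essentially local and a second-moment/Poisson-approximation argument goes through.

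The principal obstacle is the third step: the events ``$x\in\oneplus$'' for distinct infected $x$ are correlated through shared tests, as are the events ``$y\in\zeroplus$'' for distinct healthy $y$. Naively the variance can dominate. I expect the cleanest route is to work in a planted/null coupling: replace the random $\SIGMA$ by independent Bernoulli coordinates of density $k/n$, where the individual events become genuinely independent conditional on the positive-test set, and transfer the conclusion back via a standard contiguity/change-of-measure step. A further delicate point is making the reduction to regular designs sharp enough to preserve the constant $1/\log^2 2$; this is likely where the bulk of the technical work lies, as one must track the entropy lost to variance in the degree and test-size distributions without sacrificing the leading constant.
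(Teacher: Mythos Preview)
Your conceptual framework matches the paper's: the folklore counting bound handles the $(1-\theta)/\log 2$ term, and for the $\theta/\log^2 2$ term one exhibits disguised infected and healthy individuals ($V_{1+}$ and $V_{0+}$) whose statuses can be swapped without changing $\hat\SIGMA_G$. The switch to i.i.d.\ Bernoulli infections is also what the paper does.

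However, your plan for the concentration step has a genuine obstacle that the paper resolves by a different route. You propose to regularise the design, compute expectations, and then run a second-moment or Poisson argument. The difficulty is that the disguise events $\{x\in V_+\}$ are correlated through shared tests, and to decouple them one naturally wants to select individuals at pairwise distance greater than four in $G$ (so that their second neighbourhoods are disjoint). But after pruning, each test has size up to roughly $\Gamma=n^{1-\theta}\log n$ and each individual has polylogarithmic degree, so removing the $4$-neighbourhood of one individual costs about $\Gamma^2\mathrm{polylog}(n)=n^{2(1-\theta)+o(1)}$ vertices. To harvest $N$ independent candidates with individual disguise probability $n^{(\eps-1)\theta+o(1)}$ and still get $\Erw|V_{1+}\cap\{y_1,\dots,y_N\}|\to\infty$, one needs $N\gg n^{1-\eps\theta}$; combined with $N\cdot n^{2(1-\theta)}\ll n$ this forces $\theta>2/(2+\eps)$, i.e.\ $\theta$ close to $1$. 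Your second-moment alternative does not obviously fare better: the disguise events are increasing in $\CHI$, so FKG gives \emph{positive} correlation, which inflates rather than controls the variance; and conditioning on $\hat\SIGMA_G$ does not make the events local, since knowing which tests are positive does not reveal which infected individual renders each test positive.

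The paper's key additional idea is a \emph{dilution reduction} (Proposition~\ref{Prop_small_theta}): given a putative design beating $\minf(n,\theta)$ at some $\theta>\log 2/(1+\log 2)$, randomly restrict to $n'\approx n^{\theta/\theta'}$ individuals to obtain a design for density $\theta'$ arbitrarily close to $1$ with the \emph{same} number of tests; one checks $\minf(n',\theta')=(1+o(1))\minf(n,\theta)$, so the restricted design still beats its threshold. This reduces the whole problem to $\theta'$ near $1$, where the distant-individuals argument above (Proposition~\ref{Prop_large_theta}) goes through cleanly via FKG plus the one-line optimisation $\min_z z\log(1-(1-p)^{z-1})$ attained at $z\sim(\log 2)/p$. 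Your regularisation-by-majorisation step is therefore not needed, and the concentration difficulty you flag is circumvented rather than confronted.
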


\Thm~\ref{opt} rules out both deterministic and randomised test designs and inference procedures because \eqref{eqopt1} holds uniformly for all $G$ and all $\cA_G$.
Thus, no test design, randomised or not, with fewer than $\minf$ tests allows to infer the set of infected individuals with a non-vanishing probability.
Since $\minf$ matches $\malg$ from \eqref{eqDD} for $\theta\geq1/2$, \Thm~\ref{opt} shows that the positive result from~\cite{Johnson_2019} is optimal in this regime.
The following theorem closes the remaining gap by furnishing an optimal positive result for all $\theta$.

\begin{theorem} \label{thm_SC}
For any $0<\theta<1$, $\eps>0$ there is $n_0=n_0(\theta,\eps)$ such that for every $n>n_0$ there exist a randomised test design $\G$ comprising $m\leq(1+\eps)\minf$  tests and a polynomial time algorithm \SPIV\ that given $\G$ and the test results $\hat\SIGMA_{\G}$ outputs $\SIGMA$ \whp{}
\end{theorem}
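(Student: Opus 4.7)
The plan is to combine a \emph{spatially coupled} test design with a two-phase greedy/pivoting decoder that we will call \SPIV. Partition the $n$ individuals into $\ell$ consecutive compartments $V[1],\ldots,V[\ell]$, each of size $n/\ell$, where $\ell=\ell(n)\to\infty$ much faster than an auxiliary window length $s=s(n)\to\infty$. For each $i\in\{1,\ldots,\ell\}$ introduce a test compartment $F[i]$ of size $\Theta(\minf/\ell)$, and let every test $a\in F[i]$ include each individual in the window $V[i]\cup V[i+1]\cup\cdots\cup V[i+s-1]$ (indices mod $\ell$) independently with probability $p=\Theta(\log n/k)$ tuned to the information-theoretic optimum in~\eqref{eqInfUpper}. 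On top of this we add a small \emph{seed} batch $F[0]$ of $o(\minf)$ standard Bernoulli tests that touch only the first $s$ compartments $V[1]\cup\cdots\cup V[s]$. Since $|F[0]|=o(\minf)$ and $\sum_{i=1}^\ell |F[i]|\leq (1+\eps/2)\minf$, the total number of tests is at most $(1+\eps)\minf$ for $n$ large.

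The algorithm \SPIV\ operates in two stages. First, run the \DD\ procedure of~\cite{Johnson_2019} on the seed, using the tests in $F[0]$ together with $F[1]\cup\cdots\cup F[s]$; because the seed carries only an $s/\ell$--fraction of the infected individuals but is equipped with more than the \DD\ threshold number of tests, the entire seed is correctly decoded \whp. Then, for $i=s+1,s+2,\ldots,\ell$ in sequence, decode $V[i]$ using the tests in $F[i-s+1]\cup\cdots\cup F[i]$ whose remaining neighbours lie exclusively in already processed compartments $V[i-s+1],\ldots,V[i-1]$. The pivoting rule is the natural one: a vertex $x\in V[i]$ is declared healthy if it participates in at least one such test that returns negative (and therefore forces all its members to be healthy), and any remaining positive test with exactly one undecided neighbour in $V[i]$ forces that neighbour to be infected.

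The analysis hinges on tracking, inside each compartment, the sets of \emph{easy healthy} $\zerominus$, \emph{disguised healthy} $\zeroplus$ and \emph{disguised infected} $\oneplus$ individuals. The key observation is that for a test $a\in F[i]$ the effective number of undecided members shrinks from the full window size to essentially $n/\ell$ because the previously processed compartments contribute certified healthy neighbours. A first--moment calculation, matching the two constants in~\eqref{eqInfUpper} (the $\theta/\ln^22$ constant governing the healthy--identification phase and the $(1-\theta)/\ln 2$ constant governing the infected--identification phase), together with Chernoff bounds for the Bernoulli design, shows that $\Erw[|\zeroplus\cap V[i]|+|\oneplus\cap V[i]|]=o(k/\ell)$ as soon as $|F[i]|\geq (1+\eps/4)\minf/\ell$. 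Propagating this estimate by induction on $i$ and applying a union bound across the $\ell$ compartments yields that \whp\ all but $o(k)$ individuals are correctly classified, and a final clean--up step using the still unused tests restores the remaining identities.

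The principal obstacle is to land on the sharp prefactor $(1+\eps)$ instead of merely some constant multiple, which forces a delicate balance between the two bottlenecks in~\eqref{eqInfUpper}. One has to choose the window length $s$ large enough for the spatial coupling to smooth out atypical fluctuations inside each compartment, yet small enough that both the seed overhead $|F[0]|$ and the boundary effects near $i=s$ and $i=\ell$ stay $o(\minf)$. A further technical hurdle is preventing the errors committed while decoding $V[i-1]$ from cascading into $V[i]$: this is controlled by showing that the induced perturbation in the count of certified healthy neighbours of any undecided vertex is of order $o(n/\ell)$, so that the thresholds used in the pivoting rule remain on the correct side with room to spare. Combined with the matching lower bound of \Thm~\ref{opt}, this establishes the optimality of \SPIV.
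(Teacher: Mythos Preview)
Your high-level architecture---spatially coupled compartments, a seed decoded by \DD, a left-to-right sweep, and a final clean-up---matches the paper. But the decoding rule you propose for the sweep is too weak to reach the sharp constant, and the first-moment claim underpinning it is false for $\theta<1/2$. With $m=(1+\eps)\minf$ and $\Delta\sim m\log 2/k$ one has $|V_{0+}[i+1]|\sim n/(\ell\,2^{\Delta})$ (this is \Lem~\ref{lemma_v0+}); for $\theta<1/2$ the active branch of \eqref{eqInfUpper} is $(1-\theta)/\log 2$, so $2^{-\Delta}=n^{-(1+\eps)(1-\theta)\log 2+o(1)}$ and hence $|V_{0+}[i+1]|=n^{1-(1+\eps)(1-\theta)\log 2+o(1)}/\ell$, which for small $\eps$ is $k^{1+\Omega(1)}/\ell\gg k/\ell$. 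Thus your assertion that $\Erw[|V_{0+}\cap V[i]|]=o(k/\ell)$ is wrong, and a \DD-style pivot (negative test $\Rightarrow$ healthy; positive test with a single undecided neighbour $\Rightarrow$ infected) cannot separate the $k^{1+\Omega(1)}$ disguised healthy individuals in the compartment from the $O(k/\ell)$ genuinely infected ones.

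The idea you are missing is the paper's central innovation. For each undecided $x\in V[i+1]$ one forms a \emph{weighted} score $\vW_x^\star=\sum_{j=1}^{s-1}w_j\vW_{x,j}$, where $\vW_{x,j}$ counts positive tests of $x$ in $F[i+j]$ not explained by any already-diagnosed infected individual. Conditionally on $x\in V_{0+}[i+1]$ these unexplained tests are biased toward the later compartments, whereas for $x\in V_1[i+1]$ they are not; the paper chooses the weights $w_j$ via a Lagrangian so that the large-deviation exponent for $\{\vW_x^\star\text{ large}\mid x\in V_{0+}\}$ equals $1-\log 2$ (Claims~\ref{lem_happen1}--\ref{lem_lagrange}), which is exactly what is needed to beat the $n/2^{\Delta}$ count of disguised individuals at $m=(1+\eps)\madapt$. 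The paper explicitly notes that even the unweighted sum $\sum_j\vW_{x,j}$ is insufficient, and your pivoting rule uses strictly less information than that. Finally, the clean-up is not a throwaway: Phase~2 provably leaves $o(k)$ errors (\Prop~\ref{prop_dist_psi}), and it is the iterated thresholding of Phase~3 (\Prop~\ref{prop_endgame}, via \Lem~\ref{Lem_distphixstar}) that requires the full $(1+\eps)\minf$ rather than $(1+\eps)\madapt$ and is where the $\theta/\ln^2 2$ branch of \eqref{eqInfUpper} actually enters.
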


An obvious candidate for an optimal test design appears be a plain random bipartite graph.
In fact, prior to the present work the best known test design consisted of a uniformly random bipartite graph where all vertices in $V_n$ have the same degree $\Delta$.
In other words, every individual independently joins $\Delta$ random test groups.
Applied to this random $\Delta$-out test design the {\tt DD} algorithm correctly recovers the set of infected individuals in polynomial time provided that the number of tests exceeds $\malg$ from \eqref{eqDD}.
However, $\malg$ strictly exceeds $\minf$ for $\theta<1/2$.
While the random $\Delta$-out test design with $(1+o(1))\minf$ tests is known to admit an exponential time algorithm that successfully infers the set of infected individuals \whp~\cite{Coja_2019}, we do not know of a polynomial time that solves this inference problem.
Instead, to facilitate the new efficient inference algorithm \SPIV\ the test design for \Thm~\ref{thm_SC} relies on a blend of a geometric and a random construction that is inspired by recent advances in coding theory known as spatially coupled low-density parity check codes~\cite{Felstrom_1999, Kudekar_2011}.

Finally, for
\begin{align}\label{eqthetaopt}
\theta\leq\frac{\log2}{1+\log2}\approx0.41
\end{align}
the number $\minf$ of tests required by \Thm~\ref{thm_SC} matches the folklore lower bound $\madapt$ from \eqref{eqDD} that applies to both adaptive and non-adaptive group testing.
Hence, in this regime  adaptivity confers no advantage.
By contrast, for $\theta>\log(2)/(1+\log2)$ the adaptive bound $\madapt$ is strictly smaller than $\minf$.
Consequently, in this regime at least two test stages are necessary to match the lower bound.
Indeed, the next theorem shows that two stages suffice.

\FigG

\subsection{Adaptive group testing}
A {\em two-stage test design} consists of a bipartite graph $G=(V,F)$ along with a second bipartite graph $G'=G'(G,\hat\SIGMA_G)=(V',F')$ with $V'\subset V$ that may depend on the tests results $\hat\SIGMA_G$ of the first test design $G$.
Hence, the task is to learn $\SIGMA$ correctly \whp{} from $G,\hat\SIGMA_G,G'$ and the test results $\hat\SIGMA_{G'}$ from the second stage while minimising the total number $|F|+|F'|$ of tests.
The following theorem shows that a two-stage test design and an efficient inference algorithm exist that meet the multi-stage adaptive lower bound \eqref{eqInfAdapt}.

\begin{theorem}\label{Thm_ad}
For any $0<\theta<1$, $\eps>0$ there is $n_0=n_0(\theta,\eps)$ such that for every $n>n_0$ there exist a two-stage test design with no more than $(1+\eps)\madapt$ tests in total and a polynomial time inference algorithm that outputs $\SIGMA$ with high probability.
\end{theorem}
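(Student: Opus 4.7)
The plan is to reduce Theorem~\ref{Thm_ad} to Theorem~\ref{thm_SC} by appending a cheap individual-testing second stage. Two regimes arise depending on $\theta$. In the regime $\theta\leq\log 2/(1+\log 2)$ we have $\minf=\madapt$, so Theorem~\ref{thm_SC} itself supplies a non-adaptive design on $(1+\eps)\madapt$ tests together with a polynomial-time recovery algorithm; viewed as a two-stage scheme with empty second stage, this already proves the theorem in that range.

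For $\theta>\log 2/(1+\log 2)$ one has $\madapt<\minf$, so a genuine two-stage construction is required. In stage one I would use a spatially coupled design akin to that underlying Theorem~\ref{thm_SC} but with the reduced budget $m_1=(1+\eps/2)\madapt$ tests, and run a variant of \SPIV\ that outputs the set $\hat I\subseteq V_n$ of ``potentially defective'' individuals, namely those that \SPIV\ marks as infected together with those it fails to classify. The key properties to establish are that $\hat I\supseteq\supp{\SIGMA}$ with high probability and that $|\hat I|=O(k)$ with high probability. Stage two then tests each member of $\hat I$ individually; the first property ensures that no infected individual is missed, while the second ensures that the second stage contributes only $O(n^\theta)=o(\madapt)$ tests, so the total is at most $(1+\eps)\madapt$ for $n$ sufficiently large, and the singleton tests resolve the remaining ambiguity exactly.

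The main obstacle is establishing these two properties below the non-adaptive threshold, i.e.\ at $m_1<\minf$, a regime not directly covered by Theorem~\ref{thm_SC}. The containment $\hat I\supseteq\supp{\SIGMA}$ should be inherited from the peeling nature of \SPIV, which only declares an individual healthy when some negative test containing it is observed and thus never misclassifies an infected individual. The bound $|\hat I|=O(k)$ requires extending the density-evolution analysis of \SPIV\ to the sub-$\minf$ regime: one tracks, block by block along the spatially coupled structure, the fraction of individuals left undecided after \SPIV\ has processed that block, and shows that with $m_1\geq (1+\eps/2)(1-\theta)/\log 2\cdot n^\theta\log n$ tests this fraction remains $O(k/n)$ in every block. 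Summing over the blocks and invoking the concentration inequalities already developed for Theorem~\ref{thm_SC} then delivers $|\hat I|=O(k)$ with high probability, closing the argument.
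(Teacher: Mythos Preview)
Your first regime is fine, but the second regime has a genuine gap rooted in a misunderstanding of \SPIV. Phase~2 of \SPIV\ is \emph{not} a one-sided peeling procedure: Steps~7--8 set $\tau_x=0$ whenever the weighted score $W_x^\star(\tau)$ falls below the threshold, irrespective of whether $x$ lies in a negative test. \Lem~\ref{lem_1dev} bounds but does not eliminate the resulting false negatives, so the containment $\hat I\supseteq\supp{\SIGMA}$ does not follow from anything about \SPIV. If instead you redefine $\hat I$ to be the set of individuals not appearing in any negative test (so that the containment becomes trivial), then with $m_1=(1+\eps/2)\madapt$ and hence $\Delta\sim(1+\eps/2)(1-\theta)\log n$, \Lem~\ref{lemma_v0+} gives
\[
|\hat I|\sim n2^{-\Delta}=n^{1-(1+\eps/2)(1-\theta)\log 2},
\]
which exceeds $k=n^\theta$ by a polynomial factor for every $\theta\in(0,1)$ and every small $\eps$ (since $\log 2<1$). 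Your ``density-evolution'' sketch does not close this gap: the entire purpose of the weighted score in phase~2 is to separate these $\Theta(n2^{-\Delta})\gg k$ disguised healthy individuals from the $k$ infected ones, and that separation is inherently two-sided.

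The paper sidesteps this tension by not insisting on one-sided error in stage one. The key observation is that \Prop~\ref{prop_dist_psi} (the analysis of phases~1--2 of \SPIV) already holds under the weaker hypothesis $m\geq(1+\eps)\madapt$, so no new sub-$\minf$ analysis is required. The resulting $\tau$ misclassifies only $o(k)$ individuals, but in both directions. Stage two then handles each error type separately: every $x$ with $\tau_x=1$ is tested individually (this costs $|V_1(\tau)|\leq k+o(k)$ tests and eliminates the false positives), and a fresh $\Delta$-out design with just $k$ tests is applied via \DD\ to $\{x:\tau_x=0\}$ (since this set contains only $o(k)$ infected individuals, \DD\ succeeds easily and catches the false negatives). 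Both parts of stage two cost $O(k)=o(\madapt)$ tests, and no case distinction on $\theta$ is needed.
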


\noindent
\Thm~\ref{Thm_ad} improves over \cite{Scarlett_2019} by reducing the number of stages from three to two, thus potentially significantly reducing the overall time required to complete the test procedure~\cite{Chen_2008, Kwang_2006}.
The proof of \Thm~\ref{Thm_ad} combines the test design and efficient algorithm from \Thm~\ref{thm_SC} with ideas from~\cite{Scarlett_2018}.

The question of whether an `adaptivity gap' exists for group testing, i.e., if the number of tests can be reduced by allowing multiple stages, has been raised prominently~\cite{Aldridge_2019}. 
\Thm s~\ref{opt}--\ref{Thm_ad} answer this question comprehensively.
While for $\theta\leq\ln(2)/(1+\ln(2))\approx0.41$ adaptivity confers no advantage, \Thm~\ref{opt} shows that for $\theta>\ln(2)/(1+\ln(2))$ there is a widening gap between $\madapt$ and the number $\minf$ of tests required by the optimal non-adaptive test design.
Further, \Thm~\ref{Thm_ad} demonstrates that this gap can be closed by allowing merely two stages.
Figure~\ref{fig_bounds_illustration} illustrates the thresholds from \Thm s~\ref{opt}--\ref{Thm_ad}.

\subsection{Discussion}\label{Sec_related}

The group testing problem was first raised in 1943, when Dorfman \cite{Dorfman_1943} proposed a two-stage adaptive test design to test the US Army for syphilis: in a first stage disjoint groups of equal size are tested.
All members of negative test groups are definitely uninfected.
Then, in the second stage the members of positive test groups get tested individually.
Of course, this test design is far from optimal, but Dorfman's contribution triggered attempts at devising improved test schemes.

At first combinatorial group testing, where the aim is to construct a test design that is guaranteed to succeed on {\em all} vectors $\SIGMA$, attracted significant attention.
This version of the problem was studied, among others, by \Erdos\ and \Renyi~\cite{Erdos_1963}, D'yachkov and Rykov~\cite{Dyachkov_1982} and Kautz and Singleton~\cite{Kautz_1964}.
Hwang~\cite{Hwang_1972} was the first to propose an adaptive test design that asymptotically meets the information-theoretic lower bound $\madapt$ from \eqref{eqInfAdapt}  for all $\theta \in [0,1]$.
However, this test design requires an unbounded number of stages.
Conversely, D{'}yachkov and Rykov~\cite{Dyachkov_1982} showed that $\madapt$ tests do not suffice for non-adaptive group testing.
Indeed, $m \geq \min \cbc{\Omega(k^2), n}$ tests are required non-adaptively, making individual testing optimal for $\theta>1/2$.
For an excellent survey of combinatorial group testing see~\cite{Aldridge_2019}.

Since the early 2000s attention has shifted to probabilistic group testing, which we study here as well.
Thus, instead of asking for test designs and algorithms that are guaranteed to work for {\em all} $\SIGMA$, we are content with recovering $\SIGMA$ with high probability. 
Berger and Levenshtein~\cite{Berger_2002} presented a two-stage probabilistic group testing design and algorithm requiring 
\begin{align*}
\mBL \sim 4n^\theta \log n
\end{align*}
tests in expectation.
Their test design, known as the Bernoulli design, is based on a random bipartite graph where each individual joins every test independently with a carefully chosen edge probability.
For a fixed $\theta$ the number $\mBL$ of tests is within a bounded factor of the information-theoretic lower bound $\madapt$ from \eqref{eqInfAdapt}, although the gap $\madapt/\mBL$ diverges as $\theta\to1$.
Unsurprisingly, the work of Berger and Levenshtein spurred efforts at closing the gap.
\Mezard, Tarzia and Toninelli proposed a different two-stage test design whose first stage consists of a random bipartite graph called the constant weight design~\cite{Mezard_2008}.
Here each individual independently joins an equal number of random tests.
For their two-stage design they obtained an inference algorithm that gets by with about
\begin{align} \label{eq_Mezard}
\mMT \sim \frac{1-\theta}{\log^2 2} n^\theta \log n.
\end{align}
tests, a factor of $1/\log 2$ above the elementary bound $\madapt$.
Conversely, \Mezard, Tarzia and Toninelli showed by means of the FKG inequality and positive correlation arguments that two-stage test algorithms from a certain restricted class cannot beat the bound~\eqref{eq_Mezard}.
Furthermore, Aldridge, Johnson and Scarlett analysed non-adaptive test designs and inference algorithms~\cite{Aldridge_2014, Johnson_2019}.
For the Bernoulli test design their best efficient algorithm {\tt DD} requires
\begin{align*}
\mDDB \sim \eul\cdot\max \cbc{\theta, 1-\theta} n^\theta \log n.
\end{align*}
tests.
For the constant weight design they obtained the bound $\malg$ from~\eqref{eqDD}.
In addition, in a previous article~\cite{Coja_2019} we showed that on the constant weight design an exponential time algorithm correctly identifies the set of infected individuals \whp\ if the number of tests exceeds $\minf$ from~\eqref{eqInfUpper}.
Furthermore, Scarlett~\cite{Scarlett_2019} discovered the aforementioned three-stage test design and polynomial time algorithm that matches the universal lower bound $\madapt$ from \eqref{eqInfAdapt}.
Finally, concerning lower bounds, in the case of a linear number $k=\Theta(n)$ infected individuals Aldridge~\cite{Aldridge_2018} showed via arguments similar to~\cite{Mezard_2008} that individual testing is optimal in the non-adaptive case, while Ungar~\cite{Ungar_1960} proved that individual testing is optimal even adaptively once $k\geq(3-\sqrt5)n/2$.

A further variant of group testing is known as the quantitative group testing or the coin weighing problem.
In this problem tests are assumed to not merely indicate the presence of at least one infected individual but to return the number of infected individuals.
Thus, the tests are significantly more powerful.
For quantitative group testing with $k$ infected individuals Alaoui, Ramdas, Krzakala, Zdeborov\'a and~Jordan~\cite{Lenka_pooled_data2} presented a test design 
with
\begin{align*}
 \mQGT& \sim 2\bc{1+\frac{(n-k)\log(1-k/n)}{k \log(k/n)}} \frac{k \log(n/k)}{\log(k)}&\mbox{for}&&k&=\Theta(n)
\end{align*}
tests from which the set of infected individuals can be inferred in exponential time;
    the paper actually deals with the slightly more general pooled data problem.
However, no efficient algorithm is known to come within a constant factor of $\mQGT$.
Indeed, the best efficient algorithm, due to the same authors~\cite{Lenka_pooled_data}, 
requires $\Omega(k\log(n/k))$ tests.

More broadly, the idea of harnessing random graphs to tackle inference problems has been gaining momentum.
One important success has been the development of capacity achieving linear codes called spatially coupled low-density parity check (`LDPC') codes~\cite{Kudekar_2011,Kudekar_2013}.
The Tanner graphs of these codes, which represent their check matrices, consist of a linear sequence of sparse random bipartite graphs with one class of vertices corresponding to the bits of the codeword and the other class corresponding to the parity checks.
The bits and the checks are divided equitably into a number of compartments, which are arranged along a line.
Each bit of the codeword takes part in random checks in a small number of preceding and subsequent compartments of checks along the line.
This combination of a spatial arrangement and randomness facilitates efficient decoding by means of the Belief Propagation message passing algorithm.
Furthermore, the general design idea of combining a linear spatial structure with a random graph has been extended to other inference problems.
Perhaps the most prominent example is compressed sensing, i.e., solving an underdetermined linear system subject to a sparsity constraint~\cite{ Donoho_2006,Donoho_2013,Krzakala_2012,Kudekar_2010_2}, where a variant of Belief Propagation called Approximate Message Passing matches an information-theoretic lower bound from~\cite{Wu_2010}.

While in some inference problems such as LDPC decoding or compressed sensing the number of queries required to enable an efficient inference algorithm matches the information-theoretic lower bound, in many other problems gaps remain.
A prominent example is the stochastic block model~\cite{Abbe_2017,Decelle,Moore}, an extreme case of which is the notorious planted clique problem~\cite{Alon_1998}.
For both these models the existence of a genuine computationally intractable phase where the problem can be solved in exponential but not in polynomial time appears to be an intriguing possibility.
Further examples include code division multiple access~\cite{Takeuchi_2011, Zdeborova_2016}, quantitative group testing~\cite{Lenka_pooled_data}, sparse principal component analysis~\cite{Brennan_2019} and sparse high-dimensional regression~\cite{Reeves_2019}.
The problem of solving the group testing inference problem on the test design from~\cite{Johnson_2019} could be added to the list.
Indeed, while an exponential time algorithm (that reduces the problem to minimum hypergraph vertex cover) infers the set of infected individuals \whp\ with only $(1+\eps)\minf$ tests, the best known polynomial algorithm requires $(1+\eps)\malg$ tests.

Instead of developing a better algorithm for the test design from~\cite{Johnson_2019}, here we exercise the discretion  of constructing a different test design  that the group testing problem affords.
The new design is tailored to enable an efficient algorithm \SPIV\ for \Thm~\ref{thm_SC} that gets by with $(1+\eps)\minf$ tests.
While prior applications of the idea of spatial coupling such as coding and compressed sensing required sophisticated message passing algorithms~\cite{Felstrom_1999, Kudekar_2011, Kudekar_2013}, the \SPIV\ algorithm is purely combinatorial and extremely transparent.
The main step of the algorithm merely computes a weighted sum to discriminate between infected individuals and `disguised' healthy individuals. Furthermore, the analysis of the algorithm is based on a technically subtle but conceptually clean large deviations analysis.
This technique of blending combinatorial ideas and large deviations methods with spatial coupling promises to be an exciting route for future research. 
Applications might include noisy versions of group testing, quantitative group testing or the coin weighing problem \cite{Lenka_pooled_data}.
Beyond these immediate extensions, it would be most interesting to see if the \SPIV\ strategy extends to other inference problems for sparse data.

\subsection{Organisation}
After collecting some preliminaries and introducing notation in \Sec~\ref{Sec_pre},  we prove \Thm~\ref{opt} in \Sec~\ref{xsec_opt}.
\Sec~\ref{Sec_alg} then deals with the test design and the inference algorithm for \Thm~\ref{thm_SC}.
Finally, in \Sec~\ref{sec_prop_adaptive} we prove \Thm~\ref{Thm_ad}.

\section{Preliminaries}\label{Sec_pre}

\noindent
As we saw in \Sec~\ref{Sec_intro_non} a non-adaptive test design can be represented by a bipartite graph $G=(V\cup F,E)$ with one vertex class $V$ representing the individuals and the other class $F$ representing the tests.
We refer to the number $|V|$ of individuals as the {\em order} of the test design and to the number $|F|$ of tests as its {\em size}.
For a vertex $v$ of $G$ we denote by $\partial_Gv$ the set of neighbours.
Where $G$ is apparent from the notation we just write $\partial v$.
Furthermore, for an integer $k\leq|V|$ we denote by $\SIGMA_{G,k}=(\SIGMA_{G,k,x})_{x\in V}\in\{0,1\}^V$ a random vector of Hamming weight $k$.
Additionally, we let
\begin{align}\label{eqSigma}
\hat\SIGMA_{G,k}&=(\hat\SIGMA_{G,k,a})_{a\in F}\in\{0,1\}^F&\mbox{with}&&\hat\SIGMA_{G,k,a}=\max_{x\in\partial_Ga}\SIGMA_{G,k,x}
\end{align}
be the associated vector of test results.
Where $G$ and/or $k$ are apparent from the context, we drop them from the notation.
More generally, for a given vector $\tau\in\{0,1\}^V$ we introduce a vector $\hat\tau_G=(\hat\tau_{G,a})_{a\in F}$ by letting $\hat\tau_{G,a}=\max_{x\in\partial_Ga}\tau_{x}$, just as in \eqref{eqSigma}.
Furthermore, for a given $\tau\in\{0,1\}^V$ we let
\begin{align*}
V_0(G,\tau)&=\cbc{x\in V:\tau_x=0},&
V_1(G,\tau)&=\cbc{x\in V:\tau_x=1},&
F_0(G,\tau)&=\cbc{a\in F:\hat\tau_{G,a}=0},&
F_1(G,\tau)&=\cbc{a\in F:\hat\tau_{G,a}=1}.
\end{align*}

The {\em Kullback-Leibler divergence} of $p,q\in(0,1)$ is denoted by
\begin{align*}
    \KL{q}{p} = q \log \bc{\frac{q}{p}} + (1-q) \log \bc{\frac{1-q}{1-p}}.
\end{align*}
We will  occasionally apply the following Chernoff bound.

\begin{lemma}[\cite{Janson_2011}] \label{lem_chernoff}
Let $\vX$ be a binomial random variable with parameters $N,p$.
Then
\begin{align}\label{eqChernoff1}
    \Pr\brk{X \geq {qN}} &\leq \exp \bc{-N\KL{q}{p}} \quad \text{for $p<q<1$,} \\
    \Pr\brk{X \leq {qN}} &\leq \exp \bc{-N\KL{q}{p}} \quad \text{for $0<q<p$.}\label{eqChernoff2}
\end{align}
\end{lemma}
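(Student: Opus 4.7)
The statement is the classical Chernoff--Cram\'er bound for a binomial expressed in its sharpest form via the Kullback--Leibler divergence, so the natural strategy is the exponential moment method (Chernoff's trick).

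The plan is to handle the upper tail \eqref{eqChernoff1} first; the lower tail \eqref{eqChernoff2} then follows by an essentially identical calculation (or, equivalently, by applying the upper tail to $N-\vX\sim\Bin(N,1-p)$ together with the symmetry $\KL{q}{p}=\KL{1-q}{1-p}$). For the upper tail I would fix $t>0$ and apply Markov's inequality to the nonnegative random variable $\mathrm{e}^{t\vX}$, giving
\begin{align*}
\Pr\brk{\vX\geq qN}=\Pr\brk{\mathrm{e}^{t\vX}\geq\mathrm{e}^{tqN}}\leq\mathrm{e}^{-tqN}\,\Erw\brk{\mathrm{e}^{t\vX}}.
\end{align*}
Since $\vX$ is a sum of $N$ i.i.d.\ Bernoulli($p$) variables, the moment generating function factorises as $\Erw[\mathrm{e}^{t\vX}]=(1-p+p\mathrm{e}^t)^N$, yielding the parametric bound $\Pr[\vX\geq qN]\leq\exp\bc{N\bc{\log(1-p+p\mathrm{e}^t)-tq}}$.

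The heart of the argument is then a one-variable optimisation over $t>0$. Differentiating the exponent and setting the derivative to zero gives the stationarity condition $p\mathrm{e}^t/(1-p+p\mathrm{e}^t)=q$, whose unique solution on $(0,\infty)$ (using $q>p$) is $t^\star=\log\frac{q(1-p)}{p(1-q)}>0$. Substituting $t^\star$ back, one checks that $1-p+p\mathrm{e}^{t^\star}=(1-p)/(1-q)$, so
\begin{align*}
N\bc{\log(1-p+p\mathrm{e}^{t^\star})-t^\star q}=-N\brk{q\log\frac{q}{p}+(1-q)\log\frac{1-q}{1-p}}=-N\,\KL{q}{p}.
\end{align*}
This is exactly the desired exponent, establishing \eqref{eqChernoff1}.

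For the lower tail I would mirror the argument with $t<0$: apply Markov to $\mathrm{e}^{t\vX}$, obtain the same MGF expression, and optimise over $t<0$, which is feasible because now $q<p$ makes the stationary point $t^\star=\log\frac{q(1-p)}{p(1-q)}$ negative. The algebra is identical and again delivers the bound $\exp(-N\,\KL{q}{p})$. There is no genuine obstacle here; the only thing to watch is the sign of $t^\star$ in each regime (ensuring the optimiser lies in the admissible half-line so the bound is not vacuous) and the elementary but slightly tedious algebraic simplification that turns the optimised exponent into the clean KL-form. Since the result is entirely standard and the paper merely cites~\cite{Janson_2011}, a one-paragraph derivation along these lines is all that is required.
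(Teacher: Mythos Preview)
Your proof is correct and complete; this is the standard Chernoff--Cram\'er derivation. The paper itself does not prove this lemma but simply cites \cite{Janson_2011}, so there is no alternative approach to compare.
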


In addition, we recall that the {\em hypergeometric distribution} $\Hyp(L,M,N)$ is defined by
\begin{align*}
\pr\brk{\Hyp(L,M,N)=k}&=\binom Mk\binom{L-M}{N-k}\binom LN^{-1}.&&(k\in\{0,1,\ldots,M\wedge N\}).
\end{align*}
Hence, out of a total of $L$ items of which $M$ are special we draw $N$ items without replacement and count the number of special items in the draw.
The mean of the hypergeometric distribution equals $MN/L$.
It is well known that the Chernoff bound extends to the hypergeometric distribution.

\begin{lemma}[\cite{Hoeffding}] \label{lem_hyperchernoff}
For a  hypergeometric variable $\vX\sim\Hyp(L,M,N)$ the bounds \eqref{eqChernoff1}--\eqref{eqChernoff2} hold with $p=M/L$.
\end{lemma}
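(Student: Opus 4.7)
The plan is to deduce the hypergeometric Chernoff bound from the binomial Chernoff bound \eqref{eqChernoff1}--\eqref{eqChernoff2} by passing through moment generating functions. The central classical fact (due to Hoeffding) is that for any convex function $\phi : \mathbb{R} \to \mathbb{R}$, if $\vX \sim \Hyp(L,M,N)$ and $\vY \sim \Bin(N, M/L)$ then $\Erw[\phi(\vX)] \leq \Erw[\phi(\vY)]$. Specialising to $\phi(x) = \exp(tx)$, which is convex for every $t \in \mathbb{R}$, this yields
\begin{align*}
\Erw\brk{\exp(t\vX)} \leq \Erw\brk{\exp(t\vY)} \qquad \text{for all } t \in \mathbb{R}.
\end{align*}

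Once this MGF domination is in hand, the proof of the hypergeometric tail bounds is literally the proof of \Lem~\ref{lem_chernoff} for the binomial. Specifically, for $p < q < 1$ and $t > 0$, the Markov inequality combined with the MGF bound gives
\begin{align*}
\Pr\brk{\vX \geq qN} \leq \exp(-tqN)\,\Erw\brk{\exp(t\vX)} \leq \exp(-tqN)\,\Erw\brk{\exp(t\vY)} = \bc{(1-p) + p\exp(t)}^N \exp(-tqN).
\end{align*}
Optimising $t$ (namely $t = \log\frac{q(1-p)}{p(1-q)}$) yields the exponent $-N\,\KL{q}{p}$, establishing \eqref{eqChernoff1}. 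The lower tail \eqref{eqChernoff2} is obtained symmetrically with $t < 0$.

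It therefore remains to justify the convex domination. A clean elementary route is induction on $N$. The base case $N = 1$ is trivial because both distributions equal $\mathrm{Be}(M/L)$. For the inductive step, write $\vX = X_1 + \vX'$, where $X_1$ is the indicator of the first draw being special and $\vX' \sim \Hyp(L-1, M - X_1, N-1)$ conditional on $X_1$, and decompose $\vY = Y_1 + \vY'$ analogously with $\vY' \sim \Bin(N-1, M/L)$ independent of $Y_1$. Applying the inductive hypothesis to $\vX'$ conditional on $X_1$ and noting that the conditional parameter of $\vX'$ (namely $(M - X_1)/(L-1)$) is an \emph{affine} function of $X_1$ whose average over $X_1$ equals $M/L$, a short convexity argument on the inner expectation collapses the hypergeometric moment into a binomial moment with parameter $M/L$. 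Alternatively, one may invoke negative association of the indicators $X_1,\ldots,X_N$ coming from sampling without replacement, which directly implies $\Erw\brk{\exp(t\vX)} = \Erw\brk{\prod_i \exp(tX_i)} \leq \prod_i \Erw\brk{\exp(tX_i)} = \Erw\brk{\exp(t\vY)}$.

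The main obstacle is this convex-order statement; once granted, the remainder of the proof is a mechanical repetition of the binomial Chernoff optimisation and introduces nothing new. Since the bound is standard textbook material and the paper cites \cite{Hoeffding} directly, it suffices to record the statement and sketch the MGF-domination reduction; no further quantitative work is required.
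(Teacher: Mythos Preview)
Your proposal is correct, but there is nothing to compare it against: the paper does not prove \Lem~\ref{lem_hyperchernoff} at all. It is stated with a citation to \cite{Hoeffding} and used as a black box throughout. Your MGF-domination reduction via Hoeffding's convex-order inequality (or, equivalently, via negative association of the sampling-without-replacement indicators) is the standard textbook argument, and your final remark already anticipates this: the result is classical and the paper treats it as such.
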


Throughout the paper we use asymptotic notation $o(\nix),\omega(\nix),O(\nix),\Omega(\nix),\Theta(\nix)$ to refer to limit $n\to\infty$.
It is understood that the constants hidden in, e.g., a $O(\nix)$-term may depend on the density parameter $\theta$ or other parameters.

\section{The information theoretic lower bound}\label{xsec_opt}

\noindent
In this section we prove \Thm~\ref{opt}.
The proof combines techniques based on the FKG inequality and positive correlation that were developed in~\cite{Aldridge_2019,Mezard_2008} with new combinatorial ideas.
Throughout this section we fix a number $\theta\in(0,1)$ and we let $k=\lceil n^\theta\rceil$.

\subsection{Outline}
The starting point is a simple and well known observation.
Namely, for a test design $G=G_{n,m}=(V_n,F_m)$ and a vector $\tau\in\{0,1\}^{F_m}$ of test results let
\begin{align*}
\cS_k(G,\tau)&=\cbc{\sigma\in\cbc{0,1}^{V_n}:\sum_{x\in V_n}\sigma_x=k,\ \hat\sigma_G=\tau}
\end{align*}
be the set of all possible vectors $\sigma$ of Hamming weight $k$ that give rise to the test results $\tau$.
Further, let $Z_k(G,\tau)=|\cS_k(G,\tau)|$ be the number of such vectors $\sigma$.
Also recall that $\SIGMA=\SIGMA_{G,k}\in\{0,1\}^{V_n}$ is a random vector of Hamming weight $k$ and that $\hat\SIGMA=\hat\SIGMA_{G,k}$ comprises the test results that $\SIGMA$ renders under the test design $G$.
We observe that the posterior of $\SIGMA$ given $\hat\SIGMA$ is the uniform distribution on $\cS_k(G,\hat\SIGMA)$.

\begin{fact}\label{Prop_Nishi}
For any $G$, $\sigma\in\{0,1\}^{V_n}$ we have $\pr\brk{\SIGMA=\sigma\mid\hat\SIGMA}=\vecone\cbc{\sigma\in\cS_k(G,\hat\SIGMA)}/Z_k(G,\hat\SIGMA)$.
\end{fact}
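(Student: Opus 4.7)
The plan is to derive the stated posterior by a one-line application of Bayes' rule, leveraging the two structural facts about the model: the prior on $\SIGMA$ is uniform over weight-$k$ binary vectors, and the test outcome vector $\hat\SIGMA$ is a \emph{deterministic} function of $\SIGMA$ once the test design $G$ is fixed.

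First I would unpack the conditional probability. Fix any $\tau \in \{0,1\}^{F_m}$ in the support of $\hat\SIGMA$, so that $Z_k(G,\tau) \geq 1$. For any weight-$k$ vector $\sigma$, Bayes' rule gives
\begin{align*}
\pr\brk{\SIGMA = \sigma \mid \hat\SIGMA = \tau} = \frac{\pr\brk{\hat\SIGMA = \tau \mid \SIGMA = \sigma}\,\pr\brk{\SIGMA = \sigma}}{\pr\brk{\hat\SIGMA = \tau}}.
\end{align*}
Since $\SIGMA$ is uniform on the $\binom{n}{k}$ weight-$k$ vectors, $\pr[\SIGMA = \sigma] = \binom{n}{k}^{-1}$ whenever $\sigma$ has Hamming weight $k$. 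For vectors of other weights both sides of the claimed identity vanish, so I would restrict attention to weight-$k$ vectors for the rest of the argument.

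Next I would use the determinism of the tests. Because $\hat\SIGMA_{G,a} = \max_{x \in \partial_G a}\SIGMA_x$ by definition~\eqref{eqSigma}, conditional on $\SIGMA = \sigma$ the vector $\hat\SIGMA$ equals $\hat\sigma_G$ with probability one. Hence
\begin{align*}
\pr\brk{\hat\SIGMA = \tau \mid \SIGMA = \sigma} = \vecone\cbc{\hat\sigma_G = \tau} = \vecone\cbc{\sigma \in \cS_k(G,\tau)}.
\end{align*}
Summing this identity over weight-$k$ vectors $\sigma$ and using the law of total probability yields $\pr[\hat\SIGMA = \tau] = Z_k(G,\tau)/\binom{n}{k}$. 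Substituting both expressions into the Bayes formula above, the $\binom{n}{k}^{-1}$ factors cancel and I obtain the claimed formula $\vecone\{\sigma \in \cS_k(G,\tau)\}/Z_k(G,\tau)$. Specialising $\tau = \hat\SIGMA$ yields the fact as stated.

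There is no real obstacle here; the only point worth flagging is that the identity must be read as an equality of $\sigma(\hat\SIGMA)$-measurable random variables, so it suffices to verify it on each value $\tau$ in the (discrete) support of $\hat\SIGMA$, which is exactly what the three steps above accomplish.
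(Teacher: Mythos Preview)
Your argument is correct; the paper does not actually supply a proof of this fact, treating it as an immediate observation, and your Bayes-rule derivation is exactly the standard justification one would give.
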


\noindent
As an immediate consequence of Fact~\ref{Prop_Nishi}, the success probability of any inference scheme $\cA_G:\cbc{0,1}^{F_m}\to\cbc{0,1}^{V_n}$ is bounded by $1/Z_k(G,\hat\SIGMA)$.
Indeed, an optimal  inference algorithm is to simply return a uniform sample from $\cS_k(G,\hat\SIGMA)$.

\begin{fact}\label{Prop_Nishi2}
For any test design $G$ and for any $\cA_G:\cbc{0,1}^{F_m}\to\cbc{0,1}^{V_n}$ we have
$\pr\brk{\cA_G(\hat\SIGMA)=\SIGMA\mid\hat\SIGMA}\leq 1/Z_k(G,\hat\SIGMA).$
\end{fact}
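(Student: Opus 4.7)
The plan is to observe that once we condition on $\hat\SIGMA$, the output $\cA_G(\hat\SIGMA)$ is a deterministic, fixed vector in $\{0,1\}^{V_n}$, whereas the true signal $\SIGMA$ is, by Fact~\ref{Prop_Nishi}, a uniformly random element of $\cS_k(G,\hat\SIGMA)$. Hence the conditional probability that the two agree reduces to the chance that a uniform draw from an explicit set of size $Z_k(G,\hat\SIGMA)$ hits a prescribed point, which is at most $1/Z_k(G,\hat\SIGMA)$.

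Concretely, I would proceed in three short steps. First, fix an arbitrary value $\tau\in\{0,1\}^{F_m}$ and work throughout on the event $\{\hat\SIGMA=\tau\}$; then $\cA_G(\hat\SIGMA)=\cA_G(\tau)$ is a single deterministic vector, call it $\sigma^{\star}=\sigma^{\star}(G,\tau)\in\{0,1\}^{V_n}$. Second, apply Fact~\ref{Prop_Nishi} to write
\begin{align*}
\pr\brk{\cA_G(\hat\SIGMA)=\SIGMA\mid\hat\SIGMA=\tau}
&=\pr\brk{\SIGMA=\sigma^{\star}\mid\hat\SIGMA=\tau}
=\frac{\vecone\{\sigma^{\star}\in\cS_k(G,\tau)\}}{Z_k(G,\tau)}.
\end{align*}
Third, observe that the indicator in the numerator is at most $1$, so the right-hand side is bounded by $1/Z_k(G,\tau)=1/Z_k(G,\hat\SIGMA)$ on the event $\{\hat\SIGMA=\tau\}$. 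Since $\tau$ was arbitrary, this yields the claimed conditional inequality.

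There is essentially no obstacle here: this is a one-line consequence of Fact~\ref{Prop_Nishi} together with the fact that $\cA_G$ is a function of $\hat\SIGMA$ alone. The only mild subtlety is bookkeeping: the bound needs to hold pointwise on every value of $\hat\SIGMA$, which is why conditioning on a specific realisation $\tau$ before invoking Fact~\ref{Prop_Nishi} is the cleanest presentation. If one wished to allow randomised inference procedures, the same argument applies by additionally conditioning on the algorithm's internal randomness, since conditionally $\cA_G(\hat\SIGMA)$ is still a fixed element of $\{0,1\}^{V_n}$ independent of $\SIGMA$ given $\hat\SIGMA$; taking expectations over this randomness preserves the bound. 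This is the starting point for the information-theoretic lower bound in \Thm~\ref{opt}: the next task, handled subsequently, will be to produce a high-probability lower bound on $Z_k(G,\hat\SIGMA)$ whenever $m\leq(1-\eps)\minf$.
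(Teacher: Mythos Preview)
Your proof is correct and is precisely the argument the paper has in mind: the paper does not spell out a proof of this fact at all, merely noting that it is ``an immediate consequence of Fact~\ref{Prop_Nishi}'' and that an optimal inference algorithm returns a uniform sample from $\cS_k(G,\hat\SIGMA)$. Your write-up simply makes this explicit.
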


\noindent
Hence, in order to prove \Thm~\ref{opt} we just need to show that $Z_k(G,\hat\SIGMA)$ is large for any test design $G$ with $m<(1-\eps)\minf$ tests.
In other words, we need to show that \whp{} there are many vectors $\sigma\in\cS_k(G,\hat\SIGMA)$ that give rise to the test results $\hat\SIGMA$.

We obtain these $\sigma$ by making diligent local changes to $\SIGMA$.
More precisely, we identify two sets $V_{0+}=V_{0+}(G,\SIGMA)$, $V_{1+}=V_{1+}(G,\SIGMA)$ of individuals whose infection status can be flipped without altering the test results.
Specifically, following~\cite{Aldridge_2018} we call an individual $x\in V_n$ {\em disguised} if every test $a\in\partial_Gx$ contains another individual $y\in\partial_Ga\setminus\cbc x$ with $\SIGMA_y=1$.
Let $V_+=V_+(G,\SIGMA)$ be the set of all disguised individuals.
Moreover, let
\begin{align}\label{eqdisguise}
V_{0+}=V_{0+}(G,\SIGMA)&=\cbc{x\in V_+:\SIGMA_x=0},&V_{1+}=V_{1+}(G,\SIGMA)&=\cbc{x\in  V_+:\SIGMA_x=1}.
\end{align}
Hence, $V_{0+}$ is the set of all healthy disguised individuals while $V_{1+}$ contains all infected disguised individuals.

\begin{fact}\label{Prop_Nishi3}
We have $Z_k(G,\hat\SIGMA)\geq|V_{0+}(G,\SIGMA)|\cdot|V_{1+}(G,\SIGMA)|$.
\end{fact}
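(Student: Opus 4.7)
The plan is to exhibit an explicit injection from $V_{0+}(G,\SIGMA)\times V_{1+}(G,\SIGMA)$ into $\cS_k(G,\hat\SIGMA)$, namely the map that swaps the infection status of a healthy disguised individual with that of an infected disguised individual. The whole point of the definition of ``disguised'' in \eqref{eqdisguise} is that both of these flips are invisible to the test results, so the proof should be a clean bookkeeping argument with no asymptotics involved.

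Concretely, given $x\in V_{0+}$ and $y\in V_{1+}$, I would define $\sigma^{(x,y)}\in\{0,1\}^{V_n}$ by $\sigma^{(x,y)}_x=1$, $\sigma^{(x,y)}_y=0$, and $\sigma^{(x,y)}_z=\SIGMA_z$ for every other $z\in V_n$. Since we swap one $0$ for one $1$, $\sigma^{(x,y)}$ has Hamming weight $k$. To check that $\hat\sigma^{(x,y)}_G=\hat\SIGMA_G$, I would go through the four possible cases for a test $a\in F_m$:
\begin{itemize}
\item If $a\notin\partial_Gx\cup\partial_Gy$, no entry of $\sigma^{(x,y)}$ relevant to $a$ has changed, so $\hat\sigma^{(x,y)}_{G,a}=\hat\SIGMA_{G,a}$.
\item If $a\in\partial_Gx$, then $\hat\SIGMA_{G,a}=1$ because $x\in V_{0+}$ is disguised, hence some $z\in\partial_Ga\setminus\{x\}$ satisfies $\SIGMA_z=1$; this $z$ survives the swap unless $z=y$, but even then $\sigma^{(x,y)}_x=1$ forces $\hat\sigma^{(x,y)}_{G,a}=1$.
\item If $a\in\partial_Gy$ but $a\notin\partial_Gx$, then $y\in V_{1+}$ is disguised, so some $z\in\partial_Ga\setminus\{y\}$ with $\SIGMA_z=1$ exists; since $z\neq x$ (as $x\notin\partial_Ga$), this witness is unchanged, and again $\hat\sigma^{(x,y)}_{G,a}=1=\hat\SIGMA_{G,a}$.
\end{itemize}
Thus $\sigma^{(x,y)}\in\cS_k(G,\hat\SIGMA)$ for every admissible pair $(x,y)$.

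The map $(x,y)\mapsto\sigma^{(x,y)}$ is injective because $x$ is the unique coordinate where $\sigma^{(x,y)}$ equals $1$ while $\SIGMA$ equals $0$, and $y$ is the unique coordinate where it equals $0$ while $\SIGMA$ equals $1$; hence $\SIGMA$ itself lets us read off $(x,y)$ from $\sigma^{(x,y)}$. Since $V_{0+}$ and $V_{1+}$ are disjoint (they are defined by $\SIGMA_x=0$ versus $\SIGMA_x=1$), the pairs $(x,y)\in V_{0+}\times V_{1+}$ are all legal and produce $|V_{0+}|\cdot|V_{1+}|$ distinct elements of $\cS_k(G,\hat\SIGMA)$, giving the claimed bound. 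If either $V_{0+}$ or $V_{1+}$ is empty the bound is vacuous, since $\SIGMA\in\cS_k(G,\hat\SIGMA)$ always yields $Z_k(G,\hat\SIGMA)\geq 1\geq 0$.

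There is essentially no obstacle here; the only thing to be mildly careful about is the case $a\in\partial_Gx\cap\partial_Gy$, which I handle above by noting that $\sigma^{(x,y)}_x=1$ alone already keeps the test positive, so the potential loss of the witness $y$ is irrelevant. Everything else is a routine verification, and no probabilistic input beyond the deterministic definition of $V_{0+},V_{1+}$ is needed.
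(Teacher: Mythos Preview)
Your proof is correct and follows exactly the same swap construction as the paper: for $(x,y)\in V_{0+}\times V_{1+}$ flip $\SIGMA$ at $x$ and $y$ to obtain a vector in $\cS_k(G,\hat\SIGMA)$. The paper's own proof is a terse three-line version of your argument, leaving the case analysis for $\hat\sigma^{(x,y)}_G=\hat\SIGMA_G$ and the injectivity implicit; your write-up simply spells these out.
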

\begin{proof}
For a pair $(x,y)\in V_{0+}(G,\SIGMA)\times V_{1+}(G,\SIGMA)$ obtain $\TAU$ from $\SIGMA$ by letting $\TAU_x=1,\TAU_y=0$ and $\TAU_z=\SIGMA_z$ for all $z\neq x,y$.
Then $\TAU$ has Hamming weight $k$ and $\hat\TAU_G=\hat\SIGMA$.
Thus, $\TAU\in\cS_k(G,\hat\SIGMA)$.
\end{proof}

Hence, an obvious proof strategy for \Thm~\ref{opt} is to exhibit a large number of disguised individuals.
A similar strategy has been pursued in the proof of the conditional lower bound of \Mezard, Tarzia and Toninelli~\cite{Mezard_2008} and the proof of Aldridge's lower bound for the linear case $k=\Theta(n)$~\cite{Aldridge_2018}.
Both~\cite{Aldridge_2018,Mezard_2008} exhibit disguised individuals via positive correlation and the FKG inequality.
However, we do not see how to stretch such arguments to obtain the desired lower bound for all $\theta\in(0,1)$.
Yet for $\theta$ {\em extremely} close to one it is possible to combine the positive correlation argument with new combinatorial ideas to obtain the following.

\begin{proposition}\label{Prop_large_theta}
For any $\eps>0$ there exists $\theta_0=\theta_0(\eps)<1$ such that for every $\theta\in(\theta_0,1)$ there exists $n_0=n_0(\theta,\eps)$ such that for all $n>n_0$ and all test designs $G=G_{n,m}$ with $m\leq(1-\eps)\minf$ we have
$$\pr\brk{|V_{0+}(G,\SIGMA)|\wedge|V_{1+}(G,\SIGMA)|\geq\log n}>1-\eps.$$
\end{proposition}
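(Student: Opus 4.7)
The plan is to establish the required high-probability lower bounds on each of $|V_{0+}|$ and $|V_{1+}|$ separately; the proposition then follows from Fact~\ref{Prop_Nishi3} by a union bound. I will focus on the healthy disguised set $V_{0+}$, since the case of $V_{1+}$ is symmetric and in fact easier for $\theta$ close to $1$, where $k=n^\theta\gg\log n$ and each infected individual is disguised with probability bounded below by a positive constant. The overall strategy is a first-moment lower bound obtained via positive correlation, followed by a variance/concentration step.

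For the first moment, fix an individual $x$ with test-neighbourhood sizes $s_{x,1},\ldots,s_{x,d_x}$, condition on $\SIGMA_x=0$, and let $A_i$ be the event that test $a_i\in\partial x$ contains an infected individual other than $x$. Following the positive-correlation approach of \cite{Aldridge_2018,Mezard_2008}, I would pass from the uniform-$k$ law of $\SIGMA$ to a Bernoulli$(k/n)$ product approximation (the coupling error is $o(1)$ for events that are local in $x$), apply the FKG inequality in the product model, and translate back to obtain
\[
\Pr\brk{x\in V_{0+}\mid \SIGMA_x=0}\;\geq\;(1-o(1))\prod_{i=1}^{d_x}\bc{1-(1-k/n)^{s_{x,i}-1}}.
\]
Summing over $x\in V_n$ and using $\sum_x d_x=\sum_a|\partial a|$ together with $m\leq(1-\eps)\minf$, an AM--GM/Jensen argument identifies the worst-case design as one that equalises test sizes at $s\approx(\log 2)\,n/k$, at which each bracketed factor is at least $\log 2$ and the mean individual degree is at most $(1-\eps)(\theta/\log^2 2)\log n$. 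Plugging these numbers into the product yields $\Erw|V_{0+}|\geq n^{\alpha(\eps,\theta)}$ for some $\alpha(\eps,\theta)>0$ once $\theta$ is sufficiently close to $1$, and a routine Markov-based truncation accommodates the $o(n)$ individuals of atypical degree.

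The principal obstacle is converting this polynomial expectation into a high-probability lower bound. Because the disguise indicators are all monotone increasing functions of $\SIGMA$, they are positively correlated, and two individuals that share a test are particularly strongly coupled; a naive variance bound therefore fails. To bypass this I would greedily extract a subset $U'\subseteq V_n$ of individuals whose test-neighbourhoods are pairwise disjoint. Since the total number of tests is only $n^{\theta+o(1)}$ and the worst-case first-moment analysis forces typical test sizes to be sub-polynomial, one can ensure $|U'|\geq n^{\alpha'(\eps,\theta)}\gg\log n$. Restricted to $U'$, the disguise indicators depend on disjoint coordinate blocks of $\SIGMA$ and are essentially independent up to the negligible global coupling from the fixed Hamming weight $k$, so a Paley--Zygmund inequality applied to $\sum_{x\in U'}\vecone\{x\in V_{0+}\}$ delivers $|V_{0+}|\geq\log n$ with the required probability. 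The analogous bound for $|V_{1+}|$, which is weaker because there are already $n^\theta$ infected individuals to work with, is obtained by repeating the same argument with $\SIGMA_x=1$ in place of $\SIGMA_x=0$.
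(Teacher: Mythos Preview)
Your high-level plan matches the paper's, but several steps fail as written. First, disjoint test-neighbourhoods $\partial x\cap\partial y=\emptyset$ do \emph{not} make the disguise events independent: $\{x\in V_+\}$ depends on the second neighbourhood $\bigcup_{a\in\partial x}\partial a$, so two individuals sharing a common neighbour-of-a-neighbour are still coupled; you need pairwise distance at least five. Second, your claim that ``the worst-case first-moment analysis forces typical test sizes to be sub-polynomial'' is false: the minimiser of $z\mapsto z\log\bigl(1-(1-k/n)^{z-1}\bigr)$ sits at $z\sim(\log 2)\,n/k=\Theta(n^{1-\theta})$, and in any case the adversary chooses $G$ and may include tests of arbitrary degree. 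Without a bound on test degrees the distance-five ball around a single individual can swallow all of $V_n$, and no greedy extraction is possible. The paper supplies the missing ingredient via a separate argument (\Lem~\ref{Prop_removeHighDegs}): every test of degree exceeding $n^{1-\theta}\log n$ contains at least two infected individuals w.h.p., so such tests may be deleted without altering $V_{0+}$ or $V_{1+}$. After also pruning the $o(n)$ individuals of degree $>\log^3 n$, the distance-four ball has size $O(n^{2(1-\theta)}\log^8 n)$; this is precisely where the requirement that $\theta$ be close to one enters, through the constraint $2(1-\theta)<\theta\eps$ needed to extract $n^{1-o(1)}$ well-separated individuals.

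Two further gaps. Your first-moment bound controls $\sum_{x\in V_n}\Pr[x\in V_{0+}]$, but to apply Paley--Zygmund on $U'$ you need $\sum_{x\in U'}\Pr[x\in V_{0+}]$ to be large, and a purely structural greedy choice of $U'$ gives no such control: the adversary could concentrate all the disguise probability on exactly the individuals your procedure discards. The paper instead selects at each step the individual of \emph{maximum} disguise probability in the current pruned graph and bounds this maximum from below via the averaged log-probability $L^{(i)}$. Finally, the $V_{1+}$ case is not easier in the way you suggest: the disguise probability of an infected individual is the same $\exp\bigl(-(1-\eps)\theta\log n+O(1)\bigr)$, not a positive constant, so the identical extraction-and-independence argument is required. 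The paper treats both cases at once by noting that $\CHI_{y_i}$ is independent of the event $\{y_i\in V_+\}$.
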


\noindent
The proof of \Prop~\ref{Prop_large_theta} can be found in \Sec~\ref{Sec_Prop_large_theta}.

The second step towards \Thm~\ref{opt} is a reduction from larger to smaller values of $\theta$.
Suppose we wish to apply a test scheme designed for an infection density $\theta\in(0,1)$ to a larger infection density $\theta'\in(\theta,1)$.
Then we could dilute the larger infection density by adding a large number of healthy `dummy' individuals.
A careful analysis of this dilution process yields the following result.
Due to the elementary lower bound \eqref{eqInfAdapt} we need not worry about $\theta\leq\log(2)/(1+\log 2)$.

\begin{proposition}\label{Prop_small_theta}
For any $\log(2)/(1+\log(2))<\theta<\theta'<1$, $t>0$ there exists $n_0=n_0(\theta,\theta',t)>0$ such that for every $n>n_0$ and
for every test design $G$ of order $n$ there exist an integer $n'$ such that
$$k=\lceil n^{\theta}\rceil=\lceil n'\,^{\theta'}\rceil$$
and a test design $G'$ of order $n'$ with the same number of tests as $G$ such that the following is true.
Let $\TAU\in\{0,1\}^{V_{n'}}$ be a random vector of Hamming weight $k$ and let $\hat\TAU_a=\max_{x\in\partial_{G'}a}\TAU_x$ comprise the tests results of $G'$.
Then
$$\pr\brk{Z_k(G,\hat\SIGMA)\leq t}\leq \pr\brk{Z_k(G',\hat\TAU)\leq t}.$$
\end{proposition}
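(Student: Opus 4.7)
The plan is to realize $G'$ as the restriction of $G$ to a carefully chosen $n'$-subset of individuals, treating the removed $n-n'$ vertices as known-healthy ``dummies''. First, one must produce an integer $n'$ with $\lceil n'^{\theta'}\rceil=k$: since the interval $((k-1)^{1/\theta'},k^{1/\theta'}]$ has length on the order of $k^{1/\theta'-1}/\theta'\to\infty$ (using $\theta'<1$), for $n$ sufficiently large it contains an integer. Any such integer satisfies $k\leq n'\leq k^{1/\theta'}\leq n^{\theta/\theta'}<n$ because $\theta<\theta'$.

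Next, couple $\SIGMA$ and the restriction of $G$ as follows. Sample $\SIGMA$ uniformly among weight-$k$ vectors on $V_n$, and then sample $\vD$ uniformly among the $(n-n')$-subsets of the healthy set $V_n\setminus\supp\SIGMA$. A direct combinatorial identity shows that $(\vD,\SIGMA)$ is uniform on all pairs $(D,\sigma)$ with $|D|=n-n'$, with $\sigma$ of weight $k$, and $\supp\sigma\subseteq V_n\setminus D$. Equivalently, $\vD$ is a uniformly random $(n-n')$-subset of $V_n$ and, conditionally on $\vD$, the restriction $\TAU\coloneqq\SIGMA|_{V_n\setminus\vD}$ is a uniform weight-$k$ vector on $V_n\setminus\vD$. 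Let $\G'\coloneqq G-\vD$, the test design of order $n'$ obtained by removing the dummies while keeping every test.

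The key observation is that under this coupling the test outcomes coincide and $Z_k$ can only grow. Indeed, since no dummy is infected, for every test $a$
\begin{align*}
\hat\SIGMA_{G,a}=\max_{x\in\partial_Ga}\SIGMA_x=\max_{x\in\partial_Ga\setminus\vD}\SIGMA_x=\max_{x\in\partial_{\G'}a}\TAU_x=\hat\TAU_{\G',a},
\end{align*}
so $\hat\SIGMA_G=\hat\TAU_{\G'}$. Moreover, each $\sigma'\in\cS_k(\G',\hat\TAU)$ extends uniquely by zero on $\vD$ to an element of $\cS_k(G,\hat\SIGMA)$, yielding $Z_k(G,\hat\SIGMA)\geq Z_k(\G',\hat\TAU)$ almost surely. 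Hence
\begin{align*}
\pr\brk{Z_k(G,\hat\SIGMA)\leq t}\leq\pr\brk{Z_k(\G',\hat\TAU)\leq t}=\Erw_{\vD}\pr\brk{Z_k(G-\vD,\hat\TAU)\leq t\mid\vD},
\end{align*}
and a pigeonhole argument selects a deterministic dummy set $D^\ast$ for which $G'\coloneqq G-D^\ast$ attains the claimed inequality; the random vector $\TAU$ is then a uniform weight-$k$ vector on the $n'$-element set $V_n\setminus D^\ast$, as required. The only genuine technicality is verifying the uniformity of the joint distribution of $(\vD,\TAU)$ in the second paragraph; once that is in hand, the remainder is bookkeeping.
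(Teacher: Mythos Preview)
Your proof is correct and follows essentially the same route as the paper: both arguments pick an integer $n'$ with $\lceil n'^{\theta'}\rceil=k$, realise the smaller design as the restriction of $G$ to a uniformly random $n'$-subset of individuals (equivalently, your removal of $n-n'$ healthy dummies), observe that under the natural coupling the test results agree and $Z_k$ can only increase, and then derandomise to extract a deterministic $G'$. The only cosmetic difference is the direction of the coupling---the paper samples the random subset first and then $\TAU$ on it, whereas you sample $\SIGMA$ first and then the dummy set from the zeros---but the joint law is the same, and your combinatorial identity verifying this is exactly what the paper uses implicitly.
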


\noindent
Hence, if a test design exists for $\theta<\theta'$ that beats $\minf(n,\theta)$, then there is a test design for infection density $\theta'$ that beats $\minf(n',\theta')$. 
We prove \Prop~\ref{Prop_large_theta} in \Sec~\ref{Sec_Prop_large_theta}.
\Thm~\ref{opt} is an easy consequence of \Prop s~\ref{Prop_large_theta} and \ref{Prop_small_theta}.

\begin{proof}[Proof of \Thm~\ref{opt}]
For $\theta\leq\log(2)/(1+\log(2))$ the assertion follows from the elementary lower bound \eqref{eqInfAdapt}.
Hence, fix $\eps>0$ and assume for contradiction that some $\theta\in(\log(2)/(1+\log(2)),1)$ for infinitely many $n$ admits a test design $G$ of order $n$ and size $m\leq(1-\eps)\minf(n,\theta)$  such that $\pr\brk{Z_k(G,\hat\SIGMA_G)\leq t}\geq \eps$.
Then \Prop~\ref{Prop_small_theta} shows that for $\theta'>\theta$ arbitrarily close to one for an integer $n'$ with $k=\lceil n'\,^{\theta'}\rceil$ a test design $G'=G_{n',m}$ exists such that
\begin{align}\label{eqThmOpt1}
\pr\brk{Z_k(G',\hat\TAU)\leq 1/\eps}\geq\eps.
\end{align}
Furthermore, \eqref{eqInfUpper} shows that  for large $n$,
\begin{align*}
\minf(n',\theta')=\frac{\theta'}{\ln^2 2} n'\,^{\theta'}\log n'=\frac{\theta+o(1)}{\ln^2 2} n^{\theta}\log n=(1+o(1))\minf(n,\theta).
\end{align*}
Hence, the number $m$ of tests of $G'$ satisfies $m\leq(1-\eps+o(1))\minf(n',\theta')$.
Thus, \eqref{eqThmOpt1} contradicts Fact~\ref{Prop_Nishi3} and \Prop~\ref{Prop_large_theta}.
\end{proof}

\subsection{Proof of \Prop~\ref{Prop_large_theta}}\label{Sec_Prop_large_theta}
Given a small $\eps>0$ we choose $\theta_0=\theta_0(\eps)\in(0,1)$ sufficiently close to one and fix $\theta\in(\theta_0,1)$.
Additionally, pick $\xi=\xi(\eps,\theta)\in(0,1)$ such that
\begin{align}\label{eqxietatheta}
2(1-\theta)<\xi<\theta\eps.
\end{align}
We fix $\eps,\theta,\xi$ throughout this section.

To avoid the (mild) stochastic dependencies that result from the total number of infected individuals being fixed, instead of $\SIGMA$ we will consider a vector $\CHI\in\{0,1\}^{V_n}$ whose entries are stochastically independent.
Specifically, every entry of $\CHI$ equals one with probability
\begin{align*}
p&=\frac{k-\sqrt k\log n}n
\end{align*}
independently.
Let $\hat\CHI_G\in\{0,1\}^{F_m}$ be the corresponding vector of test results.
The following lemma shows that it suffices to estimate  $|\zeroplus(G,\CHI)|,|\oneplus(G,\CHI)|$.
Let $G$ denote an arbitrary test design with individuals $V_n=\{x_1,\ldots,x_n\}$ and tests $F_m=\{a_1,\ldots,a_m\}$.

\begin{lemma}\label{Lemma_bin}
There is $n_0=n_0(\theta,\eps)$ such that for all $n>n_0$ and for all  $G$ with $m\leq\minf$ the following is true:
$$\mbox{if $\pr\brk{|V_{0+}(G,\CHI)|\wedge|V_{1+}(G,\CHI)|\geq2\log n}>1-\eps/4$, then
$\pr\brk{|\zeroplus(G,\SIGMA)|\wedge |\oneplus(G,\SIGMA)|\geq\log n}>1-\eps.$}$$
\end{lemma}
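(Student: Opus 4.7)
The plan is a monotone coupling that reduces the fixed-weight distribution of $\SIGMA$ to the product Bernoulli distribution of $\CHI$. Concretely, I would introduce iid uniforms $U_1,\dots,U_n\in[0,1]$ and set $\CHI_i=\vecone\cbc{U_i\le p}$ and $\SIGMA_i=\vecone\cbc{U_i\le U_{(k)}}$, where $U_{(k)}$ denotes the $k$-th order statistic of $U_1,\dots,U_n$. Under this construction $\CHI$ is iid Bernoulli$(p)$ and $\SIGMA$ is uniform over weight-$k$ vectors by exchangeability of the $U_i$, and on the event $\cbc{|\CHI|_1\le k}$ one has $\CHI\le\SIGMA$ componentwise (the extra ones in $\SIGMA$ being those indices $i$ with $p<U_i\le U_{(k)}$). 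The quality of the coupling is controlled by a Chernoff bound (\Lem~\ref{lem_chernoff}): since $\Erw|\CHI|_1=k-\sqrt{k}\log n$ sits precisely $\log n$ standard deviations below $k$, a second-order expansion gives $n\cdot\KL{k/n}{p}\sim\tfrac12\log^2 n$, whence $|\CHI|_1\in[k-2\sqrt{k}\log n,\,k]$ with probability $1-n^{-\omega(1)}$.

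The key structural observation is that \emph{disguisedness} depends only on whether each incident test contains \emph{another} infected individual, and is therefore non-decreasing when flipping zeros to ones in the signal. Under the coupling this yields the inclusion $\oneplus(G,\CHI)\subseteq\oneplus(G,\SIGMA)$, which immediately transfers the hypothesised bound $|\oneplus(G,\SIGMA)|\ge 2\log n$. Moreover, every $x\in\zeroplus(G,\CHI)$ also lies in $\zeroplus(G,\SIGMA)$ unless $x$ is one of the $k-|\CHI|_1$ positions that switched from healthy to infected between $\CHI$ and $\SIGMA$.

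The remaining task is to bound the number $X$ of members of $\zeroplus(G,\CHI)$ that are newly infected. Conditional on $\CHI$ with $|\CHI|_1=\ell$, the extra infections form a uniformly random subset of size $k-\ell$ of the $n-\ell$ healthy-under-$\CHI$ positions, so $X$ is hypergeometric with mean $M(k-\ell)/(n-\ell)$, where $M=|\zeroplus(G,\CHI)|$. Using $k-\ell\le 2\sqrt{k}\log n$ from the previous step, I would split into two regimes. If $M\ge 3\sqrt{k}\log n$, then $X\le k-\ell\le M/3$ deterministically, so $|\zeroplus(G,\SIGMA)|\ge 2M/3\gg \log n$. If instead $2\log n\le M<3\sqrt{k}\log n$, then $\Erw\brk{X\mid\CHI}=O(n^{\theta-1}\log^2 n)=o(1)$, and Markov's inequality gives $\pr\brk{X\ge M-\log n\mid\CHI}\le 12\,n^{\theta-1}\log n=o(1)$. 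A union bound over the three failure events---the concentration of $|\CHI|_1$, the hypothesised good event for $\CHI$, and the Markov failure for $X$---then completes the proof.

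I expect the main (minor) obstacle to be the case distinction for $M$, since the hypothesis only lower-bounds $M$ and the small-$M$ regime is where the estimate is tightest. Quantitatively the argument hinges on the flipping rate $(k-\ell)/(n-\ell)=O(n^{\theta/2-1}\log n)$ being $o(1)$, which uses $\theta<1$ in an essential way; the same strategy would collapse in the linear regime $k=\Theta(n)$. The hypothesis $m\le\minf$ appearing in the lemma is not actually used in this argument and seems included only for consistency with the surrounding section.
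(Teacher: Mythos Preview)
Your proposal is correct and follows essentially the same approach as the paper: a monotone coupling of $\CHI$ and $\SIGMA$ (the paper describes it informally as ``turning $k-\sum_x\CHI_x$ random zero entries into ones'' rather than via order statistics, but the content is identical), Chernoff for $|\CHI|_1\in[k-2\sqrt k\log n,k]$, monotonicity for $V_{1+}$, and a hypergeometric/Markov argument for $V_{0+}$. The only cosmetic difference is that the paper dispenses with your case split on $M$: it simply bounds $\Erw\brk{|V_{0+}(G,\CHI)|-|V_{0+}(G,\SIGMA)|\mid\cX}\le\frac{2\sqrt k\log n}{n-k}\,M$ and applies Markov once to get $\pr\brk{\text{loss}>M/2\mid\cX}\le\frac{4\sqrt k\log n}{n-k}=o(1)$ uniformly in $M$, which already yields $|V_{0+}(G,\SIGMA)|\ge M/2\ge\log n$. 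Your observation that the hypothesis $m\le\minf$ plays no role is also correct.
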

\begin{proof}
Let $\cX=\{k-2\sqrt k\log n\leq\sum_{x\in V_n}\CHI_x\leq k\}$.
The Chernoff bound shows for large enough $n$,
\begin{align}\label{eqLemma_bin1}
\pr\brk\cX>1-\eta/4.
\end{align}
Further, given $\cX$ we can couple $\CHI$, $\SIGMA$ such that the latter is obtained by turning $k-\sum_{x\in V_n}\CHI_x$ random zero entries of the former into ones.
Since turning zero entries into ones can only increase the number of disguised individuals, on $\cX$ we have
\begin{align}\label{eqLemma_bin2}
V_{1+}(G,\SIGMA)&\geq V_{1+}(G,\CHI).
\end{align}
Of course, it is possible that $|V_{0+}(G,\SIGMA)|<|V_{0+}(G,\CHI)|$.
But since on $\cX$ the two vectors $\SIGMA,\CHI$ differ in no more than $2\sqrt k\log n$ entries, we obtain the bound
\begin{align*}
\Erw\brk{|V_{0+}(G,\CHI)|-|V_{0+}(G,\SIGMA)|\mid\cX}\leq\frac{2\sqrt k\log n}{n-k}|V_{0+}(G,\CHI)|<n^{-1/3}|V_{0+}(G,\CHI)|,
\end{align*}
provided $n$ is sufficiently large.
Hence, Markov's inequality shows that for large enough $n$,
\begin{align}\label{eqLemma_bin3}
\pr\brk{|V_{0+}(G,\CHI)|-|V_{0+}(G,\SIGMA)|> |V_{0+}(G,\CHI)|/2\mid\cX}<\eps/4.
\end{align}
Combining \eqref{eqLemma_bin1}, \eqref{eqLemma_bin2} and \eqref{eqLemma_bin3} completes the proof.
\end{proof}

As a next step we show that there is no point in having very big tests $a$ that contain more than, say, $\Gamma=\Gamma(n,\theta)=n^{1-\theta}\log n$ individuals.
This is because anyway all such tests are positive \whp, so there is little point in actually conducting them.
Indeed, the following lemma shows that \whp\ all tests of very high degree contain at least two infected individuals.

\begin{lemma}\label{Prop_removeHighDegs}
There exists $n_0=n_0(\theta,\eps)>0$ such that for all $n>n_0$ and all test designs $G$ with $m\leq \minf$ tests,
\begin{align*}
\pr\brk{\exists a\in F_m:|\partial_Ga|>\Gamma\wedge|\partial_Ga\cap V_1(G,\CHI)|\leq1}<\eps/8.
\end{align*}
\end{lemma}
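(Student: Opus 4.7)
The plan is to apply a straightforward union bound over all tests whose degree exceeds $\Gamma$, combined with a Chernoff-style tail estimate on the binomial number of infected individuals inside such a test. Under the product distribution $\CHI$, for any fixed test $a$ with $|\partial_G a|=d$ the number $|\partial_Ga\cap V_1(G,\CHI)|$ is distributed as $\Bi(d,p)$ with $p=(k-\sqrt k\log n)/n=(1-o(1))n^{\theta-1}$. Hence the offending event for $a$ has probability
\begin{align*}
\pr\brk{\Bi(d,p)\leq1}=(1-p)^d+dp(1-p)^{d-1}\leq(1+dp)\exp(-dp(1-o(1))).
\end{align*}

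The next step is to observe that, thanks to the condition $d>\Gamma=n^{1-\theta}\log n$, we have $dp\geq(1-o(1))\log n\to\infty$. Since $x\mapsto(1+x)\mathrm{e}^{-x}$ is decreasing on $x\geq1$, the bound above is maximised at the boundary $d=\Gamma$, giving
\begin{align*}
\pr\brk{|\partial_Ga\cap V_1(G,\CHI)|\leq1}\leq(1+dp)\exp(-dp(1-o(1)))\leq\frac{(\log n)^{1+o(1)}}{n}.
\end{align*}
The same bound then holds uniformly in $d\geq\Gamma$, which is essential because the test design $G$ may contain tests of many different degrees.

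The final step is a crude union bound over the at most $m\leq\minf=O(n^\theta\log n)$ tests of $G$:
\begin{align*}
\pr\brk{\exists a\in F_m:|\partial_Ga|>\Gamma\wedge|\partial_Ga\cap V_1(G,\CHI)|\leq1}\leq m\cdot\frac{(\log n)^{1+o(1)}}{n}=O\bc{n^{\theta-1}(\log n)^{2+o(1)}}=o(1),
\end{align*}
since $\theta<1$. This is much smaller than $\eps/8$ once $n$ is sufficiently large in terms of $\theta$ and $\eps$, completing the proof. There is no real obstacle here: the monotonicity of $(1+x)\mathrm{e}^{-x}$ for $x\geq 1$ trivialises the dependence on the unknown degree sequence of $G$, and the factor $\log n$ in the definition of $\Gamma$ is precisely what converts the naive Poisson tail $\mathrm{e}^{-\Gamma p}=1/n$ into a bound that survives the union bound over $m=O(n^\theta\log n)$ tests.
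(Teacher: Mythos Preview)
Your proof is correct and follows essentially the same route as the paper: compute $\pr\brk{\Bin(d,p)\leq 1}$ exactly, bound it by $(1+O(dp))\exp(-dp)=n^{o(1)-1}$ uniformly over $d\geq\Gamma$, and take a union bound over the $m=O(n^\theta\log n)$ tests. Your explicit appeal to the monotonicity of $(1+x)\mathrm{e}^{-x}$ to handle the dependence on $d$ is a nice touch that the paper leaves implicit.
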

\begin{proof}
Consider a test $a$ of degree $\gamma=|\partial_Ga|\geq\Gamma$.
Because in $\CHI$ each of the $\gamma$ individuals that take part in $a$ is infected with probability $p$ independently, we have
\begin{align}\label{eqProp_removeHighDegs1}
\pr\brk{|\partial_Ga\cap V_1(G,\SIGMA)|\leq1}&
=\pr\brk{\Bin(\gamma,p)\leq1}=(1-p)^\gamma+\gamma p(1-p)^{\gamma-1}\leq(1+\gamma p/(1-p))\exp(-\gamma p)=n^{o(1)-1}.
\end{align}
Since $m\leq\minf=O(n^\theta)$ for a fixed $\theta<1$, the assertion follows from \eqref{eqProp_removeHighDegs1} and the union bound.
\end{proof}

Let $G^{*}$ be test design obtained from $G=G_{n,m}$ by deleting all tests of degree larger than $\Gamma$.
If indeed every test of degree at least $\Gamma$ contains at least two infected individuals, then $V_{0+}(G^{*},\CHI)=V_{0+}(G,\CHI)$ and $V_{1+}(G^{*},\CHI)=V_{1+}(G,\CHI)$.
Hence, \Lem~\ref{Prop_removeHighDegs} shows that it suffices to bound $|V_{0+}(G^{*},\CHI)|,|V_{1+}(G^{*},\CHI)|$.
To this end we observe that $G^{*}$ contains few individuals of very high degree.

\begin{lemma} \label{lem_delta_max}
There is $n_0=n_0(\theta,\eps)>0$ such that for all $n>n_0$ and all test designs $G$ with $m\leq \minf$ we have
 $$\abs{\cbc{x\in V_n: \abs{\partial_{G^*} x} > \log^3 n}} \leq\frac{n\log\log n}{\log n}.$$
\end{lemma}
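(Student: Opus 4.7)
The plan is a straightforward double counting on the edges of $G^*$. Since $G^*$ is obtained from $G$ by deleting tests, it has at most $m \leq \minf$ tests, and by construction every remaining test $a$ satisfies $|\partial_{G^*}a| \leq \Gamma = n^{1-\theta}\log n$. Counting the edges of the bipartite graph $G^*$ from the test side therefore yields
\begin{equation*}
\sum_{x \in V_n} |\partial_{G^*} x| \;=\; \sum_{a \in F(G^*)} |\partial_{G^*} a| \;\leq\; m \cdot \Gamma \;\leq\; \minf \cdot n^{1-\theta}\log n.
\end{equation*}
Recalling from \eqref{eqInfUpper} that $\minf = O(n^\theta \log n)$ for a fixed $\theta \in (0,1)$, the right hand side is $O(n \log^2 n)$.

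Now suppose for contradiction that $|\{x \in V_n : |\partial_{G^*} x| > \log^3 n\}| > n \log\log n / \log n$. Counting edges from the individual side, restricted to these high-degree individuals, gives
\begin{equation*}
\sum_{x \in V_n} |\partial_{G^*} x| \;>\; \frac{n\log\log n}{\log n} \cdot \log^3 n \;=\; n \log^2 n \cdot \log\log n,
\end{equation*}
which exceeds the $O(n\log^2 n)$ upper bound for all sufficiently large $n$, a contradiction. Hence the claimed bound holds once $n$ is large enough in terms of $\theta$ and $\eps$.

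There is no real obstacle here; the only thing to be careful about is plugging in the precise form of $\minf$ from \eqref{eqInfUpper} and the definition $\Gamma = n^{1-\theta}\log n$, so that the product $\minf \cdot \Gamma$ comes out to $O(n \log^2 n)$ with the constant depending only on $\theta$, which is enough room for the $\log\log n$ factor to force the contradiction.
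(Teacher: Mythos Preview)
Your proof is correct and follows essentially the same approach as the paper: both double count the edges of $G^*$ to obtain $\sum_{x}|\partial_{G^*}x|\leq\minf\cdot\Gamma=O(n\log^2n)$, and then conclude that at most $O(n/\log n)\leq n\log\log n/\log n$ individuals can have degree exceeding $\log^3n$. The only cosmetic difference is that you phrase the final step as a contradiction, whereas the paper states it directly.
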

\begin{proof}
Since $\max_{a\in F_m} |\partial_{G^*}a| \leq \Gamma= n^{1-\theta}\log n$, double counting yields
 $$ \sum_{x \in V_n} \abs{\partial_{G^*} x} = \sum_{a \in F_m}|\partial_{G^*}a| \leq \minf\Gamma =O(n\log^2n). $$
Consequently, there are no more than $O(n/\log n)$ individuals $x\in V_n$ with $\abs{\partial_{G^*} x}>\log^3n$.
\end{proof}

\noindent
Further, obtain $G^{(0)}$ from $G^*$ by deleting all individuals of degree greater than $\log^3n$ (but keeping all tests).
Then the degrees of $G^{(0)}$ satisfy
\begin{align}\label{eqG0degs}
\max_{a\in F(G^{(0)})}|\partial_{G^{(0)}}a|&\leq\Gamma,&\max_{x\in V(G^{(0)})}|\partial_{G^{(0)}}x|&\leq\log^3n.
\end{align}
Let $\CHI^{(0)}=(\CHI_x)_{x\in V(G^{(0)})}$ signify the restriction of $\CHI$ to the individuals that remain in $G^{(0)}$.

With these preparations in place we are ready to commence the main step of the proof of \Prop~\ref{Prop_large_theta}.
Given a test design $G$ with $m\leq(1-\eps)\minf$ we are going to construct a sequence $y_1,y_2,\ldots,y_N$, $N=\lceil n^{1-\xi}\rceil$, of individuals of $G^{(0)}$ such that each $y_i$ individually has a moderately high probability of being disguised.
Of course, to conclude that in the end a large number of disguised $y_i$ actually materialise, we need to cope with stochastic dependencies.
To this end we will pick individuals $y_i$ that have pairwise distance at least five in $G^{(0)}$.
The degree bounds \eqref{eqG0degs} guarantee a sufficient supply of such far apart individuals.


To be precise, starting from $G^{(0)}$ we construct a sequence of test designs $G^{(1)},G^{(2)},\ldots,G^{(N)}$ inductively as follows.
For each $i\geq1$ select a variable $y_{i-1}\in V(G^{(i-1)})$ whose probability of being disguised is maximum; ties are broken arbitrarily.
In formulas,
\begin{align*}
\pr\brk{y_{i-1}\in V_+(G^{(i-1)},\CHI^{(i-1)})}&=\max_{y\in V(G^{(i-1)})}\pr\brk{y\in V_+(G^{(i-1)},\CHI^{(i-1)})},
\end{align*}
where, of course, $\CHI^{(i-1)}$ is the only random object.
Then obtain $G^{(i)}$ from $G^{(i-1)}$ by removing $y_{i-1}$ along with all vertices (i.e., tests or individuals) at distance at most four from $y_{i-1}$.
Moreover, let $\CHI^{(i)}$ denote the restriction $(\CHI_x)_{x\in V(G^{(i)})}$ of $\CHI$ to $G^{(i)}$.
The following lemma estimates the probability of $y_i$ being disguised.
 Let $m^*=|F(G^*)|$ be the total number of tests of $G$ of degree at most $\Gamma$.


\begin{lemma}\label{lem_0+_all}
There exists $n_0=n_0(\eps,\theta,\xi)$ such that for all $n>n_0$ and all $G$ with $m\leq(1-\eps)\minf$ we have 
\begin{align*}
\min_{1\leq i\leq N}\pr\brk{y_i\in V_+(G^{(i)})}\geq\exp \bc{-\frac{m\log^22}{n^\theta}-1}.
\end{align*}
\end{lemma}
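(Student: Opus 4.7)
\textbf{Proof plan for Lemma~\ref{lem_0+_all}.}
The plan is to combine a standard ``max $\geq$ average'' argument with Jensen's inequality and a per-test variational calculation.  Since $y_i$ was chosen as the maximiser of the disguise probability in $G^{(i)}$,
\begin{align*}
\pr\brk{y_i \in V_+(G^{(i)})}
\;\geq\; \frac{1}{|V(G^{(i)})|} \sum_{y \in V(G^{(i)})} \pr\brk{y \in V_+(G^{(i)})}
\;\geq\; \exp\!\left( \frac{1}{|V(G^{(i)})|} \sum_{y \in V(G^{(i)})} \log \pr\brk{y \in V_+(G^{(i)})} \right),
\end{align*}
where the second step is Jensen applied to the concave $\log$.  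Because the coordinates of $\CHI$ are independent, for every $y$ we have
$\log \pr[y \in V_+(G^{(i)})] = \sum_{a \in \partial_{G^{(i)}} y} \log\bigl(1 - (1-p)^{|\partial_{G^{(i)}} a|-1}\bigr).$
Swapping the order of summation (double counting over incidences) turns the average over individuals into a weighted average over tests of
$f(d) := d \log\bigl(1 - (1-p)^{d-1}\bigr),$
namely $\frac{1}{|V(G^{(i)})|} \sum_{a \in F(G^{(i)})} f(|\partial_{G^{(i)}} a|)$.

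The key technical step is to show that the per-test minimum of $f$ is essentially $-\log^2 2 / p$.  I would set $u = (1-p)^{d-1}$ and differentiate, yielding the stationarity condition $(1-u)\log(1-u) = u \log u + O(p)$, whose only root in $(0,1)$ is $u = 1/2 + o(1)$.  Thus $f$ attains its minimum at $d^\star = 1 + \log 2 / (-\log(1-p)) = (1+o(1)) \log 2 / p$ with value $f(d^\star) = -\log^2 2/p - \log 2 + O(p)$.  Consequently $f(d) \geq -\log^2 2 / p - O(1)$ for every $d \geq 0$.  Summing this bound over the at most $m$ tests of $G^{(i)}$ and recalling $np = (1-o(1))n^\theta$, the average log probability is at least
\begin{align*}
-\frac{m}{|V(G^{(i)})|}\bc{\frac{\log^2 2}{p} + O(1)} = -(1+o(1))\frac{m \log^2 2}{n^\theta} - O\!\bc{\frac{m}{n}},
\end{align*}
so that the factor $(1+o(1))$ contributes only $o(1) \cdot (m \log^2 2/n^\theta) = o(1) \cdot \log n = o(1)$, which is comfortably absorbed by the additive $-1$ in the exponent.

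The main obstacle is that the probability of interest lives in $G^{(i)}$, not in the original $G^{(0)}$.  I would handle this by exploiting the degree bounds~\eqref{eqG0degs}: a ball of radius $4$ around any vertex has size at most $(\log^3 n \cdot \Gamma)^2 = n^{2-2\theta+o(1)}$, and $N = n^{1-\xi}$ such balls together remove $n^{3-2\theta-\xi+o(1)} = o(n)$ individuals thanks to the choice $\xi > 2(1-\theta)$ from~\eqref{eqxietatheta}.  Hence $|V(G^{(i)})| = (1-o(1))n$, and the sum over tests in $F(G^{(i)})$ differs from the full sum only by a vanishing fraction.  The contribution of the remaining discrepancies (the $(1-p)^{d-1}$ vs $e^{-pd}$ correction, the distance-vs-diameter bookkeeping between $G^{(i)}$ and $G^{(0)}$, and the $f(d^\star)$ constant $-\log 2$) are all $o(1)$ in the log-exponent and therefore absorbed by the additive $-1$ slack.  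The one point requiring care is to verify that each $y_i$ has its entire test-neighbourhood (and the other individuals in those tests) preserved inside $G^{(i)}$; this is essentially what the distance-$4$ removal of balls is designed to guarantee, modulo a standard off-by-one adjustment to the radius.
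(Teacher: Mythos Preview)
Your overall strategy matches the paper's, but there is one genuine gap. You write
\[
\log \pr\brk{y \in V_+(G^{(i)})} \;=\; \sum_{a \in \partial_{G^{(i)}} y} \log\bigl(1 - (1-p)^{|\partial_{G^{(i)}} a|-1}\bigr),
\]
justified by ``the coordinates of $\CHI$ are independent''. Independence of the coordinates of $\CHI$ does \emph{not} give you independence of the events $\cD(y,a)=\{\partial_{G^{(i)}} a\setminus\{y\}\text{ contains an infected individual}\}$ across $a\in\partial_{G^{(i)}} y$: two tests $a,a'\in\partial_{G^{(i)}} y$ may well share an individual $z\neq y$, and then $\cD(y,a)$ and $\cD(y,a')$ both depend on $\CHI_z$. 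So the equality you claim is simply false in general. What \emph{is} true is that each $\cD(y,a)$ is increasing in $\CHI$, whence the FKG inequality yields
\[
\pr\brk{y \in V_+(G^{(i)})}=\pr\brk{\bigcap_{a\in\partial_{G^{(i)}} y}\cD(y,a)}\;\geq\;\prod_{a\in\partial_{G^{(i)}} y}\pr\brk{\cD(y,a)},
\]
which is the inequality you actually need for the lower bound. This is precisely the content of Claim~\ref{Claim_FKG} in the paper; without invoking FKG (or some positive-correlation argument) the factorisation step is unjustified.

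A secondary point: your error bookkeeping ``$o(1)\cdot(m\log^2 2/n^\theta)=o(1)\cdot\log n=o(1)$'' is a non-sequitur, since $o(1)\cdot\log n$ need not tend to zero. You must track that the relevant $o(1)$ factors --- from $|V(G^{(i)})|=n\bigl(1-O(n^{-\Omega(1)})\bigr)$ via Claim~\ref{claim_lem_0+_all_1} and from $np=n^\theta\bigl(1-O(n^{-\Omega(1)})\bigr)$ --- are in fact polynomially small, so that $O(n^{-\Omega(1)})\cdot\log n=o(1)$ and the additive $-1$ in the exponent genuinely absorbs them.
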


The proof of \Lem~\ref{lem_0+_all} requires three intermediate steps.
First, we need a lower bound on number of individuals in $G^{(i)}$.
Recall that $N=\lceil n^{1-\xi}\rceil$.

\begin{claim}\label{claim_lem_0+_all_1}
We have $\min_{0\leq i\leq N}|V(G^{(i)})|\geq n- N \Gamma^2\log^6n.$
\end{claim}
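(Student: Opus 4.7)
The plan is to bound, for each $i$, the number of individuals removed when passing from $G^{(i-1)}$ to $G^{(i)}$, and then sum over $i\leq N$. Since by construction $G^{(i)}$ is obtained from $G^{(i-1)}$ by deleting all vertices at distance at most four from the chosen individual $y_{i-1}$, it suffices to show that at most $\Gamma^2\log^6n$ individuals lie within distance four of any given individual in $G^{(i-1)}$.

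First I would observe that the degree bounds \eqref{eqG0degs} on $G^{(0)}$ carry over to every subsequent $G^{(i)}$, simply because removing vertices from a bipartite graph can only decrease degrees. Hence in each $G^{(i-1)}$ every individual has at most $\log^3n$ incident tests and every test has at most $\Gamma$ incident individuals.

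Next I would traverse the distance-four ball around $y_{i-1}$ layer by layer, alternating between tests and individuals. The number of tests at distance exactly $1$ is at most $\log^3n$; the number of individuals at distance exactly $2$ is at most $\log^3n\cdot\Gamma$; the number of tests at distance exactly $3$ is at most $\log^3n\cdot\Gamma\cdot\log^3n=\Gamma\log^6n$; finally the number of individuals at distance exactly $4$ is at most $\log^3n\cdot\Gamma\cdot\log^3n\cdot\Gamma=\Gamma^2\log^6n$. Adding $y_{i-1}$ itself and the individuals at distance two, the total number of individuals removed at step $i$ is bounded by $1+\Gamma\log^3n+\Gamma^2\log^6n\leq\Gamma^2\log^6n$ for $n$ large enough.

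Summing over $i=1,\ldots,N$ gives $|V(G^{(0)})|-|V(G^{(N)})|\leq N\Gamma^2\log^6n$, and since $|V(G^{(0)})|\leq n$ the claim follows. The argument is essentially mechanical once the degree bounds \eqref{eqG0degs} are in hand; the only point that requires a brief justification is the monotonicity of the degree bounds under vertex deletion, which is immediate.
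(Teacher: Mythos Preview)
Your approach is exactly the paper's: carry the degree bounds \eqref{eqG0degs} over to every $G^{(i)}$ by monotonicity under vertex deletion, bound the size of the radius-four ball layer by layer, and sum over $i$. Two cosmetic slips worth fixing: the inequality $1+\Gamma\log^3n+\Gamma^2\log^6n\leq\Gamma^2\log^6n$ is literally false (the left side strictly exceeds the right for all $n$; you need a constant factor, which is harmless for the application), and in the final line you need a \emph{lower} bound on $|V(G^{(0)})|$ to deduce $|V(G^{(N)})|\geq n-N\Gamma^2\log^6n$, not the upper bound $|V(G^{(0)})|\leq n$---the paper itself glosses over this same point by declaring ``for $i=0$ there is nothing to show'', and in both cases the missing ingredient is the bound $|V(G^{(0)})|\geq n-n\log\log n/\log n$ from \Lem~\ref{lem_delta_max}.
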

\begin{proof}
Since throughout the construction of the $G^{(i)}$ we only delete vertices, the degree bound \eqref{eqG0degs} implies
\begin{align}\label{eqGidegs}
\max_{a\in F(G^{(i)})}|\partial_{G^{(i)}}a|&\leq\Gamma=n^{1-\theta}\log n,&\max_{x\in V(G^{(i)})}|\partial_{G^{(i)}}x|&\leq\log^3n&&\mbox{for all }i\leq N.
\end{align}
We now proceed by induction on $i$.
For $i=0$ there is nothing to show.
Going from $i$ to $i+1\leq N$, we notice that 
because all individuals $x\in V(G^{(i)}) \setminus V(G^{(i+1)})$ have distance at most four from $y_{i+1}$, \eqref{eqGidegs} ensures that
\begin{align}\label{eq_bound_fv2}
|V(G^{(i)}) \setminus V(G^{(i+1)})| &\leq\Gamma^2\log^6n.
\end{align}
Iterating \eqref{eq_bound_fv2}, we obtain 
$|V(G^{(0)}) \setminus V(G^{(i+1)}) | \leq (i+1) \Gamma^2\log^6n$, whence $|V(G^{(i+1)})|\geq n-(i+1)\Gamma^2\log^6n$.
\end{proof}

\noindent
The following claim resembles the proof of~\cite[\Thm~1]{Aldridge_2018} (where the case $k=\Omega(n)$ is considered).

\begin{claim}\label{Claim_FKG}
Let $\cD^{(i)}(x)=\{x\in V_+(G^{(i)})\}$ and let
\begin{align}\label{eqLi}
L^{(i)}&=\frac1{|V(G^{(i)})|}\sum_{x\in V(G^{(i)})}\log\Pr\brk{\cD^{(i)}(x)}.
\end{align}
Then
\begin{align}
L^{(i)}&\geq 
\frac{|F(G^{(i)})|}{|V(G^{(i)})|} \min_{a \in F(G^{(i)})} \abs{\partial_{G^{(i)}} a} \log \bc{1-(1-p)^{|\partial_{G^{(i)}} a|-1}}.\label{eq_disguised4}
\end{align}
\end{claim}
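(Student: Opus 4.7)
My plan is to bound $\log \Pr[\cD^{(i)}(x)]$ individually via the FKG inequality, then sum over $x \in V(G^{(i)})$ and swap the order of summation to get an expression in terms of tests, which can finally be bounded by replacing each summand by the minimum.

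The first step is to decompose the disguise event. By definition $x \in V_+(G^{(i)})$ means that for every test $a \in \partial_{G^{(i)}} x$ at least one individual in $\partial_{G^{(i)}}a \setminus \{x\}$ is infected. Writing $\cE^{(i)}(a,x) = \{\exists\, y \in \partial_{G^{(i)}}a \setminus \{x\} : \CHI_y = 1\}$, we therefore have $\cD^{(i)}(x) = \bigcap_{a \in \partial_{G^{(i)}} x} \cE^{(i)}(a,x).$ Each $\cE^{(i)}(a,x)$ is an \emph{increasing} event in the i.i.d.\ product measure on $\CHI = (\CHI_y)_{y \in V(G^{(i)})}$, since flipping some $\CHI_y$ from $0$ to $1$ can only make more tests contain an infected individual. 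Hence by the FKG inequality for the product measure,
\begin{align*}
\Pr[\cD^{(i)}(x)] \;\geq\; \prod_{a \in \partial_{G^{(i)}} x} \Pr[\cE^{(i)}(a,x)] \;=\; \prod_{a \in \partial_{G^{(i)}} x}\bc{1 - (1-p)^{|\partial_{G^{(i)}}a|-1}}.
\end{align*}
Taking logarithms yields
\begin{align*}
\log \Pr[\cD^{(i)}(x)] \;\geq\; \sum_{a \in \partial_{G^{(i)}} x} \log\bc{1 - (1-p)^{|\partial_{G^{(i)}}a|-1}}.
\end{align*}

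Next, I sum this inequality over $x \in V(G^{(i)})$ and swap the double sum (each test $a$ is visited once per individual $x \in \partial_{G^{(i)}} a$):
\begin{align*}
\sum_{x \in V(G^{(i)})} \log \Pr[\cD^{(i)}(x)] \;\geq\; \sum_{a \in F(G^{(i)})} |\partial_{G^{(i)}} a| \log\bc{1 - (1-p)^{|\partial_{G^{(i)}}a|-1}}.
\end{align*}
Dividing by $|V(G^{(i)})|$ and recalling the definition \eqref{eqLi}, I obtain
\begin{align*}
L^{(i)} \;\geq\; \frac{1}{|V(G^{(i)})|}\sum_{a \in F(G^{(i)})} |\partial_{G^{(i)}} a| \log\bc{1 - (1-p)^{|\partial_{G^{(i)}}a|-1}}.
\end{align*}

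For the final step I observe that every summand on the right hand side is non-positive, since $\log(1-(1-p)^{\cdot}) \leq 0$. Consequently, the sum of $|F(G^{(i)})|$ non-positive terms is bounded below by $|F(G^{(i)})|$ times the minimum (i.e., most negative) term:
\begin{align*}
\sum_{a \in F(G^{(i)})} |\partial_{G^{(i)}} a| \log\bc{1 - (1-p)^{|\partial_{G^{(i)}}a|-1}} \;\geq\; |F(G^{(i)})|\min_{a \in F(G^{(i)})} |\partial_{G^{(i)}} a| \log\bc{1 - (1-p)^{|\partial_{G^{(i)}}a|-1}},
\end{align*}
which combined with the previous inequality yields \eqref{eq_disguised4}. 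The only non-routine point is the verification that FKG applies; the subsequent double-counting and monotonicity bound are purely bookkeeping, so I do not expect a genuine obstacle in the proof.
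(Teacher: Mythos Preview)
Your proof is correct and follows essentially the same route as the paper: decompose the disguise event into the per-test events, apply FKG (since these are increasing in the i.i.d.\ Bernoulli coordinates of $\CHI$), compute $\Pr[\cE^{(i)}(a,x)]=1-(1-p)^{|\partial_{G^{(i)}}a|-1}$, sum over $x$ and swap the order of summation, then lower-bound the resulting sum over tests by $|F(G^{(i)})|$ times the minimum summand. There is no substantive difference between your argument and the paper's.
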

\begin{proof}
For an individual $x\in V(G^{(i)})$ and a test $a\in\partial_{G^{(i)}}x$ let $\cD^{(i)}(x,a)$ be the event that there is another individual $z\in\partial_{G^{(i)}}a\setminus\cbc x$ such that $\CHI_{z}=1$.
Then for every $x\in V(G^{(i)})$ we have
\begin{align}\label{eq_disguised1}
\Pr \brk{\cD^{(i)}\bc x} = \Pr \brk{\bigcap_{a \in \partial_{G^{(i)}} x} \cD^{(i)}(x,a)}.
\end{align}
Furthermore, the events $\cD^{(i)}(x,a)$ are increasing with respect to $\CHI$.
Therefore, \eqref{eq_disguised1} and the FKG inequality imply
\begin{align}\label{eq_disguised2}
\Pr \brk{\cD^{(i)}\bc x}&\geq \prod_{a \in \partial_{G^{(i)}} x} \Pr \brk{\cD^{(i)}(x,a)}.
\end{align}
Moreover, because each entry of $\CHI$ is one with probability $p$ independently, we obtain
\begin{align}\label{eq_disguised2a}
\Pr \brk{\cD^{(i)}(x,a)}=1-(1-p)^{|\partial_{G^{(i)}} a|-1}
\end{align}
Finally, combining \eqref{eq_disguised1}--\eqref{eq_disguised2a}, we obtain
\begin{align*}
     |V(G^{(i)})|L^{(i)} &\geq \sum_{x \in V(G^{(i)})} \sum_{a \in F(G^{(i)})} \vecone \cbc{a \in \partial_{G^{(i)}} x} \log \bc{1-(1-p)^{|\partial_{G^{(i)}} a|-1}} \notag \\
    &=\sum_{a \in F(G^{(i)})} \abs{\partial_{G^{(i)}} a} \log \bc{1-(1-p)^{|\partial_{G^{(i)}} a|-1}} 
    \geq |F(G^{(i)})|\min_{a \in F(G^{(i)})} \abs{\partial_{G^{(i)}}a} \log \bc{1-(1-p)^{|\partial_{G^{(i)}} a|-1}} ,
\end{align*}
as claimed.
\end{proof}

\noindent
As a final preparation for the proof of \Lem~\ref{lem_0+_all} we need the following estimate.

\begin{claim}\label{Claim_z}
The function $z\in(0,\infty)\mapsto z\log(1-(1-p)^{z-1})$ attains its minimum at $z=\bc{1+O(n^{-\Omega(1)})}\log(2)/p$.
\end{claim}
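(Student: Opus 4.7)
My plan is to locate the critical point of $f(z) := z\log(1-(1-p)^{z-1})$ analytically via a clean change of variables, then read off its asymptotic location by the implicit function theorem. Writing $q := 1-p$ and differentiating,
\begin{align*}
f'(z) = \log(1 - q^{z-1}) - \frac{z q^{z-1}\log q}{1 - q^{z-1}}.
\end{align*}
After multiplying through by $1 - q^{z-1}$ and substituting $u = q^{z-1}$ (so that $z - 1 = (\log u)/\log q$), the equation $f'(z) = 0$ transforms into the tidy form
\begin{align*}
(1-u)\log(1-u) - u\log u = u\log q.
\end{align*}

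Let $g(u) = (1-u)\log(1-u) - u\log u$, so that $g(1/2) = 0$ and $g'(u) = -\log(u(1-u)) - 2$; in particular $g'(1/2) = 2\log 2 - 2 < 0$. The right-hand side $u\log q = -pu + O(p^2)$ is a small quantity, while the residual at $u = 1/2$ is $g(1/2) - (1/2)\log q = p/2 + O(p^2) > 0$. The implicit function theorem therefore yields a unique solution $u^\star$ of $g(u) = u\log q$ in a neighbourhood of $1/2$, of the form
\begin{align*}
u^\star = \frac{1}{2} + \frac{p}{4(1-\log 2)} + O(p^2) = \frac{1}{2} + O(p).
\end{align*}

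Translating back through $z^\star - 1 = (\log u^\star)/\log q$ and expanding $\log u^\star = -\log 2 + O(p)$ and $\log q = -p(1 + O(p))$,
\begin{align*}
z^\star = 1 + \frac{-\log 2 + O(p)}{-p(1 + O(p))} = \frac{\log 2}{p} + O(1).
\end{align*}
Since $p = (k - \sqrt k \log n)/n = \Theta(n^{\theta-1})$ and thus $\log 2/p = \Theta(n^{1-\theta})$, the additive $O(1)$ error translates to a relative error of order $O(n^{-(1-\theta)}) = O(n^{-\Omega(1)})$, establishing $z^\star = (1 + O(n^{-\Omega(1)}))\log 2/p$. That $z^\star$ is genuinely a local minimum follows because $g(u) - u\log q$ has negative derivative $g'(u^\star) - \log q < 0$ near $u^\star$, so it transitions from positive to negative as $u$ increases through $u^\star$; via the chain rule this corresponds to $f'$ switching sign from negative to positive at $z^\star$.

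I expect the main subtlety to be that $f(z) \to -\infty$ as $z \to 1^+$, so $f$ has no global minimum on $(1,\infty)$; one in fact finds a second, spurious, critical point very close to $z = 1$ (corresponding to a root of $g(u) = u\log q$ near $u = 1$, where the linearisation around $u = 1/2$ breaks down). This is, however, harmless for the use of the claim in \Lem~\ref{lem_0+_all}, since the relevant minimum in Claim~\ref{Claim_FKG} is taken over integer test degrees $|\partial a| \geq 2$ (and, by the degree truncation yielding $G^*$, at most $\Gamma$), in which regime $z^\star$ identified above is the genuine minimiser of $f$, with value $f(z^\star) \sim -(\log 2)^2/p$.
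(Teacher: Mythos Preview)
Your approach is correct and genuinely different from the paper's. The paper proceeds by a coarse case analysis: it splits into the regimes $z=o(1/p)$, $z=\omega(1/p)$, and $z=\Theta(1/p)$, shows that in the first two the function value is $o(1/p)$, and in the third writes $z=d/p$ to reduce to the limiting function $d\mapsto d\log(1-\eul^{-d})$, whose minimum at $d=\log 2$ is then read off. This yields the minimiser up to an additive $O(p^{-1/2})$ error. By contrast, you differentiate directly, make the substitution $u=q^{z-1}$ to obtain the exact critical-point equation $(1-u)\log(1-u)-u\log u=u\log q$, and locate the root near $u=1/2$ via the implicit function theorem, obtaining the sharper additive error $O(1)$. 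Your approach is more precise and more self-contained (no three-way asymptotic split), while the paper's is shorter and more transparent about why the scale $z\sim 1/p$ is the relevant one.

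You also explicitly flag the issue the paper glosses over: since $f(z)\to-\infty$ as $z\to1^+$, the claim as stated (a global minimum on $(0,\infty)$) is not literally true; what is needed for \Lem~\ref{lem_0+_all} is the minimum over integer degrees $|\partial_{G^{(i)}}a|\in\{2,\ldots,\Gamma\}$, and on that range your $z^\star$ is indeed the minimiser because the spurious critical point $z^{\star\star}$ (the local maximum corresponding to the second root of $g(u)=u\log q$ near $u=1$) lies in $(1,2)$. This is a genuine improvement in rigour over the paper's treatment. One small point you leave implicit: to conclude that $z^\star$ is the \emph{global} minimiser on $[2,\Gamma]$ you should note that $h(u)=g(u)-u\log q$ has exactly the two roots $u^\star$ and $u^{\star\star}$ in $(0,1)$ (which follows from the shape of $g$ and a brief sign check), so that $f'$ changes sign only at $z^{\star\star}$ and $z^\star$.
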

\begin{proof}
We consider three separate cases.
\begin{description}
    \item [Case 1: $z=o(1/p)$] we obtain
    \begin{align}\label{eqClaim_z1}
    z \log \bc{1-(1-p)^{z-1}} &= z \log \bc{1 - \exp \bc{-pz + O(p^2 z)}} = z \log \bc{1 - \bc{1 - pz+ O(p^2z^2)}} \notag\\
    &= \frac z \log (zp + O(zp)^2) = o(1/p).
    \end{align}
    \item [Case 2: $z=\omega(1/p)$] we find
    \begin{align}\label{eqClaim_z2}
    z \log \bc{1-(1-p)^{z-1}} &= z \log \bc{ 1-\exp\bc{-pz + O(p^2 z)}} = -z \bc{\exp(-pz) + O\bc{\exp(-2pz)}}\notag\\
    & = -\frac 1p pz\bc{ \exp \bc{-pz} + \exp \bc{-2pz}} = o(1/p).
    \end{align}
    \item [Case 3: $z=\Theta(1/p)$] letting $d=zp$, we obtain
    \begin{align}\label{eqClaim_z3}
    z \log \bc{1-(1-p)^{z-1}} &= \frac dp \log \bc{1-\exp \bc{-d + O(p)}} = \frac dp \log \bc{1-\exp \bc{-d }} + O(1).
    \end{align}
\end{description}
Since the strictly convex function $d\in(0,\infty)\mapsto d\log(1-\exp(-d))$ attains its minimum at $d=\log 2$,  \eqref{eqClaim_z3} dominates \eqref{eqClaim_z1} and \eqref{eqClaim_z2}.
Thus, the minimiser reads $z=\log(2)/p+O(p^{-1/2})$.
\end{proof}

\begin{proof}[Proof of \Lem~\ref{lem_0+_all}]
Combining Claims~\ref{Claim_FKG} and~\ref{Claim_z}, we see that for all test designs $G$ with $m\leq(1-\eps)\minf$ and for all $i\leq N$,
\begin{align*}
L^{(i)} \geq -\bc{1+O(n^{-\Omega(1)})} \frac{|F(G^{(i)})|\log^22 }{|V(G^{(i)})|p} 
    \geq-\bc{1+O(n^{-\Omega(1)})} \frac{m\log^22 }{|V(G^{(i)})|p}.
\end{align*}
Hence, Claim~\ref{claim_lem_0+_all_1}, 
\eqref{eqxietatheta} and the choice $p=(k+\sqrt k\log n)/n$ imply that for all $i\leq N$,
\begin{align}\label{eq_disguised_case9} 
L^{(i)}\geq-\bc{1+O(n^{-\Omega(1)})} \frac{m\log^22 }{(n-N\Delta^2\log^6n)p}\geq
-\bc{1+O(n^{-\Omega(1)})} \frac{m\log^22 }{n^\theta}.
\end{align}
Further, combining the definition \eqref{eqLi} of $L^{(i)}$ with \eqref{eq_disguised_case9}, we conclude that for every $i\leq N$ there exists an individual $y_i\in V(G^{(i)})$ such that
\begin{align*}
\pr\brk{y_i\in V_+(G^{(i)})}=\Pr\brk{\cD^{(i)}(y_i)}&\geq\exp\bc{L^{(i)}}
    \geq\exp\bc{-\bc{1+O(n^{-\Omega(1)})} \frac{m\log^2(2) }{n^\theta}},
\end{align*}
which implies the assertion.
\end{proof}

\noindent
\Lem~\ref{lem_0+_all} implies the following bound on $|V_{0+}(G^*,\CHI)|$, $|V_{1+}(G^*,\CHI)|$.

\begin{corollary}\label{lotbad}
There exists $n_0=n_0(\eps,\theta,\xi)$ such that for all $n>n_0$ and all $G=G_{n,m}$ with $m\leq(1-\eps)\minf$ we have 
\begin{align*}
\pr\brk{\abs{V_{0+}(G^*,\CHI)}\wedge\abs{V_{1+}(G^*,\CHI)}<\log^4n}<\eps/8.
\end{align*}
\end{corollary}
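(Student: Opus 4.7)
The plan is to upgrade the per-individual bound from Lemma~\ref{lem_0+_all} into concentration for $|V_{0+}(G^*,\CHI)|$ and $|V_{1+}(G^*,\CHI)|$ separately, via independence and a Chernoff argument.

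First, I would observe that the FKG/Claim~\ref{Claim_z} derivation underlying Lemma~\ref{lem_0+_all} applies verbatim with $G^*$ in place of $G^{(i)}$: FKG gives $\pr\brk{y\in V_+(G^*,\CHI)}\geq\prod_{a\in\partial_{G^*}y}\bc{1-(1-p)^{|\partial_{G^*}a|-1}}$ for each individual $y$, and averaging $\log\pr\brk{y\in V_+(G^*,\CHI)}$ over $y\in V(G^{(i)})$ together with Claim~\ref{Claim_z} and Claim~\ref{claim_lem_0+_all_1} (which gives $|V(G^{(i)})|=n(1-o(1))$) yields, for every $i\leq N$, a candidate $y_i\in V(G^{(i)})$ with
\[
\pr\brk{y_i\in V_+(G^*,\CHI)}\geq q:=\exp\bc{-\frac{m\log^2 2}{n^\theta}-O(1)}.
\]
For $\theta>\theta_0$ close to $1$ the first term of \eqref{eqInfUpper} dominates $\minf$, so $m\leq(1-\eps)\minf$ gives $q\geq n^{-(1-\eps)\theta-o(1)}$.

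Second, I would secure independence. Taking the radius-$4$ balls in the construction in $G^*$ rather than $G^{(i-1)}$ (a harmless modification, as $G^*$-balls are only a constant factor larger and Claim~\ref{claim_lem_0+_all_1} survives), the $y_i$'s are pairwise at $G^*$-distance $>4$. Hence the $2$-neighbourhoods of the $y_i$ in $G^*$ are pairwise disjoint, and since $\cbc{y_i\in V_+(G^*,\CHI)}$ depends only on the $\CHI$-coordinates in that neighbourhood (minus $y_i$ itself), the events $\cbc{y_i\in V_+(G^*,\CHI)}_{i\leq N}$ are mutually independent. Because $\CHI_{y_i}$ is moreover independent of the disguise event, splitting by the infection status of $y_i$ yields two independent Bernoulli families with marginal probabilities at least $(1-p)q$ and $pq$, respectively.

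Finally, I would apply Chernoff. With $N=\lceil n^{1-\xi}\rceil$ and $\xi\in(2(1-\theta),\theta\eps)$, both $N(1-p)q=\Omega\bc{n^{(1-\theta)+(\theta\eps-\xi)}}$ and $Npq=\Omega\bc{n^{\theta\eps-\xi}}$ are polynomial in $n$, hence far exceed $\log^4 n$. Lemma~\ref{lem_chernoff} then forces $\sum_i\vecone\cbc{y_i\in V_{0+}(G^*,\CHI)}\geq\log^4 n$ and $\sum_i\vecone\cbc{y_i\in V_{1+}(G^*,\CHI)}\geq\log^4 n$, each with probability $\geq1-\eps/16$, which implies Corollary~\ref{lotbad}. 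The main obstacle is the first step: re-deriving the probability bound for $V_+(G^*,\CHI)$ rather than $V_+(G^{(i)},\CHI^{(i)})$. This is necessary because disguise in $G^{(i)}$ is strictly weaker than in $G^*$ (the latter has more tests incident to $y_i$), so the bound from Lemma~\ref{lem_0+_all} as stated does not directly translate. Fortunately, the FKG and convexity arguments survive essentially unchanged once the $o(1)$-correction from $|V(G^{(i)})|/n$ and the use of $G^*$-balls are tracked.
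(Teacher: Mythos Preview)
Your strategy matches the paper's, but your ``main obstacle'' rests on a misconception. You assert that $G^*$ has more tests incident to $y_i$ than $G^{(i)}$ does, making disguise in $G^*$ the harder condition; in fact $\partial_{G^*}y_i=\partial_{G^{(i)}}y_i$. Since $y_i\in V(G^{(i)})$ survived every radius-$4$ removal and the graph is bipartite, $d_{G^{(j)}}(y_i,y_j)\geq 6$ for each $j<i$, so any test adjacent to $y_i$ sits at distance $\geq 5$ from $y_j$ in $G^{(j)}$ and is never deleted. Thus every test of $\partial_{G^*}y_i$ persists in $G^{(i)}$, and within each such test $G^{(i)}$ retains only \emph{fewer} participating individuals than $G^*$. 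Disguise in $G^{(i)}$ is therefore the \emph{stronger} condition, giving $V_+(G^{(i)},\CHI)\subset V_+(G^*,\CHI)$. This containment is exactly the paper's first step, and it means \Lem~\ref{lem_0+_all} already yields
\[
\pr\brk{y_i\in V_+(G^*,\CHI)}\geq\pr\brk{y_i\in V_+(G^{(i)},\CHI^{(i)})}\geq\exp\bc{-\frac{m\log^22}{n^\theta}-1}
\]
with no re-derivation of the FKG bound and no modification of the ball construction. The remainder of your argument---independence of the events $\{y_i\in V_{0+}(G^*,\CHI)\}_i$ and $\{y_i\in V_{1+}(G^*,\CHI)\}_i$ via the pairwise distance separation, independence of $\CHI_{y_i}$ from the disguise event, and the Chernoff bound exploiting $Npq\geq n^{\theta\eps-\xi}/3\to\infty$---coincides with the paper's proof.
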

\begin{proof}
We observe that $V_{+}(G^{(i)},\CHI)\subset V_{+}(G^{*},\CHI)$ for all $i\leq N$ because by construction for any individual $x\in V(G^{(i)})$ every test $a\in\partial_{G^*}x$ of $G^*$ that $x$ belongs to is still present in $G^{(i)}$.
Consequently, we obtain the bound
\begin{align}\label{eqlotbad_1}
\Pr \brk{x\in V_+(G^*)}&\geq \Pr \brk{x\in V(G^{(i)})}&&\mbox{for all $i \in [N]$, $x \in V(G^*)$.} 
\end{align}
Combining \eqref{eqlotbad_1} with \Lem~\ref{lem_0+_all} we obtain  
\begin{align*}
\Pr \brk{y^{(i)}\in V_+(G^*)}&\geq\exp \bc{- \log^2(2) n^{-\theta} m-1}
\geq\exp \bc{- (1-\eps)\log^2(2) n^{-\theta}\minf-1}
&&\mbox{for all $i \in [N]$.} 
\end{align*}
Hence, recalling the definition of $\minf$ from \eqref{eqInfUpper}, we obtain
\begin{align}\label{eqlotbad_2}
\Pr \brk{y^{(i)}\in V_+(G^*)}&\geq\exp \bc{- (1-\eps)\theta\log(n)-1}=n^{(\eps-1)\theta}/\eul.
&&\mbox{for all $i \in [N]$.} 
\end{align}
Since the entry $\CHI_{y^{(i)}}$ is independent of the event $\{y^{(i)}\in V_+(G^*)\}$, the definitions
\eqref{eqdisguise} of $V_{0+}(G^*,\CHI)$ and $V_{1+}(G^*,\CHI)$ and \eqref{eqlotbad_2} yield
\begin{align*}
\Pr \brk{y^{(i)}\in V_{0+}(G^*,\CHI)}&\geq
(1-p)\cdot\frac{n^{(\eps-1)\theta}}\eul\geq\frac{n^{\eps\theta-1}}3,&
\Pr \brk{y^{(i)}\in V_{1+}(G^*,\CHI)}&\geq p\cdot\frac{n^{(\eps-1)\theta}}\eul\geq \frac{n^{\eps\theta-1}}3
&&\mbox{for all $i \in [N]$},
\end{align*}
provided $n$ is sufficiently large.
Therefore, recalling $N=\lceil n^{1-\xi}\rceil$ we obtain for large enough $n$,
\begin{align}\label{eqlotbad_3a}
\Erw\abs{\{y^{(1)},\ldots,y^{(N)}\}\cap V_{0+}(G^*,\CHI)}&\geq n^{\eps\theta-\xi}/3,&
\Erw\abs{\{y^{(1)},\ldots,y^{(N)}\}\cap V_{1+}(G^*,\CHI)}&\geq n^{\eps\theta-\xi}/3.
\end{align}
Further, because the pairwise distances of $y^{(1)},\ldots,y^{(N)}$ in $G^*$ exceed four, the events $\{y^{(i)}\in V_{0+}(G^*,\CHI)\}_{i\leq N}$ are mutually independent.
So are the events $\{y^{(i)}\in V_{1+}(G^*,\CHI)\}_{i\leq N}$.
Finally, since \eqref{eqxietatheta} ensures that $\eps\theta-\xi>0$, \eqref{eqlotbad_3a} and the Chernoff bound yield
\begin{align*}
\pr\brk{\abs{\{y^{(1)},\ldots,y^{(N)}\}\cap V_{0+}(G^*,\CHI)}\leq\log^2n}&\leq\pr\brk{\Bin(N,n^{\eps\theta-1}/3)\leq\log^2n}\leq\exp(-n^{\Omega(1)}),\\
\pr\brk{\abs{\{y^{(1)},\ldots,y^{(N)}\}\cap V_{1+}(G^*,\CHI)}\leq\log^2n}&\leq\pr\brk{\Bin(N,n^{\eps\theta-1}/3)\leq\log^2n}\leq\exp(-n^{\Omega(1)}),
\end{align*}
whence the assertion is immediate.
\end{proof}

\begin{proof}[Proof of \Prop~\ref{Prop_large_theta}]
Suppose that $n>n_0(\eps,\theta,\xi)$ is large enough and let $G=G_{n,m}$ be a test design with $m\leq(1-\eps)\minf$ tests.
If  for every test $a\in F_m$ of degree $|\partial_Ga|>\Gamma$ we have $|\partial_Ga\cap V_1(G,\CHI)|\geq2$, then $V_{0+}(G,\CHI)=V_{0+}(G^*,\CHI)$ and $V_{1+}(G,\CHI)=V_{1+}(G^*,\CHI)$.
Therefore, the assertion is an immediate consequence of \Lem~\ref{Lemma_bin}, \Lem~\ref{Prop_removeHighDegs} and \Cor~\ref{lotbad}.
\end{proof}

\subsection{Proof of \Prop~\ref{Prop_small_theta}}\label{Sec_Prop_small_theta}
Given $\eps>0$ and $\ln(2)/(1+\ln(2))\leq \theta<\theta'<1$ we choose a large enough $n_0=n_0(\eps,\theta,\theta')$ and assume that $n>n_0$.
Furthermore, let $G$ be a test design with $m\leq (1-\eps)\minf(n,\theta)$ for the purpose of identifying $k=\lceil n^\theta\rceil$ infected individuals.
Starting from the test design $G$ infection for density $\theta$ we are going to construct a random test design $\G'$ for infection density $\theta'$ with the same number $m$ of tests as $G$.
The following lemma fixes the order of $\G'$.

\begin{lemma}\label{Lemma_n'}
There exists an integer $n^{\theta/\theta'}/2\leq n'\leq 2n^{\theta/\theta'}\wedge n$ such that $k'=\lceil n'\,^{\theta'}\rceil=k$.
\end{lemma}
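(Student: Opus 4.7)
The plan is to show that the integer interval $\bigl((k-1)^{1/\theta'},\,k^{1/\theta'}\bigr]$ is nonempty and lies in the desired range $\bigl[n^{\theta/\theta'}/2,\,2n^{\theta/\theta'}\wedge n\bigr]$, so that any integer $n'$ in this interval satisfies $\lceil n'^{\theta'}\rceil=k$ and the claimed size bounds.

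First I would unpack the target condition. Asking that $k'=\lceil n'^{\theta'}\rceil=k$ is equivalent to $k-1<n'^{\theta'}\leq k$, i.e.\ to $(k-1)^{1/\theta'}<n'\leq k^{1/\theta'}$. So the lemma reduces to exhibiting an integer in the real interval $I=\bigl((k-1)^{1/\theta'},\,k^{1/\theta'}\bigr]$ that obeys the bilateral bound on $n'$.

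Next I would check that $I$ contains an integer whenever $n$ is large. By the mean value theorem applied to $x\mapsto x^{1/\theta'}$ on $[k-1,k]$,
\begin{align*}
k^{1/\theta'}-(k-1)^{1/\theta'}\;\geq\;\frac{1}{\theta'}\,k^{(1-\theta')/\theta'}.
\end{align*}
Since $\theta<\theta'<1$ and $k=\lceil n^\theta\rceil\to\infty$, the exponent $(1-\theta')/\theta'$ is positive, so the length of $I$ tends to infinity as $n\to\infty$; in particular it exceeds $1$ for $n>n_0(\theta,\theta')$, and $I$ therefore contains at least one integer, which we take to be $n'$.

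Finally I would verify the size bounds on $n'$. Because $n^\theta\leq k\leq n^\theta+1$, for $n$ large,
\begin{align*}
(k-1)^{1/\theta'}\;\geq\;(n^\theta-1)^{1/\theta'}\;=\;\bigl(1+o(1)\bigr)n^{\theta/\theta'},\qquad
k^{1/\theta'}\;\leq\;(n^\theta+1)^{1/\theta'}\;=\;\bigl(1+o(1)\bigr)n^{\theta/\theta'},
\end{align*}
so any $n'\in I$ satisfies $n^{\theta/\theta'}/2\leq n'\leq 2n^{\theta/\theta'}$ once $n$ is sufficiently large. The bound $n'\leq n$ follows from $n'\leq 2n^{\theta/\theta'}$ and $\theta/\theta'<1$, which gives $2n^{\theta/\theta'}\leq n$ for large $n$. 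There is no real obstacle here beyond picking $n_0(\theta,\theta')$ large enough to absorb all three $o(1)$ terms simultaneously; the argument is elementary calculus.
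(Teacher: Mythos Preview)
Your proof is correct and takes a somewhat different route from the paper's. The paper argues forward: it fixes $n''=\lceil n^{\theta/\theta'}/2\rceil$, checks that $\lceil n''^{\,\theta'}\rceil<k<\lceil(4n'')^{\theta'}\rceil$, and then uses that the map $N\mapsto\lceil N^{\theta'}\rceil$ increases by at most $1$ per unit step (because $\theta'<1$ makes the derivative of $z\mapsto z^{\theta'}$ less than $1$) to conclude by a discrete intermediate-value argument that some $n'\in(n'',4n'')$ hits $k$ exactly. You instead invert the ceiling condition to the explicit preimage $I=\bigl((k-1)^{1/\theta'},k^{1/\theta'}\bigr]$ and show that $I$ is long enough to contain an integer and sits in the claimed range. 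Your approach is arguably more direct, since it pinpoints exactly which $n'$ work rather than scanning for one.

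One slip worth fixing: your mean-value inequality is stated with the wrong endpoint. Since the derivative $x\mapsto\frac{1}{\theta'}x^{(1-\theta')/\theta'}$ is increasing on $(0,\infty)$, the MVT gives
\[
k^{1/\theta'}-(k-1)^{1/\theta'}\;\geq\;\frac{1}{\theta'}(k-1)^{(1-\theta')/\theta'},
\]
not $\frac{1}{\theta'}k^{(1-\theta')/\theta'}$ (the latter is an upper bound). This is harmless for your argument—the corrected lower bound still tends to infinity—but the inequality as written is false.
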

\begin{proof}
Let $n''=\lceil n^{\theta/\theta'}/2\rceil$.
Then $(4n'')^{\theta'}>k$ but $n''\,^{\theta'}<k$ because the function $z\in(1,\infty)\mapsto z^{\theta'}$ has derivative less than one.
For the same reason for any integer $n''<N<4n''$ we have $(N+1)^{\theta'}-N^{\theta'}\leq1$ and thus
\begin{align*}
\lceil (N+1)^{\theta'}\rceil-\lceil N^{\theta'}\rceil\leq1.
\end{align*}
Consequently, there exists an integer $n'\in(n'',4n'')$ such that $\lceil n'\,^{\theta'}\rceil=k$.
\end{proof}

Given the test design $G$ with individuals $V_n=\{x_1,\ldots,x_n\}$ and tests $F_m=\{a_1,\ldots,a_m\}$ we now construct the test design $\G'$ as follows.
Choose a subset $V(\G')\subset V_n$ of $n'$ individuals uniformly at random.
Then $\G'$ is the subgraph that $G$ induces on $V(\G')\cup F_m$.
Thus, $\G'$ has the same tests as $G$ but we simply leave out from every test the individuals that do not belong to the random subset $V(\G')$.
Let $\TAU\in\{0,1\}^{V(\G')}$ be a random vector of Hamming weight $k$ and let $\hat\TAU\in\{0,1\}^{F_m}$ be the induced vector of tests results
\begin{align*}
\hat\TAU_a&=\max_{x\in\partial_{\G'}a}\TAU_x&&(a\in F_m).
\end{align*}

\begin{lemma}\label{Lemma_GG'}
For any integer $t>0$ we have $\pr\brk{Z_k(G,\hat\SIGMA)\geq t}\geq\pr\brk{Z_k(\G',\hat\TAU)\geq t}$.
\end{lemma}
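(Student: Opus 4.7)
The plan is to exhibit a coupling of $\SIGMA$ with $(V(\G'),\TAU)$ under which, almost surely, $\hat\SIGMA_G = \hat\TAU_{\G'}$ and $Z_k(G,\hat\SIGMA) \geq Z_k(\G',\hat\TAU)$. The stochastic domination in the lemma then follows immediately from monotonicity: for any pointwise inequality of nonnegative integer random variables the tail of the larger one dominates the tail of the smaller.

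To construct the coupling I would first draw $V(\G')$ uniformly among the $n'$-subsets of $V_n$ and, given $V(\G')$, draw $\TAU$ uniformly among the weight-$k$ vectors in $\{0,1\}^{V(\G')}$, exactly as the test design prescribes. Then I would set $\SIGMA \in \{0,1\}^{V_n}$ to be the extension of $\TAU$ by zeros outside $V(\G')$. A short binomial-coefficient check (using $\binom{n-k}{n'-k}\binom{n}{n'}^{-1}\binom{n'}{k}^{-1} = \binom{n}{k}^{-1}$) confirms that the support of $\SIGMA$ is uniform among the $k$-subsets of $V_n$, so $\SIGMA$ has the prescribed marginal law. Moreover, because $\G'$ is the subgraph of $G$ induced on $V(\G') \cup F_m$ and $\SIGMA$ vanishes off $V(\G')$, for every $a \in F_m$,
$$\hat\SIGMA_{G,a} = \max_{x \in \partial_G a} \SIGMA_x = \max_{x \in \partial_G a \cap V(\G')} \TAU_x = \max_{x \in \partial_{\G'} a} \TAU_x = \hat\TAU_{\G',a},$$
so under the coupling $\hat\SIGMA_G = \hat\TAU_{\G'}$ everywhere.

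To finish, I would exhibit an explicit injection $\cS_k(\G',\hat\TAU) \hookrightarrow \cS_k(G,\hat\SIGMA)$: send any $\tau' \in \cS_k(\G',\hat\TAU)$ to its zero-extension $\tau \in \{0,1\}^{V_n}$. Such $\tau$ still has Hamming weight $k$, and the same computation as in the previous paragraph (with $\tau'$ in place of $\TAU$) shows $\hat\tau_G = \hat{\tau'}_{\G'} = \hat\TAU = \hat\SIGMA_G$, so $\tau \in \cS_k(G,\hat\SIGMA)$; restriction to $V(\G')$ inverts the map, so it is injective. Hence $Z_k(G,\hat\SIGMA) \geq Z_k(\G',\hat\TAU)$ pointwise on the coupled probability space, and the lemma follows.

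I do not anticipate a serious obstacle: the argument is entirely structural and reduces to (i) the symmetry computation that identifies the marginal of the zero-extended $\TAU$ with that of $\SIGMA$, and (ii) the observation that zero-extending any certificate of Hamming weight $k$ in the smaller design preserves all test outcomes because $\G'$ is induced. The slightly delicate point is only to set up the coupling on the correct joint probability space so that both marginals are as specified by the lemma statement; once that is done, the inequality on counts is by inspection.
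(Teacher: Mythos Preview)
Your proposal is correct and is essentially the same argument as the paper's proof: couple $\SIGMA$ with the zero-extension of $\TAU$ (so the infected sets coincide as subsets of $V_n$), observe that $\hat\SIGMA_G=\hat\TAU_{\G'}$ because $\G'$ is induced and $\SIGMA$ vanishes outside $V(\G')$, and then inject $\cS_k(\G',\hat\TAU)$ into $\cS_k(G,\hat\SIGMA)$ by zero-extension. Your write-up is somewhat more careful in spelling out the binomial identity and the injectivity, but there is no substantive difference.
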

\begin{proof}
The choice of $n'$ ensures that $k'=\lceil n'\,^{\theta'}\rceil=k$.
Therefore, the random sets $\{x\in V:\SIGMA_x=1\}$ and $\{x\in V(\G'):\TAU_x=1\}$ are identically distributed.
Indeed, we obtain the latter by first choosing the random subset $V(\G')$ of $V_n$ and then choosing a random subset of $V(\G')$ size $k$.
Clearly, this two-step procedure is equivalent to just choosing a random subset of size $k$ out of $V_n$.
Hence, we can couple $\SIGMA,\TAU$ such that the sets $\{x\in V:\SIGMA_x=1\}$, $\{x\in V:\TAU_x=1\}$ are identical.
Then the construction of $\G'$ ensures that the vectors $\hat\SIGMA$, $\hat\TAU$ coincide as well.

Now consider a vector $\sigma'\in\cS_k(\G',\hat\TAU)$ that explains the test results.
Extend $\sigma'$ to a vector $\sigma\in\{0,1\}^{V_n}$ by setting $\sigma_x=0$ for all $x\in V_n\setminus V(\G')$.
Then $\sigma\in \cS_k(\G,\hat\SIGMA)$.
Hence, $Z_k(G,\hat\SIGMA)\geq Z_k(\G',\hat\TAU)$.
\end{proof}

\begin{proof}[Proof of \Prop~\ref{Prop_small_theta}]
\Lem~\ref{Lemma_GG'} shows that for any $t>0$,
\begin{align*}
\pr\brk{Z_k(G,\hat\SIGMA)\geq t}\geq\pr\brk{Z_k(\G',\hat\TAU)\geq t}
=\Erw\brk{\pr\brk{Z_k(\G',\hat\TAU)\geq t\mid\G'}}.
\end{align*}
Consequently, there exists an outcome $G'$ of $\G'$ such that $\pr\brk{Z_k(G,\hat\SIGMA)\geq t}\geq \pr\brk{Z_k(G',\hat\TAU)\geq t}$.
\end{proof}

\section{The non-adaptive group testing algorithm \SPIV}\label{Sec_alg}

\noindent
In this section we describe the new test design and the associated inference algorithm \SPIV\ for \Thm~\ref{thm_SC}.
Throughout we fix $\theta\in(0,1)$ and $\eps>0$ and we tacitly assume that $n>n_0(\eps,\theta)$ is large enough for the various estimates to hold.

\subsection{The random bipartite graph and the \DD\ algorithm}\label{Sec_random_bip}
To motivate the new test design we begin with a brief discussion of the plain random design used in prior work and the best previously known inference algorithm \DD~\cite{Coja_2019,Johnson_2019}.
At first glance a promising candidate test design appears to be a random bipartite graph with one vertex class $V_n=\{x_1,\ldots,x_n\}$ representing individuals and the other class $F_m=\{a_1,\ldots,a_m\}$ representing tests.
Indeed, two slightly different random graph models have been proposed~\cite{Aldridge_2019}.
First, in the {\em Bernoulli model} each $V_n$--$F_m$-edge is present with a certain probability (the same for every pair) independently of all others.
However, due to the relatively heavy lower tail of the degrees of the individuals, this test design turns out to be inferior to a second model where the degrees of the individuals are fixed.
Specifically, in the {\em $\Delta$-out model} every individual independently joins an equal number of $\Delta$ tests drawn uniformly at random without replacement~\cite{Mezard_2008}.

Clearly, in order to extract the maximum amount of information $\Delta$ should be chosen so as to maximise the entropy of the vector of test results.
Specifically, since the average test degree equals $\Delta n/m$ and a total of $k$ individuals are infected, the average number of infected individuals per test comes to $\Delta k/m$.
Indeed, since $k\sim n^\theta$ for a fixed $\theta<1$, the number of infected individuals in test $a_i$ can be well approximated by a Poisson variable.
Therefore, setting
\begin{align}\label{eqDelta_1}
\Delta&\sim\frac mk\ln2
\end{align}
ensures that about half the tests are positive  \whp{}

With respect to the performance of the $\Delta$-out model, \cite[\Thm~1.1]{Coja_2019} implies together with \Thm~\ref{opt} that this simple construction is information-theoretically optimal.
Indeed, $m=(1+\eps+o(1))\minf$ test suffice so that an exponential time algorithm correctly infers the set of infected individuals.
Specifically, the algorithm solves a minimum hypergraph vertex cover problem with the individuals as the vertex set and the positive test groups as the hyperedges.
For $m=(1+\eps+o(1))\minf$ the unique optimal solution is precisely the correct set of infected individuals \whp{}
While the worst case NP-hardness of hypergraph vertex cover does not, of course, preclude the existence of an algorithm that is efficient on random hypergraphs, despite considerable efforts no such algorithm has been found.
In fact, as we saw in \Sec~\ref{Sec_related} for a good number of broadly similar inference and optimisation problems on random graphs no efficient information-theoretically optimal algorithms are known.

But for $m$ exceeding the threshold $\malg$ from \eqref{eqDD} an efficient greedy algorithm {\tt DD} correctly recovers $\SIGMA$ \whp{}
The algorithm proceeds in three steps.
\begin{description}
\item[DD1] declare every individual that appears in a negative test uninfected and subsequently remove all negative tests and all individuals that they contain.
\item[DD2] for every remaining (positive) test of degree one declare the individual that appears in the test infected.
\item[DD3] declare all other individuals as uninfected.
\end{description}
The decisions made by the first two steps {\bf DD1}--{\bf DD2} are clearly correct but {\bf DD3} might produce false negatives.
Prior to the present work {\tt DD} was the best known polynomial time group testing algorithm.
While {\tt DD} correctly identifies the set of infected individuals \whp\ if $m>(1+\eps)\malg$~\cite{Johnson_2019},
%
%
the algorithm  fails if $m<(1-\eps)\malg$ \whp~\cite{Coja_2019}.

\subsection{Spatial coupling}\label{Sec_test_design}
The new efficient algorithm \SPIV\ for \Thm~\ref{thm_SC} that gets by with the optimal number $(1+\eps+o(1))\minf$ of tests comes with a tailor-made test design that, inspired by spatially coupled codes~\cite{Felstrom_1999, Kudekar_2011, Kudekar_2013}, combines randomisation with a superimposed geometric structure.
Specifically, we divide both the individuals and the tests into 
\begin{align}\label{eqell}
\ell=\lceil \ln^{1/2}n\rceil
\end{align}
compartments of equal size.
The compartments are arranged along a ring and each individual joins an equal number of random tests in the
\begin{align}\label{eqs}
s=\lceil\ln\ln n\rceil=o(\ell)
\end{align}
topologically subsequent compartments.
Additionally, to get the algorithm started we equip the first $s$ compartments with extra tests so that they can be easily diagnosed via the {\tt DD} algorithm.
Then, having diagnosed the initial compartments correctly, \SPIV\ will work its way along the ring, diagnosing one compartment after the other.

\FigSP

To implement this idea precisely 
we partition the set $V=V_n=\{x_1,\ldots,x_n\}$ of individuals into pairwise disjoint subsets $V[1],\ldots,V[\ell]$ of sizes $|V[j]|\in\{\lfloor n/\ell\rfloor,\lceil n/\ell\rceil\}$.
With each compartment $V[i]$ of individuals we associate a compartment $F[i]$ of tests of size $\abs{F[i]}=m/\ell$ for an integer $m$ that is divisible by $\ell$.
Additionally, we introduce a set $F[0]$ of $10\lceil (k s/\ell)\log n\rceil$ extra tests to facilitate the greedy algorithm for diagnosing the first $s$ compartments.
Thus, the total number of tests comes to
\begin{align}\label{eqtotalTests}
|F[0]|+\sum_{i=1}^\ell|F[i]|=(1+O(s/\ell))m=(1+o(1))m.
\end{align}
Finally, for notational convenience we define $V[\ell+i]=V[i]$ and $F[\ell+i]=F[i]$ for $i=1,\ldots,s$.

The test groups are composed as follows:
let
\begin{align}\label{eqDelta}
k&=\lceil n^\theta\rceil&\mbox{and let}&&\Delta&=\frac{m\log2}k+O(s)
\end{align}
be an integer divisible by $s$; cf.~\eqref{eqDelta_1}.
Then we construct a random bipartite graph as follows.
\begin{description}
\item[SC1] for $i=1,\ldots,\ell$ and $j=1,\ldots,s$ every individual $x\in V[i]$ joins $\Delta/s$ tests from $F[i+j-1]$ 
chosen uniformly at random without replacement. The choices  are mutually independent for all individuals $x$ and all $j$.
\item[SC2] additionally, each individual from  $V[1]\cup\cdots\cup V[s]$ independently joins $\lceil 10 \log(2)\log n \rceil$ random tests from $F[0]$, drawn uniformly without replacement.
\end{description}
Thus, {\bf SC1} provides that the individuals in compartment $V[i]$ take part in the next $s$ compartments $F[i],\ldots,F[i+s-1]$ of tests along the ring.
Furthermore, {\bf SC2} supplies the tests required by the {\tt DD} algorithm to diagnose the first $s$ compartments.
Figure~\ref{Fig_spatial_coupling_idea} provides an illustration of the resulting random test design,

From here on the test design produced by {\bf SC1--SC2} is denoted by $\G$.
Furthermore $\SIGMA\in\{0,1\}^V$ denotes a uniformly random vector of Hamming weight $k$, drawn independently of $\G$,  and  $\hat\SIGMA=(\hat\SIGMA_a)_{a\in F[0]\cup\cdots\cup F[\ell]}$ signifies the vector of test results
\begin{align*}
\hat\SIGMA_a=\max_{x\in\partial a}\SIGMA_x.
\end{align*}
In addition, let $V_1=\{x\in V:\SIGMA_x=1\}$ be the set of infected individuals and let $V_0=V\setminus V_1$ be the set of healthy individuals.
Moreover, let $F=F[0]\cup F[1]\cup\cdots\cup F[\ell]$ be the set of all tests, let $F_1=\{a\in F:\hat\SIGMA_a=1\}$ be the set of all positive tests and let $F_0=F\setminus F_1$ be the set of all negative tests.
Finally, let
\begin{align*}
V_0[i]&=V[i]\cap V_0,&V_1[i]&=V[i]\cap V_1,&F_0[i]&=F[i]\cap F_0,&F_1[i]&=F[i]\cap F_1.
\end{align*}

The following proposition summarises a few basic properties of the test design $\G$.

\begin{proposition}\label{Prop_basic}
If $m=\Theta(n^\theta\log n)$ then $\G$ enjoys the following properties with probability $1-o(n^{-2})$.
\begin{enumerate}[(i)]
\item  The infected individual counts in the various compartments satisfy
\begin{align*}
\frac k\ell-\sqrt{\frac k\ell}\log n\leq\min_{i\in[\ell]}|V_1[i]|\leq\max_{i\in[\ell]}|V_1[i]|\leq \frac k\ell+\sqrt{\frac k\ell}\log n.
\end{align*}
\item For all $i\in[\ell]$ and all $j\in[s]$ the test degrees satisfy
\begin{align*}
\frac{\Delta n}{ms}-\sqrt{\frac{\Delta n}{ms}}\log n &\leq \min_{a\in F[i+j-1]}|V[i]\cap\partial a| \leq \max_{a\in F[i+j-1]}|V[i]\cap\partial a| \leq \frac{\Delta n}{ms} + \sqrt{\frac{\Delta n}{ms}}\log n.
\end{align*}
\item For all $i\in[\ell]$ the number of negative tests in compartment $F[i]$ satisfies
\begin{align*}
\frac{m}{2\ell}-\sqrt m\ln^3 n\leq\abs{F_0[i]}\leq\frac m{2\ell}+\sqrt m\ln^3 n.
\end{align*}
\end{enumerate}
\end{proposition}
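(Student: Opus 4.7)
The plan is to establish each of the three bounds as a concentration inequality followed by a union bound, in all cases pushing the failure probability below $o(n^{-2})$ so that a union bound over the $\ell$ compartments (and $O(n^{\theta+o(1)})$ tests in (ii)) remains comfortably small.

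For part (i), the variable $\abs{V_1[i]}$ follows the hypergeometric distribution $\Hyp(n,k,\abs{V[i]})$ because $\SIGMA$ is uniform of Hamming weight $k$ and is independent of the partition $V[1],\dots,V[\ell]$. Its mean equals $k\abs{V[i]}/n=k/\ell+O(1)$, so \Lem~\ref{lem_hyperchernoff} applied with $t=\sqrt{k/\ell}\log n$ gives a tail bound of $\exp(-\Omega(\log^2 n))=n^{-\omega(1)}$, and the union bound over $\ell$ compartments yields the claim.

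For part (ii), rule~{\bf SC1} says that the $\abs{V[i]}=\Theta(n/\ell)$ individuals in $V[i]$ independently choose $\Delta/s$ tests from $F[i+j-1]$ uniformly without replacement. Hence, for any fixed $a\in F[i+j-1]$, the variable $\abs{V[i]\cap\partial a}$ is a sum of independent Bernoullis with success probability $(\Delta/s)/\abs{F[i+j-1]}=\Delta\ell/(ms)$, and thus is binomial with mean $\mu=\Delta n/(ms)\,(1+o(1))=\Theta(n^{1-\theta}/s)$, which tends to infinity. The ordinary Chernoff bound \Lem~\ref{lem_chernoff} with $t=\sqrt\mu\log n$ gives probability $\exp(-\Omega(\log^2 n))$; a union bound over the $\ell s\cdot(m/\ell)=O(sm)=n^{\theta+o(1)}$ test-compartment pairs still leaves $o(n^{-2})$.

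Part (iii) is the genuine obstacle because the events $\{a\in F_0\}$ are coupled through $\SIGMA$ and through the test selections. My plan is to work conditionally on $\SIGMA$ and on the event from (i). Because the choices by distinct individuals are mutually independent, for $a\in F[i]$ we have
\begin{align*}
\Erw\brk{\vecone\{a\in F_0\}\mid\SIGMA} &= \bc{1-\frac{\Delta\ell}{ms}}^{\kappa_i}, & \kappa_i &= \sum_{j=0}^{s-1}\abs{V_1[i-j]}.
\end{align*}
On (i) we have $\kappa_i=sk/\ell+O\bc{s\sqrt{k/\ell}\log n}$, and plugging in $\Delta=m\log(2)/k+O(s)$ from \eqref{eqDelta} gives $\Erw[\vecone\{a\in F_0\}\mid\SIGMA]=\frac12+O(\sqrt{\ell/k}\log n)$, so that $\Erw[\abs{F_0[i]}\mid\SIGMA]=m/(2\ell)+o(\sqrt m\ln^3 n)$, comfortably within the claimed window. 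It then remains to concentrate $\abs{F_0[i]}$ around its conditional mean. For this I would apply McDiarmid's bounded-differences inequality to the $\kappa_i=O(sk/\ell)$ independent blocks of coin flips indexed by the infected individuals in $V[i-s+1]\cup\cdots\cup V[i]$; changing one such block alters at most $2\Delta/s$ test outcomes in $F[i]$, so the Lipschitz constant is $c=2\Delta/s$. With $N c^2=O(k\Delta^2/(s\ell))$ and $t=\sqrt m\ln^3 n$, a short calculation using $\Delta=\Theta(m/k)$ and $k/m=\Theta(1/\log n)$ gives $t^2/(Nc^2)=\Omega(s\ell\log^4 n)$, so the tail is $\exp(-\omega(\log n))$, and a union bound over the $\ell$ compartments yields~(iii).

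The only delicate point is the one flagged above in (iii): one has to verify that the slack in $\kappa_i$ inherited from (i) propagates to only an $o(\sqrt m\ln^3 n)$ bias in the conditional mean of $\abs{F_0[i]}$. This is a careful but routine second-order Taylor estimate on $(1-\Delta\ell/(ms))^{\kappa_i}$, and once it is in place the McDiarmid step finishes the proof essentially automatically.
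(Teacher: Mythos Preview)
Your proposal is correct and follows essentially the same route as the paper: hypergeometric Chernoff for (i), binomial Chernoff for (ii), and for (iii) a conditional mean computation (yielding $\Erw[|F_0[i]|\mid\cE]=m/(2\ell)+O(\sqrt m\log n)$) followed by a bounded-differences inequality over the independent test choices of the infected individuals. The only cosmetic difference is that in (iii) the paper applies Azuma--Hoeffding over all $k$ infected individuals with Lipschitz constant $\Delta$, whereas you restrict to the $\kappa_i=O(sk/\ell)$ infected individuals in the $s$ relevant compartments with the tighter constant $2\Delta/s$; both choices give a tail of $\exp(-\omega(\log n))$ at $t=\sqrt m\ln^3 n$, so either works.
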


\noindent
We prove \Prop~\ref{Prop_basic} in \Sec~\ref{Sec_Prop_basic}.
Finally, as a preparation for things to come we point out that for any specific individual $x\in V[i]$ and any particular test $a\in F[i+j]$, $j=0,\ldots,s-1$,  we have
\begin{align}\label{eqax}
\pr\brk{x\in\partial a}&=1-\pr\brk{x\not\in\partial a}=1-\binom{|F[i+j]|-1}{\Delta/s}\binom{|F[i+j]|}{\Delta/s}^{-1}
=\frac{\Delta\ell}{ms}+O\bc{\bcfr{\Delta\ell}{ms}^2}.
\end{align}

\subsection{The Spatial Inference Vertex Cover (`\SPIV') algorithm}
The \SPIV\ algorithm for \Thm~\ref{thm_SC} proceeds in three phases.
The plan of attack is for the algorithm to work its way along the ring, diagnosing one compartment after the other aided by what has been learned about the preceding compartments.
Of course, we need to start somewhere.
Hence, in its first phase \SPIV\ diagnoses the seed compartments $V[1],\ldots,V[s]$.

\subsubsection{Phase 1: the seed.}
Specifically, the first phase of \SPIV\ applies the {\tt DD} greedy algorithm from \Sec~\ref{Sec_random_bip} to the subgraph of $\G$ induced on the individuals $V[1]\cup\ldots\cup V[s]$ and the tests $F[0]$.
Throughout the vector $\tau\in\{0,1\}^{V}$ signifies the algorithm's current estimate of the ground truth $\SIGMA$.

\IncMargin{1em}
\begin{algorithm}[ht]
 \KwData{$\G$, $\hat\SIGMA$}
 \KwResult{an estimate of $\SIGMA$}
   Let $(\tau_x)_{x\in V[1]\cup\cdots\cup V[s]}\in\cbc{0,1}^{V[1]\cup\cdots\cup V[s]}$ be the result of applying {\tt DD} to the tests $F[0]$\;
   Set $\tau_{x}=0$ for all individuals $x\in V\setminus(V[1]\cup\cdots\cup V[s])$\;
 \caption{\SPIV, phase~1}\label{SC_algorithm1}
 \label{Alg_SC}
\end{algorithm}
\DecMargin{1em}

\noindent
The following proposition, whose proof can be found in \Sec~\ref{Sec_prop_seed}, summarises the analysis of  phase~1.

\begin{proposition} \label{prop_seed}
\Whp\ the output of~\DD\ satisfies  $\tau_x=\SIGMA_x$ for all $x\in V[1]\cup\cdots\cup V[s]$.
\end{proposition}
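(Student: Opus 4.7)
The plan is to observe that the restriction of $\G$ to $V^{\mathrm{s}}:=V[1]\cup\cdots\cup V[s]$ and $F[0]$ is itself a random $\Delta^{\mathrm{s}}$-out test design, calibrated so that the standard analysis of \DD\ applies comfortably. Let $k^{\mathrm s}=|V_1\cap V^{\mathrm s}|$; by hypergeometric concentration (\Lem~\ref{lem_hyperchernoff}) we have $k^{\mathrm s}=(1+o(1))sk/\ell$ \whp, while $|V^{\mathrm s}|=(1+o(1))sn/\ell$. The seed has $m^{\mathrm s}=|F[0]|=10\lceil(ks/\ell)\log n\rceil$ tests, and each seed individual independently joins $\Delta^{\mathrm s}=\lceil10\log(2)\log n\rceil$ of them uniformly at random. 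A short computation gives mean test degree $(1+o(1))n^{1-\theta}\log 2$ and expected infected count per test equal to $\log 2+o(1)$; in analogy with \Prop~\ref{Prop_basic} the hypergeometric Chernoff bound yields concentration of all relevant degrees.

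By construction of \DD\ only step \DD3 can err: an infected $x^{*}\in V_1\cap V^{\mathrm s}$ is mislabelled iff every test $a\in\partial x^{*}\cap F[0]$ contains some other individual that survived \DD1 — either a second infected individual, or a ``disguised'' healthy $y\in V^{\mathrm s}\setminus V_1$ all of whose $\Delta^{\mathrm s}$ tests happen to be positive. I would therefore bound two union-bound terms. First, for a fixed healthy $y\in V^{\mathrm s}$, the probability that a particular one of its tests is positive even apart from $y$ itself is $1-(1-k/n)^{d-1}=1/2+o(1)$ with $d=(1+o(1))n^{1-\theta}\log 2$; the $\Delta^{\mathrm s}$ tests of $y$ are drawn without replacement but are negatively associated, so the probability that all are positive is at most $(1/2+o(1))^{\Delta^{\mathrm s}}=n^{-10\log^2 2+o(1)}$. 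Since $|V^{\mathrm s}\setminus V_1|=O(n)$ and $10\log^2 2>4$, a union bound shows that \whp\ no healthy seed individual is disguised, so \DD1 correctly strips all healthy seeds.

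Second, conditional on that event the survivors of \DD1 are exactly $V_1\cap V^{\mathrm s}$, and the identical calculation bounds the probability that every test of a fixed infected $x^{*}$ contains at least one other infected individual by $(1/2+o(1))^{\Delta^{\mathrm s}}=n^{-10\log^2 2+o(1)}$; summing over the $k^{\mathrm s}=O(n^\theta)$ infected seeds and using $\theta-10\log^2 2<\theta-4<0$ yields that \whp\ every infected seed is the unique survivor of some positive test, so \DD2 tags it correctly. The only technical nuisance is the mild dependence introduced by sampling without replacement and by conditioning on $|V_1|=k$; both are absorbed into the $1+o(1)$ factors via \Lem~\ref{lem_hyperchernoff}, which is the most delicate bookkeeping step but routine. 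Combining the two union bounds completes the argument.
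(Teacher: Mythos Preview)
Your proof plan is correct and follows essentially the same route as the paper. The paper proves a single unified statement (\Cor~\ref{Cor_F0pos}): \whp\ every seed individual $x$ has a test $a\in F[0]$ with $\partial a\setminus\{x\}\subset V_0$, established by first showing $|F_1[0]|=(1/2+o(1))|F[0]|$ via Azuma--Hoeffding and then a two-round exposure plus the hypergeometric bound; this one statement immediately implies both that \DD1 identifies all healthy seeds and that \DD2 identifies all infected seeds. You arrive at the same two conclusions by the same $(1/2+o(1))^{\Delta^{\mathrm s}}=n^{-10\ln^22+o(1)}$ estimate and union bound, merely splitting the argument into the healthy and infected cases rather than unifying them---so the content is identical, only the packaging differs.
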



\subsubsection{Phase 2: enter the ring.}
This is the main phase of the algorithm.
Thanks to \Prop~\ref{prop_seed} we may assume that the seed has been diagnosed correctly.
Now, the programme is to diagnose one compartment after the other, based on what the algorithm learned previously.
Hence, assume that we managed to diagnose compartments $V[1],\ldots,V[i]$ correctly.
How do we proceed to compartment $V[i+1]$?

For a start, we can safely mark as uninfected all individuals in $V[i+1]$ that appear in a negative test.
But a simple calculation reveals that this will still leave us with many more than $k$
undiagnosed individuals \whp\ 
To be precise, consider the set of uninfected disguised individuals
\begin{align*}
\zeroplus[i+1]&=\cbc{x\in V_0[i+1]:\hat\SIGMA_a=1\mbox{ for all }a\in\partial x},
\end{align*}
i.e., uninfected individuals that fail to appear in a negative test.
In \Sec~\ref{Sec_lemma_v0+} we prove the following.

\begin{lemma}\label{lemma_v0+}
{Suppose that $(1+\eps)\madapt\leq m=O(n^\theta\log n)$.}
Then \whp\ for all $s\leq i<\ell$ we have  $$\abs{\zeroplus[i+1]}=\bc{1+O \bc{n^{-\Omega(1)}}}\frac n{\ell2^{\Delta} }.$$
\end{lemma}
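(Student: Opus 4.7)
Fix $s \leq i < \ell$ and set $Z_i = |\zeroplus[i+1]| = \sum_{x \in V_0[i+1]} Z_{i,x}$, where $Z_{i,x} = \vecone\{x \in \zeroplus[i+1]\}$ is the indicator that $x$ is healthy and every test in $\partial x$ is positive. I work conditionally on the high probability event $\cA$ that Prop~\ref{Prop_basic}(i) holds; this costs only $o(n^{-2})$ in probability and lets me treat $|V_1[t]|=(1+O(n^{-\Omega(1)}))k/\ell$ as essentially deterministic.

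\emph{First moment.} Fix $x \in V_0[i+1]$; its $\Delta$ tests split as $\Delta/s$ per compartment across $F[i+1],\ldots,F[i+s]$. By {\bf SC1}, every $y\neq x$ selects tests independently, and furthermore its choices in distinct test compartments are mutually independent. Hence for a test $a\in\partial x$ inside some $F[i+j]$, the event ``$a$ negative'' factors as a product of independent Bernoulli events $\{y\notin\partial a\}$ over infected $y\in V_1[t']$ with $t'\in\{i+j-s+1,\ldots,i+j\}$, each of probability $1-(1+o(1))\Delta\ell/(sm)$ by~\eqref{eqax}. Multiplying and invoking Prop~\ref{Prop_basic}(i) together with the identity $\Delta k/m=\log 2(1+o(1))$ from~\eqref{eqDelta} gives
$$\Pr[a\text{ negative}\mid x\in\partial a,\SIGMA_x=0,\cA]=\exp\bc{-\Delta k/m\,(1+O(n^{-\Omega(1)}))}=\tfrac12\bc{1+O(n^{-\Omega(1)})}.$$
Because $x$'s $\Delta$ tests sit in $s$ different compartments, and every infected $y$ makes independent selections across those compartments, the $\Delta$ events ``$a$ positive'' are mutually nearly independent. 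Taking the product yields $\Pr[Z_{i,x}=1\mid\cA]=(1+O(n^{-\Omega(1)}))2^{-\Delta}$ and hence $\Erw[Z_i\mid\cA]=(1+O(n^{-\Omega(1)}))n/(\ell 2^\Delta)$.

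\emph{Concentration and completion.} I apply Chebyshev to $Z_i=\sum_x Z_{i,x}$. The diagonal part satisfies $\sum_x\Var(Z_{i,x})\leq\Erw Z_i$, contributing $1/\Erw Z_i=\ell 2^\Delta/n=n^{-\Omega(1)}$ to $\Var(Z_i)/(\Erw Z_i)^2$, since $2^\Delta=n^{\max\{\theta,1-\theta\}(1+\eps+o(1))}$ and hence $\Erw Z_i$ is a positive power of $n$ for small $\eps$. For the off-diagonal, positive correlation between $Z_{i,x}$ and $Z_{i,y}$ arises only from shared tests $\partial x\cap\partial y$ (which need be positive only once) and the finite pool of infected individuals. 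A direct first-moment calculation gives $\Erw|\partial x\cap\partial y|=O(\Delta^2\ell/(sm))=O(n^{-\theta+o(1)})$ and, carrying through, $\mathrm{Cov}(Z_{i,x},Z_{i,y})\leq O(n^{-\theta+o(1)})\Pr[Z_{i,x}=1]\Pr[Z_{i,y}=1]$, so $\sum_{x\neq y}\mathrm{Cov}\leq n^{-\Omega(1)}(\Erw Z_i)^2$. Chebyshev then delivers $\Pr[|Z_i-\Erw Z_i|\geq n^{-\delta}\Erw Z_i\mid\cA]\leq n^{-\Omega(1)}$ for some small $\delta>0$. A union bound over the $O(\sqrt{\log n})$ indices $i\in\{s,\ldots,\ell-1\}$ and over the event $\cA^c$ finishes the proof.

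\emph{Main obstacle.} The delicate point is securing the sharp $1+O(n^{-\Omega(1)})$ factor in the first moment. This hinges on (i) exploiting the product structure of {\bf SC1}—independence of each individual's Bernoulli selections across test compartments—to legitimately decouple the probability $\Pr[Z_{i,x}=1]$ into the product of $\Delta$ nearly independent per-test factors, and (ii) pushing the sharp concentration of $|V_1[t]|$ from Prop~\ref{Prop_basic}(i) through every estimate so that the identity $\exp(-\Delta k/m)=1/2$ survives with only polynomially small multiplicative errors.
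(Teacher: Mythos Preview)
Your approach via the second moment works in principle, but it is considerably harder than necessary and, as written, has genuine gaps. The paper's proof is essentially a two-line argument once one sees the right conditioning: condition on the $\sigma$-algebra $\fE$ generated by $\SIGMA$ together with the neighbourhoods $(\partial y)_{y\in V_1}$ of the \emph{infected} individuals. Given $\fE$ the set of positive tests in each compartment is fixed, and \Prop~\ref{Prop_basic}(iii) (which you do not invoke) pins down $|F_1[i+j]|=m/(2\ell)\,(1+O(n^{-\Omega(1)}))$. Since every healthy $x\in V_0[i+1]$ then draws its $\Delta$ tests independently of $\fE$ and of every other healthy individual, the indicators $Z_{i,x}$ are \emph{exactly} conditionally independent Bernoulli with parameter
\[
\prod_{j=1}^{s}\binom{|F_1[i+j]|}{\Delta/s}\binom{m/\ell}{\Delta/s}^{-1}=2^{-\Delta}\bigl(1+O(n^{-\Omega(1)})\bigr).
\]
Hence $Z_i$ is conditionally binomial and the Chernoff bound finishes immediately.

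Against this, your argument has two soft spots. In the first-moment step you justify near-independence of the events ``$a$ positive'' only \emph{across} compartments (via the independence of each infected $y$'s choices in different $F[i+j]$), but $x$ has $\Delta/s$ tests inside the \emph{same} compartment, and for those the events are correlated through the common random placement of every infected $y$; you never control this. More seriously, in the covariance step the dominant source of correlation between $Z_{i,x}$ and $Z_{i,y}$ is not the overlap $\partial x\cap\partial y$ at all but the common dependence of both indicators on the random counts $|F_1[i+j]|$: if the infected individuals happen to render more tests positive than expected, then \emph{all} healthy individuals are simultaneously more likely to be disguised. Your bound via $\Erw|\partial x\cap\partial y|$ does not touch this contribution. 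Both gaps dissolve instantly under the conditioning on $\fE$, which is the idea you are missing.
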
			

\noindent
Hence, by the definition \eqref{eqDelta} of $\Delta$ for $m$ close to $\minf$ the set $\zeroplus[i+1]$ has size $k^{1+\Omega(1)}\gg k$ \whp{}

Thus, the challenge is to discriminate between $\zeroplus[i+1]$ and the set $V_1[i+1]$ of actual infected individuals in compartment $i+1$.
 The key observation is that we can tell these sets apart by counting currently `unexplained' positive tests. 
To be precise, for an individual $x\in V[i+1]$ and $1\leq j\leq s$ let $ \vW_{x,j}$ be the number of tests in compartment $F[i+j]$ that contain $x$ but that do not contain an infected individual from the preceding compartments $V[1]\cup\cdots\cup V[i]$.
In formulas,
\begin{align}\label{eqWxj}
\vW_{x,j} = \abs{\cbc{a\in\partial x \cap F[i+j]:
 		\partial a\cap(V_1[1]\cup\cdots\cup V_1[i])=\emptyset}}.
\end{align}
Crucially, the following back-of-the-envelope calculation shows that the mean of this random variable depends on whether $x$ is infected or healthy but disguised.
\begin{description}
\item[Infected individuals ({$x\in V_1[i+1]$})] 
consider a test $a\in \partial x\cap F[i+j]$, $j=1,\ldots,s$.
Because the individuals join tests independently, conditioning on $x$ being infected does not skew the distribution of the individuals from the $s-j$ prior compartments $V[i+j-s+1],\ldots,V[i]$ that appear in $a$.
Furthermore, we chose $\Delta$ so that for each of these compartments $V[h]$  the expected number of infected individuals that join $a$ 
has mean $(\log2)/s$.
Indeed, due to independence it is not difficult to see that $|V_1[h]\cap\partial a|$ is approximately a Poisson variable.
Consequently,
\begin{align}\label{eqAmin1}
\pr\brk{(V_1[i+j-s+1]\cup\cdots\cup V_1[i])\cap\partial a=\emptyset}\sim2^{-(s-j)/s}.
\end{align}
Hence, because $x$ appears in $\Delta/s$ tests $a\in F[i+j]$, the linearity of expectation yields
\begin{align}\label{eqAmin2}
\Erw\brk{\vW_{x,j}\mid x\in V_1[i+1]}\sim 2^{j/s-1}\frac{\Delta}s.
\end{align}
\item[Disguised healthy individuals ({$x\in V_{0+}[i+1]$})]
similarly as above, for any individual $x\in V[i+1]$ and any $a\in\partial x\cap F[i+j]$ the {\em unconditional} number of infected individuals in  $a$ is asymptotically $\Po(\ln 2)$.
But given $x \in V_{0+}[i+1]$ we know that $a$ is positive.
Thus, $\partial a\setminus\cbc x$ contains at least one infected individual.
In effect, the number of positives in $a$ approximately turns into a conditional Poisson $\Po_{\geq1}(\ln 2)$.
Consequently, for test $a$ not to include any infected individual from one of the known compartments $V[h]$, $h=i+j-s+1,\ldots,i$,
every infected individual in test $a$ must stem from the $j$ yet undiagnosed compartments. 
Summing up the conditional Poisson and recalling that $x$ appears in $\Delta/s$ tests $a\in F[j]$, we thus obtain
\begin{align}\label{eqAmin3}
\Erw\brk{\vW_{x,j}\mid x\in V_{0+}[i+1]}\sim \frac \Delta s \sum_{t \geq 1} \Pr \brk{\Po_{\geq 1}(\ln 2) = t}(j/s)^t = (2^{j/s}-1)\frac\Delta s.
\end{align}
 \end{description}

A first idea to tell $V_{0+}[i+1]$ and $V_1[i+1]$ apart might thus be to simply calculate
\begin{align}\label{equnweighted}
\vW_{x}&=\sum_{j=1}^{s-1}\vW_{x,j}&&(x\in V[i+1]).
\end{align}
Indeed, \eqref{eqAmin2} and \eqref{eqAmin3} yield
\begin{align*}
\Erw\brk{\vW_x\mid x\in V_{1}[i+1]}&\sim\frac\Delta{2\log2}= 0.721\ldots\,\Delta&\mbox{whereas}&&
\Erw\brk{\vW_x\mid x\in V_{0+}[i+1]}&\sim\frac{\Delta(1-\ln 2)}{\ln2}= 0.442\ldots\,\Delta.
\end{align*}
But unfortunately a careful large deviations analysis reveals that $\vW_{x}$ is not sufficiently concentrated.
More precisely, even for $m=(1+\eps +o(1))\minf$ there are as many as $k^{1+\Omega(1)}$ `outliers' $x\in V_{0+}[i+1]$ whose $\vW_x$ grows as large as the mean $\Delta/(2\log2)$ of actual infected individuals \whp

At second thought the plain sum \eqref{equnweighted}  does seem to leave something on the table.
While $\vW_x$ counts all as yet unexplained positive tests equally,  not all of these tests reveal the same amount of information.
In fact, we should really be paying more attention to `early' unexplained tests $a\in F[i+1]$ than to `late' ones $b\in F[i+s]$.
For we already diagnosed $s-1$ out of the $s$ compartments of individuals that $a$ draws on, whereas only one of the $s$ compartments that contribute to $b$ has already been diagnosed.
Thus, the unexplained test $a$ is a much stronger indication that $x$ might be infected.
Consequently, it seems promising to replace $\vW_x$ by a weighted sum
\begin{align}\label{eqWstar}
\vW_x^\star&=\sum_{j=1}^{s-1}w_j\vW_{x,j}
\end{align}
with $w_1,\ldots,w_{s-1}\geq0$ chosen so as to gauge the amount of information carried by the different compartments.

To find the optimal weights $w_1,\ldots,w_{s-1}$ we need to investigate the rate function of $\vW_x^\star$ given $x\in V_{0+}[i+1]$.
More specifically, we should minimise the probability that $\vW_x^\star$ given $x\in V_{0+}[i+1]$ grows as large as the mean of  $\vW_x^\star$ given $x\in V_{1}[i+1]$, which we read off \eqref{eqAmin2} easily:
\begin{align}\label{eqAmin4}
\Erw \brk{\vW^\star_x \mid x \in V_1[i+1]}&\sim\frac{\Delta}{s} \wq.
\end{align}
A careful large deviations analysis followed by a Lagrangian optimisation leads to the optimal choice
\begin{align}\label{eqwj}
w_j&=\ln \frac{(1 - 2 \zeta) 2^{j/s-1} (2 - 2^{j/s})}{(1 - (1 - 2\zeta)2^{j/s-1}) (2^{j/s}-1)}
&&\mbox{where}&&\zeta=1/s^2.
\end{align}

The following two lemmas show that with these weights the scores $\vW_x^\star$ discriminate well between the potential false positives and the infected individuals.
More precisely, thresholding $\vW_x^\star$ we end up misclassifying no more than $o(k)$ individuals $x$ \whp{}

\begin{lemma}\label{lem_1dev}
Suppose that $(1+\eps)\madapt\leq m=O(n^\theta\log n)$. \Whp~we have
\begin{align}\label{eqlem_1dev}
\sum_{s\leq i<\ell}\sum_{x\in \one[i]}\vecone\cbc{\wxstar x < (1-\zeta/2) \frac \Delta s \wq}\leq k\exp\bc{-\frac{\Omega(\log n)}{(\log\log n)^4}}.
\end{align}
\end{lemma}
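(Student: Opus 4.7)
The lemma is a lower-tail concentration statement for the score $\vW_x^\star$ of a \emph{genuinely infected} individual $x \in V_1[i]$ with $s \leq i < \ell$. My approach is to first establish a pointwise bound of the form $\Pr[\vW_x^\star \leq (1-\zeta/2)\mu_x] \leq \exp(-\Omega(\log n)/(\log\log n)^4)$ by a Bernstein-type inequality, and then convert this into the asserted global bound on the sum of indicators by linearity of expectation and Markov, using $|V_1| = k$.

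\emph{Reduction to independent Bernoullis.} After conditioning on the event of \Prop~\ref{Prop_basic}, so that every $|V_1[h]|$ equals $(1+o(1))k/\ell$, fix such an $i$, an individual $x \in V_1[i]$ and an index $1 \leq j \leq s-1$. Let $T_{x,j}$ denote the set of $\Delta/s$ tests in the appropriate compartment incident to $x$. By \textbf{SC1} the test-joining of distinct individuals are mutually independent, so conditionally on $T_{x,j}$ a short Poisson calculation (identical to the one leading to \eqref{eqAmin1}) yields marginal unexplained probability $p_j = 2^{j/s-1}(1+O(n^{-\Omega(1)}))$ per $a \in T_{x,j}$. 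The indicators $\{\vecone(a\text{ unexplained})\}_{a \in T_{x,j}}$ are monotone non-increasing functions of the independent test-joining of infected individuals from the $s-j$ already-diagnosed compartments, hence \emph{negatively associated}; across distinct $j$ the sets $T_{x,j}$ lie in disjoint test compartments, so independence is exact. Standard Chernoff/Bernstein inequalities therefore apply to $\vW_x^\star$ as if its constituent Bernoullis were independent.

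\emph{Moments and Bernstein bound.} A direct inspection of the explicit weights in \eqref{eqwj} with $\zeta = 1/s^2$ yields $w_j = -\log(1-2^{-j/s}) + O(1/s)$, so that $|w_j| = O(\log s)$ and $w_j = \Theta(1)$ whenever $j$ is bounded away from $1$. This gives
\begin{align*}
\mu_x &= \Erw\vW_x^\star = \frac{\Delta}{s}\wq = \Theta(\Delta),&
\sigma_x^2 &= \Var\vW_x^\star = \frac{\Delta}{s}\sum_{j=1}^{s-1} w_j^2\,p_j(1-p_j) = \Theta(\Delta),
\end{align*}
the variance being of order $\Delta$ (rather than $\Delta\log^2 s$) because only a handful of $w_j$ near $j=1$ are logarithmically large whereas $\Theta(s)$ of them are of constant size. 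Bernstein's inequality then delivers
\begin{align*}
\Pr\brk{\vW_x^\star \leq (1-\zeta/2)\mu_x} \leq \exp\bc{-\Omega\bc{\frac{(\zeta\mu_x)^2}{\sigma_x^2 + \max_j|w_j|\cdot\zeta\mu_x}}} = \exp\bc{-\Omega\bc{\frac{\Delta}{s^4}}},
\end{align*}
the variance term dominating the denominator because $\max_j|w_j|\cdot\zeta\mu_x = O(\Delta\log s/s^2) = o(\sigma_x^2)$. Substituting $\Delta = \Theta(\log n)$ from \eqref{eqDelta} and $s = \Theta(\log\log n)$ from \eqref{eqs} produces the promised pointwise rate $\exp(-\Omega(\log n)/(\log\log n)^4)$.

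\emph{Global bound and main obstacle.} Summing the pointwise bound over all at most $k$ pairs $(i,x)$ with $x \in V_1[i]$ gives $\Erw[\#\text{bad }x] \leq k\exp(-c\log n/(\log\log n)^4)$, and Markov's inequality upgrades this into a w.h.p.\ bound of the form appearing in \eqref{eqlem_1dev} (with a slightly smaller constant in the exponent). The principal subtlety I anticipate is the sharp variance estimate $\sigma_x^2 = \Theta(\Delta)$: a careless $\sigma_x^2 = O(\Delta\log^2 s)$ would degrade the Bernstein exponent by an extra $(\log\log\log n)^2$ and miss the stated rate. Verifying that the logarithmic contribution of the weights concentrates on a bounded range of $j$ near $1$, and that the negative-association reduction survives the without-replacement $O((\Delta\ell/(ms))^2)=n^{-\Omega(1)}$ corrections, are the genuinely delicate steps; the remaining Bernstein-to-Markov pipeline is routine.
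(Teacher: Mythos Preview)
Your approach is correct but takes a genuinely different route from the paper. The paper does not treat $\vW_x^\star$ as a single weighted sum. Instead it first conditions on the $\sigma$-algebra generated by $\SIGMA$ and the neighbourhoods $(\partial y)_{y\in V[1]\cup\cdots\cup V[i]}$; given this, $\vW_{x,j}$ is \emph{exactly} hypergeometric $\Hyp(m/\ell,|\cW_{i,j}|,\Delta/s)$ with $|\cW_{i,j}|\sim 2^{j/s-1}m/\ell$ (Claim~\ref{Claim_Qxj}). A separate hypergeometric Chernoff bound is then applied to each $\vW_{x,j}$ (Claim~\ref{Claim_unexp}), and a union bound over $j\in[s]$ combined with Markov's inequality yields \eqref{eqlem_1dev}. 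Thus the paper avoids negative association and any variance computation entirely; the price is one extra factor of $s$ in the exponent from the per-coordinate $\Delta/s$ scaling, giving a nominal rate of $\exp(-\Omega(\Delta\zeta^2/s))=\exp(-\Omega(\log n/(\log\log n)^5))$ rather than your $\exp(-\Omega(\Delta/s^4))$. Your Bernstein argument is therefore sharper and actually delivers the $(\log\log n)^4$ stated in the lemma, at the cost of the NA reduction and the $\sigma_x^2=\Theta(\Delta)$ verification you correctly flagged as the delicate step. Both approaches finish identically via linearity of expectation and Markov over $|V_1|=k$.
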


\begin{lemma}\label{lem_large_deviation0+}
Suppose that $(1+\eps)\madapt\leq m=O(n^\theta\log n)$. \Whp~we have
\begin{align}\label{eqlem_0+dev}
 \sum_{s\leq i<\ell}\sum_{x\in\zeroplus[i]}\vecone\cbc{\wxstar x > (1-2\zeta) \frac \Delta s \wq}&\leq k^{1-\Omega(1)}.
\end{align}
\end{lemma}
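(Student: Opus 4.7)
The plan is to show the per-individual tail bound
\begin{align*}
p_x := \pr\brk{\wxstar x > (1-2\zeta)\frac{\Delta}{s}\wq \mid x\in\zeroplus[i+1]} \leq 2^{-\Delta(1-o(1))}
\end{align*}
for every $x\in V[i+1]$ and every $s\leq i<\ell$, and then to combine this estimate with \Lem~\ref{lemma_v0+} via linearity of expectation and Markov's inequality.

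The first step is to pin down the conditional joint law of $(\wxj x 1,\ldots,\wxj x{s-1})$ given $x\in\zeroplus[i+1]$. By construction \textbf{SC1}, distinct compartments $V[h]$ independently contribute individuals to any given test, and by the calibration \eqref{eqDelta} of $\Delta$ the count $|V_1[h]\cap\partial a|$ is well approximated by a $\Po(\log(2)/s)$ variable uniformly in $h$ and $a$. Conditioning on $x\in\zeroplus[i+1]$ only forces each test $a\in\partial x$ to be positive, so a short Bayes computation yields that such a test is counted in $\wxj x j$ with probability $(1+o(1))p_j$, where $p_j = 2^{j/s}-1$, matching \eqref{eqAmin3}. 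The bounded-degree estimates of \Prop~\ref{Prop_basic} ensure that distinct tests in $\partial x$ share no common individuals \whp, while the tests in $F[i+j]$ and $F[i+j']$ for $j\ne j'$ are disjoint by design. Thus $\wxj x j$ is stochastically dominated by $\Bin(\Delta/s, p_j(1+o(1)))$ and the $\wxj x j$ are asymptotically independent across $j$.

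The second step is an exponential Markov bound at parameter $\lambda=1$. Using $\Erw\brk{\exp(w_j Y)}\leq\exp\bc{p_j(e^{w_j}-1)\Delta/s}$ for $Y\sim\Bin(\Delta/s,p_j)$ together with the near-independence above,
\begin{align*}
p_x\leq\exp\bc{-\frac{\Delta}{s}\brk{(1-2\zeta)\sum_{j=1}^{s-1} 2^{j/s-1}w_j - \sum_{j=1}^{s-1}p_j(e^{w_j}-1)}}.
\end{align*}
The weights \eqref{eqwj} are precisely the log-likelihood ratios between the disguise distributions $\Bin(\cdot,p_j)$ and the slightly slackened infection distributions $\Bin(\cdot,(1-2\zeta)2^{j/s-1})$, hence the stationary point of the bracketed expression viewed as a function of $(w_j)_j$. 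A direct substitution using $p_j=2^{j/s}-1$ and $\zeta=1/s^2$ shows that the bracketed rate is at least $s\log(2)(1-o(1))$, so $p_x\leq 2^{-\Delta(1-o(1))}$. Combining this with $|\zeroplus[i]|=(1+o(1))n/(\ell\cdot 2^\Delta)$ from \Lem~\ref{lemma_v0+} gives
\begin{align*}
\Erw\brk{\sum_{s\leq i<\ell}\sum_{x\in\zeroplus[i]}\vecone\cbc{\wxstar x>(1-2\zeta)\frac{\Delta}{s}\wq}}\leq (1+o(1))\frac{n}{2^\Delta}\cdot 2^{-\Delta(1-o(1))} = n\cdot 2^{-\Delta(2-o(1))}.
\end{align*}
Since $m\sim\minf$ together with \eqref{eqDelta} yield $2^\Delta=n^\beta$ with $\beta=\theta$ when $\theta\geq\log(2)/(1+\log 2)$ and $\beta=(1-\theta)\log 2$ otherwise, a brief case analysis shows $n\cdot 2^{-\Delta(2-o(1))}\leq n^{\theta(1-\Omega(1))}=k^{1-\Omega(1)}$ for every fixed $\theta\in(0,1)$, and Markov's inequality completes the proof.

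The main technical obstacle is to rigorously justify the independence approximations in the first step: the $\Delta/s$-out sampling of tests together with the hard constraint $\sum_x\SIGMA_x=k$ introduces mild correlations, and conditioning on $x\in\zeroplus[i+1]$ is a non-trivial reweighting of the joint law. I would handle these by first conditioning on the high-probability event of \Prop~\ref{Prop_basic}, then invoking the hypergeometric Chernoff bound \Lem~\ref{lem_hyperchernoff} to replace the relevant hypergeometric counts with binomials, and finally using standard Poisson approximation to decouple contributions from different compartments. Since $\Delta=O(\log n)$ while the relevant population sizes are polynomial, these approximation errors are absorbed into the $o(1)$ slack.
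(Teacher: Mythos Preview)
Your overall strategy—bound the conditional tail probability of $\wxstar x$ for $x\in\zeroplus[i+1]$ via an exponential Markov inequality, then combine with \Lem~\ref{lemma_v0+} and Markov—is the same as the paper's. But the computation of the rate is wrong, and the bound you wrote down is actually too weak to conclude.

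\medskip
\noindent\textbf{The Poisson bound on the MGF is too lossy.} You use $\Erw[e^{w_jY}]\leq\exp\bigl(p_j(e^{w_j}-1)\Delta/s\bigr)$. With $p_j=2^{j/s}-1$ and (ignoring $\zeta$) $e^{w_j}=2^{j/s}/(2^{j/s}-1)$, one gets $p_j(e^{w_j}-1)=1$ for every $j$, so $\sum_{j=1}^{s-1}p_j(e^{w_j}-1)=s-1$. Since also $\sum_{j=1}^{s-1}2^{j/s-1}w_j=s(1+o(1))$ (the Riemann sum of $\int_0^1 2^{z-1}\ln\frac{2^z}{2^z-1}\,\dd z=1$), your bracketed expression equals $1+o(s)$, not $s\log 2$. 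The resulting bound $p_x\leq\exp(-O(\Delta/s))$ is subpolynomial and does not beat the size $n/(\ell2^\Delta)$ of $\zeroplus[i+1]$ when $m$ is merely $(1+\eps)\madapt$.

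\medskip
\noindent\textbf{The exact MGF gives rate $1-\log 2$, not $\log 2$.} If instead you keep the exact binomial MGF $(1-p_j+p_je^{w_j})^{\Delta/s}$, then with the same substitutions $1-p_j+p_je^{w_j}=2$ identically, so $\prod_j\Erw[e^{w_j\wxj xj}]=2^{(1+o(1))\Delta}$, and the Chernoff bound at $\lambda=1$ yields
\[
p_x\;\leq\;\exp\bigl(-\Delta(1+o(1))\bigr)\cdot 2^{(1+o(1))\Delta}\;=\;\exp\bigl(-(1-\log 2+o(1))\Delta\bigr).
\]
This is exactly the rate $\cM=1-\log 2+o(1)$ the paper obtains via its Lagrangian optimisation (Claims~\ref{lem_happen1}--\ref{lem_lagrange}); the weights \eqref{eqwj} are precisely what makes $\lambda=1$ optimal. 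Combining with $|\zeroplus[i+1]|\sim n/(\ell2^\Delta)$ then gives
\[
\Erw\bigl[\textstyle\sum\bigr]\;\leq\;(1+o(1))\,n\,2^{-\Delta}\exp\bigl(-(1-\log 2)\Delta\bigr)\;=\;n\,e^{-(1+o(1))\Delta}\;\leq\;n^{\theta-\Omega(1)}=k^{1-\Omega(1)},
\]
using $\Delta\geq(1+\eps)(1-\theta)\log n$ from $m\geq(1+\eps)\madapt$. (Your final paragraph assumes $m\sim\minf$, but only the lower bound $m\geq(1+\eps)\madapt$ is available; fortunately larger $m$ only improves the estimate.) So the plan works once you replace the Poisson MGF bound by the exact one and correct the rate to $1-\log 2$.
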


\noindent
We prove these two lemmas in \Sec s~\ref{Sec_lem_1dev} and \ref{Sec_lem_large_deviation0+}.


\Lem s~\ref{lem_1dev}--\ref{lem_large_deviation0+} leave us with only one loose end.
Namely, calculating the scores $\vW_x^\star$ requires knowledge of the correct infection status $\SIGMA_x$ of {\em all} the individuals $x\in V[1]\cup\cdots\cup V[i]$ from the previous compartments.
But since the r.h.s.\ expressions in \eqref{eqlem_1dev} and \eqref{eqlem_0+dev} are non-zero, it is unrealistic to assume that the algorithm's estimates $\tau_x$ will consistently match the ground truth $\SIGMA_x$ beyond the seed compartments.
Hoping that the algorithm's estimate will not stray too far, we thus have to  make do with the  approximate scores
\begin{align}\label{eqWtau}
W_x^\star(\tau)& = \sum_{j=1}^{s-1} w_j W_{x,j}(\tau),&\mbox{where}&&
W_{x,j}(\tau)&=\abs{\cbc{a\in\partial x \cap F[i+j-1]: \max_{y \in \partial a\cap(V[1]\cup\cdots V[i])}\tau_y = 0}}.
\end{align}
Hence, phase~2 of \SPIV\ reads as follows.

\IncMargin{1em}
\begin{algorithm}[ht]
  \setcounter{AlgoLine}{2}
  \For{$i=s, \dots, \ell-1 $}{
 \For{$x \in V[i+1]$}{
  \If{$\exists a \in \partial x: \hat\SIGMA_a=0$}{$\tau_x = 0$ \tcp*[h]{classify as uninfected}}
  \ElseIf{$W_x^\star(\tau) < (1-\zeta) \frac \Delta s \wq$}{
 $\tau_x = 0$ \tcp*[h]{tentatively classify as uninfected}}
 \Else{$\tau_x = 1$ \tcp*[h]{tentatively classify as infected}}
  }}
\caption{\SPIV, phase~2.}
 \end{algorithm}
\DecMargin{1em}

Since phase~2 of \SPIV\ uses the approximations from \eqref{eqWtau}, there seems to be a risk of errors amplifying as we move along.
Fortunately,  it turns out that errors proliferate only moderately and the second phase of \SPIV\ will misclassify only $o(k)$ individuals.
The following proposition summarises the analysis of phase~2.

\begin{proposition} \label{prop_dist_psi}
Suppose that $(1+\eps)\madapt\leq m=O(k\log n)$.
\Whp\ the assignment $\tau$ obtained after steps 1--10 satisfies $$\sum_{x\in V}\vecone\cbc{\tau_x\neq\SIGMA_x}\leq k\exp\bc{-\frac{\log n}{(\ln\ln n)^6}}.$$
\end{proposition}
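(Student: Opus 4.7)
The plan is a compartment-by-compartment induction controlling the mistake count $\abs{\cE_i}=\abs{\cbc{x\in V[1]\cup\cdots\cup V[i]:\tau_x\neq\SIGMA_x}}$. The main obstacle is the slippage between the algorithm's approximate scores $W^\star_x(\tau)$ and the idealised scores $\wxstar{x}$ of \eqref{eqWstar}: because $\tau$ is constructed sequentially, the classification of $V[i+1]$ depends on past errors, and one must show that these errors do not cascade catastrophically along the ring.

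I first condition on the joint event (holding w.h.p.) that the conclusions of \Prop s~\ref{Prop_basic} and \ref{prop_seed} together with \Lem s~\ref{lemma_v0+}--\ref{lem_large_deviation0+} all hold. In particular, phase~1 labels the seed $V[1]\cup\cdots\cup V[s]$ correctly, giving the base case $\abs{\cE_s}=0$. At each later step, individuals caught by a negative test are definitely correct, so the only potential mistakes among $V[i+1]$ come from $\one[i+1]\cup\zeroplus[i+1]$, whose classification hinges on thresholding $W^\star_x(\tau)$.

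For the inductive step I compare $W^\star_x(\tau)$ with $\wxstar{x}$. A test $a\in\partial x$ contributes differently to the two sums only when it contains some $y\in\cE_i$, so
\[
\abs{W^\star_x(\tau)-\wxstar{x}}\leq D_x\cdot\max_j w_j,\qquad D_x=\abs{\cbc{a\in\partial x:\partial a\cap\cE_i\neq\emptyset}}.
\]
Because the weights of \eqref{eqwj} satisfy $\max_j w_j=O(\log s)$, a $\tau$-induced misclassification of $x$ beyond what $\wxstar{x}$ alone would cause requires $D_x\gtrsim\zeta\Delta/(s\log s)$. Each erroneous $y\in\cE_i$ sits in at most $\Delta$ tests and can only affect tests in the $s-1$ compartments immediately following its own, so using \eqref{eqax} and \Prop~\ref{Prop_basic} one obtains $\Erw[D_x]=O(\Delta^2\ell\abs{\cE_i}/(sm))$. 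Exploiting the independent neighbour choices in {\bf SC1} together with \Lem~\ref{lem_hyperchernoff}, a Chernoff estimate bounds the number of $x\in V[i+1]$ with $D_x$ above the relevant threshold by $\abs{\cE_i}\cdot n^{-\Omega(1)}$ w.h.p.

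Combining this with \Lem s~\ref{lem_1dev}--\ref{lem_large_deviation0+} yields the recursion
\[
\abs{\cE_{i+1}}-\abs{\cE_i}\leq\alpha_{i+1}+\abs{\cE_i}\cdot n^{-\Omega(1)},
\]
where $\sum_{i=s}^{\ell-1}\alpha_{i+1}\leq k\exp\bc{-\Omega(\log n/(\log\log n)^4)}+k^{1-\Omega(1)}$ by \Lem s~\ref{lem_1dev} and \ref{lem_large_deviation0+}. Since $\ell=O(\log^{1/2}n)$ the compounding factor $(1+n^{-\Omega(1)})^\ell=1+o(1)$ is harmless, and the dominant contribution $k\exp(-\Omega(\log n/(\log\log n)^4))$ sits comfortably below the claimed bound $k\exp(-\log n/(\log\log n)^6)$, completing the proof.
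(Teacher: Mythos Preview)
Your overall architecture matches the paper's: induct over compartments, bound $|W^\star_x(\tau)-\wxstar{x}|$ by $D_x\cdot\max_j w_j$, and combine with \Lem s~\ref{lem_1dev}--\ref{lem_large_deviation0+}. The gap is in how you control the number of $x\in V[i+1]$ with large $D_x$. You invoke a Chernoff bound via ``the independent neighbour choices in {\bf SC1}'', which amounts to treating $\partial x$ as independent of $\cE_i$. This is correct for healthy $x\in V_0[i+1]$ but fails for infected $x\in V_1[i+1]$: the earlier decisions $\tau_y$ for $y\in V[h]$ with $i-s+2\leq h\leq i$ pass through Step~5--6, which inspects the actual test results $\hat\SIGMA_a$ for $a\in\partial y$; since these test compartments overlap with those of $x$ and $x$ is infected, $\hat\SIGMA_a$ depends on whether $x\in\partial a$. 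Thus $\cE_i$---and with it the set of ``bad'' tests defining $D_x$---is not independent of $\partial x$, so your conditional Chernoff is unjustified. The dependence is not small: resampling $\partial x$ can flip the negative-test status of $n^{1-\theta+o(1)}$ nearby individuals $y$, with effects that then propagate through subsequent compartments.

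The paper resolves this via \Lem~\ref{lemma_endgame_misclassified}, an expansion property proved by a first-moment union bound over \emph{all} subsets $T\subset V$ of size at most $k\exp(-\log^{7/8}n)$: \whp\ in $\G$, simultaneously for every such $T$, fewer than $|T|/3$ individuals have $\geq\ln^{1/4}n$ tests meeting $T$. Because this holds for all small $T$ at once, it applies to the algorithm-dependent set $T=\bigcup_{h\leq i}\cM[h]$ with no conditioning issues, giving the recursion $|\cM[i+1]|\leq k\exp(-\ln n/(\ln\ln n)^5)+\sum_{h\leq i}|\cM[h]|$, which unwinds over the $\ell=O(\sqrt{\ln n})$ compartments to the stated bound. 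Your stronger target $|\cE_i|\cdot n^{-\Omega(1)}$ is unnecessary; what matters is that the for-all-$T$ formulation decouples the expansion estimate from the algorithm's adaptive choices.
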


\noindent
The proof of \Prop~\ref{prop_dist_psi} can be found in \Sec~\ref{sec_prop_dist_psi}.

\subsubsection{Phase 3: cleaning up.}
The final phase of the algorithm rectifies the errors incurred during phase~2.
The combinatorial insight that makes this possible is that for $m\geq(1+\eps)\minf$ every infected individual has at least $\Omega(\Delta)$ positive tests to itself \whp{}
Thus, these tests do not feature a second infected individual.
Phase~3 of the algorithm exploits this observation by simply thresholding the number $S_x$ of tests where there is no other infected individual besides potentially $x$.
Thanks to the expansion properties of the graph $\G$, each iteration of the thresholding procedure reduces the number of misclassified individuals by at least a factor of three.
In effect, after $\log n$ iterations all individuals will be classified correctly \whp{}
Of course, due to \Prop~\ref{prop_seed} we do not need to reconsider the seed  $V[1]\cup\cdots\cup V[s]$.

\IncMargin{1em}
\begin{algorithm}[ht]
  \setcounter{AlgoLine}{10}
  Let $\tau^{(1)}=\tau$\;
  \For{ $i = 1,\dots, \lceil\log n\rceil$} {
  For all $x\in V[s+1]\cup\cdots\cup V[\ell]$ calculate\\
   $\displaystyle \qquad S_x(\tau^{(i)})=\sum_{a\in\partial x:\hat\SIGMA_a=1}\vecone\cbc{\forall y\in\partial a\setminus\cbc x:\tau^{(i)}_y=0}$\;
  Let 
  $\displaystyle\tau_x^{(i+1)}=
  	\begin{cases}
  	\tau_x^{(i)}&\mbox{ if }x\in V[1]\cup\cdots\cup V[s],\\
  	\vecone\cbc{S_x \bc{\tau^{(i)}}>\ln^{1/4}n}&\mbox{ otherwise }
  	\end{cases}$\;
  	}
  \KwRet{$\tau^{(\lceil\log n\rceil)}$}
\caption{\SPIV, phase~3.}\label{SC_algorithm}
\end{algorithm}
\DecMargin{1em}

\begin{proposition}\label{prop_endgame}
Suppose that $(1+\eps)\minf\leq m=O(n^\theta\log n)$.
\Whp\ for all $1\leq i\leq\lceil\log n\rceil$ we have
\begin{align*}
\sum_{x\in V}\vecone\{\tau^{(i+1)}_x\neq \SIGMA_x\}\leq\frac13\sum_{x\in V}\vecone\{\tau^{(i)}_x\neq \SIGMA_x\}.
\end{align*}
\end{proposition}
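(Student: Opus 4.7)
The plan is to identify a high-probability structural event $\cE$ for the pair $(\G,\SIGMA)$ on which each iteration of Phase~3 shrinks the misclassified set $E_i := \{y \in V : \tau^{(i)}_y \neq \SIGMA_y\}$ deterministically by a factor of at least~$3$. Throughout, write $W^{(i)} := \{y : \tau^{(i)}_y = 1\}$, decompose $E_i = E_i^- \cup E_i^+$ into false negatives $E_i^- = V_1 \setminus W^{(i)}$ and false positives $E_i^+ = W^{(i)} \setminus V_1$, and abbreviate $N_y(T) := |\{a \in \partial y : \partial a \cap T \neq \emptyset\}|$ for $T \subseteq V$.

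First I would establish a structural event $\cE$ of probability $1-o(1)$ on which: (a)~every infected individual $x \in V_1$ has at least $0.4\Delta$ \emph{private} positive tests, namely tests $a \in \partial x$ with $\partial a \cap V_1 = \{x\}$; and (b)~for every subset $T \subseteq V$ with $|T| \leq k \exp(-\log n/(\log \log n)^6)$, both $|\{y : N_y(T) \geq 0.4\Delta\}|$ and $|\{y : N_y(T) > \log^{1/4} n\}|$ are at most $|T|/10$. Property~(a) will follow from a Chernoff bound (\Lem~\ref{lem_chernoff}) applied to the number of tests in $\partial x$ whose other occupants are all healthy, whose mean is asymptotically $\Delta \cdot e^{-\log 2} = \Delta/2$, together with a union bound over $x \in V_1$ and the degree control of \Prop~\ref{Prop_basic}.

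Next I would proceed by induction on $i$. The base case is \Prop~\ref{prop_dist_psi}, which guarantees $|E_1| \leq k \exp(-\log n/(\log\log n)^6)$. For the inductive step, observe that Phase~3 preserves the seed entries $\tau^{(i+1)}_x = \tau^{(i)}_x$ for $x \in V[1]\cup\cdots\cup V[s]$, and by \Prop~\ref{prop_seed} these already agree with $\SIGMA$, so the seed contributes nothing to $E_{i+1}$. Outside the seed, every $y \in E_{i+1}$ belongs to one of two types: Type~1, with $\SIGMA_y = 1$ and $S_y(\tau^{(i)}) \leq \log^{1/4} n$; or Type~2, with $\SIGMA_y = 0$ and $S_y(\tau^{(i)}) > \log^{1/4} n$. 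To bound Type~1, every test in $\partial y$ is positive, so $S_y(\tau^{(i)}) = |\{a \in \partial y : \partial a \cap W^{(i)} \subseteq \{y\}\}|$; a private positive test of $y$ fails to contribute only when it is intersected by $E_i^+$, and property~(a) yields $S_y(\tau^{(i)}) \geq 0.4\Delta - N_y(E_i^+)$. Hence Type~1 forces $N_y(E_i^+) \geq 0.4\Delta - \log^{1/4} n$, and (b) applied with $T = E_i^+$ caps the number of such $y$ by $|E_i^+|/10$. To bound Type~2, pick healthy $y$ and observe that any test contributing to $S_y(\tau^{(i)})$ must be positive and must contain some infected $z \neq y$ with $\tau^{(i)}_z = 0$, i.e.\ $z \in E_i^-$. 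Thus $S_y(\tau^{(i)}) \leq N_y(E_i^-)$, Type~2 forces $N_y(E_i^-) > \log^{1/4} n$, and (b) with $T = E_i^-$ bounds the count by $|E_i^-|/10$. Adding the two, $|E_{i+1}| \leq (|E_i^+|+|E_i^-|)/10 = |E_i|/10 \leq |E_i|/3$, closing the induction.

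The hard part will be establishing the expansion property~(b). Because $\Delta = \Theta(\log n)$ while the threshold $\log^{1/4} n$ lies well below the typical degree, a naive union bound over all subsets $T$ of size up to $k \exp(-\log n/(\log\log n)^6)$ is too weak by polynomial factors. The plan is to exploit that $N_y(T)$ is near-Poisson with tiny mean of order $|T|\log n/k$, which yields a sharp large-deviation tail of order $\exp(-\log^{5/4} n/\mathrm{polylog}(n))$ per pair $(y,T)$, and then to combine this with a careful counting argument over pairs $(y,T)$ that leverages the block structure of $\G$ from {\bf SC1}--{\bf SC2} to restrict attention to far fewer effective configurations than a raw union bound would permit.
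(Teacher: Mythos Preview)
Your overall architecture---define a structural event, induct on $i$ using \Prop~\ref{prop_dist_psi} as base case, split $E_{i+1}$ into a false-negative and a false-positive type, and control each via an expansion property---is exactly the paper's approach (see \Lem~\ref{lemma_endgame_misclassified}, \Lem~\ref{Lem_distphixstar}, and the proof of \Prop~\ref{prop_endgame}). Two points deserve correction, one a real gap and one a misjudgement.

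\textbf{Property (a) as stated cannot be obtained by the argument you sketch.} The number $\vS_x$ of private positive tests of an infected $x$ is essentially a $\Bin(\Delta,1/2)$ variable, so Chernoff gives only $\Pr[\vS_x\le 0.4\Delta]\le\exp(-\Delta\KL{0.4}{1/2})\approx\exp(-0.02\Delta)$. Since $\Delta=\Theta(\log n)$, this is $n^{-c}$ for a tiny constant $c\approx 0.02\cdot\Delta/\log n$, and the union bound over the $k=n^{\theta}$ infected individuals fails for every $\theta$ above roughly $0.02$. The paper (\Lem~\ref{Lem_distphixstar}) instead proves only $\min_{x\in V_1}\vS_x\ge\sqrt{\Delta}$: it observes that $\vS_x$ decomposes as a sum of $s$ independent hypergeometrics $\vS_x[i{+}j{-}1]$, bounds $\Pr[\vS_x[i{+}j{-}1]\le\sqrt{\Delta}]\le 2^{-(1+o(1))\Delta/s}$ for each compartment, and multiplies to get $2^{-(1+o(1))\Delta}=o(1/k)$ because $m\ge(1+\eps)\minf$ forces $\Delta\log 2\ge(1+\eps)\theta\log n$. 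Fortunately your Type~1 argument only needs a threshold exceeding $2\log^{1/4}n$; replacing $0.4\Delta$ by $\sqrt{\Delta}$ everywhere in (a) and in the Type~1 bound repairs the gap with no further change.

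\textbf{Property (b) is easier than you anticipate.} You write that a naive union bound over $T$ is too weak, but the paper's \Lem~\ref{lemma_endgame_misclassified} proves exactly this expansion property (with threshold $\ln^{1/4}n$ and factor $1/3$) by a direct union bound over triples $(R,T,U)$, where $R$ is the putative set of violators and $U\subset F[1]\cup\cdots\cup F[\ell]$ is a set of at least $|R|\lceil\ln^{1/4}n\rceil$ tests each containing two individuals of $R\cup T$. The double-incidence condition contributes a factor $((r{+}t)\Delta\ell/(ms))^{2u}$, which combines with $\binom{n}{t}^2\binom{m}{u}$ to give a bound of the form $[(\eul n/t)^{O(1)/\ln^{1/4}n}\cdot t\ln^{O(1)}n/m]^{u}$; since $t\le k\exp(-\log^{7/8}n)$ and $m=\Theta(k\log n)$, the bracket is at most $\exp(-\Omega(\log^{7/8}n))$ and the whole thing sums to $o(1)$. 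No spatial block structure or Poisson-tail refinement is needed.
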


\noindent
We prove \Prop~\ref{prop_endgame} in \Sec~\ref{sec_prop_endgame}.

\begin{proof}[Proof of \Thm~\ref{thm_SC}]
The theorem is an immediate consequence of \Prop s~\ref{prop_seed}, \ref{prop_dist_psi} and~\ref{prop_endgame}.
\end{proof}

\subsection{Proof of \Prop~\ref{Prop_basic}}\label{Sec_Prop_basic}

The number $|V_1[i]|$ of infected individuals in compartment $V[i]$ has distribution $\Hyp(n,k,|V[i]|)$.
Since $||V[i]|-n/\ell|\leq1$, (i) is an immediate consequence of the Chernoff bound from \Lem~\ref{lem_hyperchernoff}.

With respect to (ii), we recall from \eqref{eqax} that 
$\pr\brk{x\in\partial a}=\frac{\Delta\ell}{ms}(1+O(\frac{\Delta\ell}{ms}))$.
Hence, because the various individuals $x\in V[i]$ join tests independently, the number $|V[i]\cap\partial a|$ of test participants from $V[i]$ has distribution
\begin{align*}
|V[i]\cap\partial a|\sim\Bin(|V[i]|,\Delta\ell/\bc{ms}+O((\Delta\ell/{ms})^2)).
\end{align*}
Since $|V[i]|=n/\ell+O(1)$, assertion (ii) follows from \eqref{eqDelta} and the Chernoff bound from \Lem~\ref{lem_chernoff}.

Coming to (iii), due to part (i) we may condition on $\cE=\{\forall i\in[\ell]:|V_1[i]|=k/\ell+O(\sqrt{k/\ell}\log n)\}.$
Hence,  with $h$ ranging over the $s$ compartments whose individuals join tests in $F[i]$, \eqref{eqax} implies that for every test $a\in F[i]$ the number of infected individuals $\abs{V_1\cap\partial a}$ is distributed as a sum of independent binomial variables
\begin{align*}
\abs{V_1\cap\partial a}&\sim\sum_{h}\vX_h&&\mbox{with}&\vX_h\sim\Bin\bc{V_1[h],\frac{\Delta\ell}{ms}+O\bc{\bcfr{\Delta\ell}{ms}^2}}.
\end{align*}
Consequently, \eqref{eqDelta} ensures that the event $V_1\cap\partial a=\emptyset$ has conditional probability
\begin{align*}
\pr\brk{V_1\cap\partial a=\emptyset\mid\cE}&=
\prod_h\pr\brk{\vX_h=0\mid\cE}=\exp\brk{s\bc{\frac k\ell+O\bc{\sqrt{\frac k\ell}\log n}}\log\bc{1-\frac{\Delta\ell}{ms}+O\bc{\bcfr{\Delta\ell}{ms}^2}}}\\
&=\exp\brk{-\frac{sk}{\ell}\cdot\frac{\Delta\ell}{ms}+O\bc{\sqrt{\frac{k}\ell}\cdot\frac{\Delta\ell}{m}}+
	O\bc{\frac{sk}{\ell}\cdot \bcfr{\Delta\ell}{ms}^2}}=\frac12+O(\sqrt{\ell/k}).
\end{align*}
Therefore, we obtain the estimate
\begin{align}\label{eqneg1}
\Erw\brk{|F_0[i]| \mid\cE}&=\frac{m}{2\ell}+O(\sqrt m\log n).
\end{align}
Finally, changing the set of tests that a specific infected individual $x\in V_1[h]$ joins shifts $|F_0[i]|$ by at most $\Delta$
(while tinkering with uninfected ones does not change $|F_0[i]|$ at all).
Therefore, the Azuma--Hoeffding inequality yields
\begin{align}\label{eqneg2}
\pr\brk{\abs{|F_0[i]|-\Erw\brk{|F_0[i]| \mid\cE}}\geq t\mid\cE}&\leq2\exp\bc{-\frac{t^2}{2k\Delta^2}}&&\mbox{for any }t>0.
\end{align}
Thus, (iii) follows from \eqref{eqDelta}, \eqref{eqneg1} and \eqref{eqneg2} with $t=\sqrt m\log^3n$.

\subsection{Proof of \Prop~\ref{prop_seed}}\label{Sec_prop_seed}
Let $D=\lceil 10\log(2)\log n\rceil$ and recall that $|F[0]|=\lceil 10ks\log(n)/\ell\rceil$.
Since by {\bf SC2} every individual from $\in V[1]\cup\cdots\cup V[s]$ joins $D$ random tests from $F[0]$,
in analogy to \eqref{eqax} for every $x\in V[1]\cup\cdots\cup V[s]$ and every test $a\in F[0]$ we obtain
\begin{align}\label{eqax0}
\pr\brk{x\in\partial a}&=1-\pr\brk{x\not\in\partial a}=1-\binom{|F[0]|-1}{D}\binom{|F[0]|}{D}^{-1}
=\frac{D}{|F[0]|}\bc{1+O\bcfr{D}{|F[0]|}}
=\frac{\ell\log2}{ks}\bc{1+O(n^{-\Omega(1)})}.
\end{align}
Let $F_1[0]$ be the set of tests $a\in F[0]$ with $\hat\SIGMA_a=1$.

\begin{lemma}\label{Lemma_F0pos}
\Whp\ the number of positive tests $a\in F[0]$ satisfies $|F_1[0]|=|F[0]|(\frac12+O(n^{-\Omega(1)}))$.
\end{lemma}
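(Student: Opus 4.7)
The approach is a straightforward first moment calculation followed by a bounded-differences concentration inequality, mirroring the strategy used for part~(iii) of \Prop~\ref{Prop_basic}. First I would condition on the high-probability event $\cE$ that the seed compartments contain the expected number of infected individuals, i.e.,
\begin{align*}
|V_1[1]\cup\cdots\cup V_1[s]|=\frac{sk}\ell\bc{1+O(n^{-\Omega(1)})},
\end{align*}
which holds whp by \Prop~\ref{Prop_basic}(i) since $sk/\ell=n^{\theta}\cdot\mathrm{polylog}(n)$ makes the relative fluctuation $O(\sqrt{\ell/(sk)}\log n)=n^{-\Omega(1)}$.

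Next, for each test $a\in F[0]$ I would compute $\pr[a\in F_1[0]\mid\cE]$. Because under {\bf SC2} distinct individuals draw their $D$ test slots from $F[0]$ independently, \eqref{eqax0} gives
\begin{align*}
\pr\brk{\hat\SIGMA_a=0\mid\cE}=\prod_{x\in V_1[1]\cup\cdots\cup V_1[s]}\bc{1-\frac{\ell\log 2}{ks}\bc{1+O(n^{-\Omega(1)})}}=\exp\bc{-\log 2+O(n^{-\Omega(1)})}=\tfrac12\bc{1+O(n^{-\Omega(1)})}.
\end{align*}
The $O(n^{-\Omega(1)})$ error collects both the relative error in \eqref{eqax0} (which is of order $D/|F[0]|=O(\ell/k)=n^{-\Omega(1)}$) and the relative error in the cardinality of the infected seed. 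Linearity of expectation then yields $\Erw[|F_1[0]|\mid\cE]=|F[0]|(\tfrac12+O(n^{-\Omega(1)}))$.

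For concentration I would apply the Azuma--Hoeffding inequality to the edge-exposure martingale obtained by revealing the $D$ tests chosen by each individual in $V[1]\cup\cdots\cup V[s]$ one individual at a time. Swapping the choices of a single individual alters $|F_1[0]|$ by at most $D=O(\log n)$, and there are $|V[1]\cup\cdots\cup V[s]|=sn/\ell$ such individuals, so for any $t>0$,
\begin{align*}
\pr\brk{\abs{|F_1[0]|-\Erw[|F_1[0]|\mid\cE]}\geq t\mid\cE}\leq 2\exp\bc{-\frac{t^2}{2\cdot(sn/\ell)\cdot D^2}}.
\end{align*}
Choosing $t=|F[0]|n^{-\delta}$ with a sufficiently small $\delta>0$ (for instance $\delta<\theta/4$) makes the exponent $-n^{\Omega(1)}$, so the desired bound $|F_1[0]|=|F[0]|(\tfrac12+O(n^{-\Omega(1)}))$ holds whp. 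The only real work is bookkeeping the three sources of $n^{-\Omega(1)}$ error—the approximation in \eqref{eqax0}, the relative fluctuation of the infected seed size, and the martingale deviation—and checking that each is indeed polynomially small in $n$; no new idea is needed.
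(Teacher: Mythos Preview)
Your first-moment computation is fine and matches the paper exactly. The gap is in the concentration step. You run the edge-exposure martingale over all $|V[1]\cup\cdots\cup V[s]|\sim sn/\ell$ seed individuals, which gives the bound
\[
\exp\bc{-\frac{t^2}{2(sn/\ell)D^2}}.
\]
With $t=|F[0]|n^{-\delta}$ and $|F[0]|=\Theta\bc{(ks/\ell)\log n}$ this exponent is of order $n^{2\theta-1-2\delta}s/\ell$; it diverges only when $2\theta-1>0$. So your argument, as written, breaks down for every $\theta\leq 1/2$, and no choice of $\delta>0$ (let alone $\delta<\theta/4$) rescues it.

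The point you are missing is that $|F_1[0]|$ depends only on the neighbourhoods of the \emph{infected} seed individuals: swapping the tests joined by a healthy individual does not change any test outcome. Hence the martingale really has at most $|V_1[1]\cup\cdots\cup V_1[s]|\leq k$ nontrivial increments, each bounded by $D$, and the denominator should be $2kD^2$ rather than $2(sn/\ell)D^2$. With this correction (which is exactly what the paper does), taking e.g.\ $t=k^{2/3}$ gives exponent $k^{1/3}/O(\log^2 n)=n^{\Omega(1)}$ for every $\theta\in(0,1)$, and $t/|F[0]|=n^{-\Omega(1)}$ as required. Once you make this one-line fix your proof is identical to the paper's.
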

\begin{proof}
By \Prop~\ref{Prop_basic} we may condition on the event $\cE$ that $|V_1[1]\cup\cdots\cup V_1[s]|=\frac{ks}\ell(1+O(n^{-\Omega(1)}))$.
Hence, \eqref{eqax0} implies that given $\cE$ the expected number of infected individuals in a test $a\in F[0]$ comes to
\begin{align}\label{eqLemma_F0pos1}
\Erw[|\partial a\cap V_1|\mid\cE]=\log 2+O(n^{-\Omega(1)}).
\end{align}
Moreover, since individuals join tests independently, $|\partial a\cap V_1|$ is a binomial random variable.
Hence, \eqref{eqLemma_F0pos1} implies $\pr[\partial a\cap V_1=\emptyset\mid\cE]=\frac12+O(n^{-\Omega(1)})$.
Consequently, since $\pr\brk\cE=1-o(n^{-2})$ by \Prop~\ref{Prop_basic},
\begin{align}\label{eqLemma_F0pos2}
\Erw|F_1\cap F[0]|=\Erw|F_1[0]|&=\frac{|F[0]|}2(1+O(n^{-\Omega(1)})).
\end{align}
Finally, changing the set $\partial x$ of neighbours of an infected individual can shift $|F_1[0]|$ by at most $D$.
Therefore, the Azuma--Hoeffding inequality implies that
\begin{align}\label{eqLemma_F0pos3}
\pr\brk{||F_1[0]|-\Erw|F_1[0]||>t}&\leq2\exp\bc{-\frac{t^2}{2D^2k}}&&\mbox{for any }t>0.
\end{align}
Since $D=O(\log n)$, combining \eqref{eqLemma_F0pos2} and \eqref{eqLemma_F0pos3} and setting, say, $t=k^{2/3}$ completes the proof.
\end{proof}

As an application of \Lem~\ref{Lemma_F0pos} we show that \whp\ every seed individual $x$ appears in a test $a\in F[0]$ whose other individuals are all healthy.

\begin{corollary}\label{Cor_F0pos}
\Whp\ every individual $x\in V[1]\cup\cdots\cup V[s]$ appears in a test $a\in F[0]\cap\partial x$ such that $\partial a\setminus\cbc x\subset V_0$.
\end{corollary}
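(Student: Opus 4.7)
The plan is to fix a seed individual $x\in V[1]\cup\cdots\cup V[s]$, upper bound the probability that no test in $F[0]\cap\partial x$ is \emph{clean} (i.e.\ satisfies $\partial a\setminus\cbc x\subset V_0$) by $n^{-c}$ for some constant $c>1$, and then take a union bound over the $O(n)$ seed individuals.

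First I would condition on $\SIGMA$ and introduce $I=(V_1\cap(V[1]\cup\cdots\cup V[s]))\setminus\cbc x$, the infected seed individuals other than $x$, together with $U=\bigcup_{y\in I}(\partial y\cap F[0])$, the set of tests in $F[0]$ containing at least one member of $I$. By \textbf{SC2} the sets $\partial y\cap F[0]$ for distinct seed individuals $y$ are mutually independent uniform $D$-subsets of $F[0]$ with $D=\lceil 10\log(2)\log n\rceil$. Since $x$'s own test choice is independent of $U$ and uniform among $D$-subsets of $F[0]$, we obtain the key inequality
\begin{align*}
\pr\brk{\partial x\cap F[0]\subset U\mid U}=\binom{|U|}{D}\binom{|F[0]|}{D}^{-1}\leq\bc{|U|/|F[0]|}^D.
\end{align*}

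Next I would evaluate $\Erw|U|/|F[0]|=1-(1-D/|F[0]|)^{|I|}$ using Proposition~\ref{Prop_basic}(i) and the definitions of $D$ and $|F[0]|$; a Taylor expansion analogous to the one in the proof of Lemma~\ref{Lemma_F0pos} gives $1/2+O(n^{-\Omega(1)})$. Because $|U|$ is a function of the $|I|=O(k)$ mutually independent uniform subsets $\cbc{\partial y\cap F[0]}_{y\in I}$ and altering any one of them shifts $|U|$ by at most $D=O(\log n)$, Azuma--Hoeffding yields $\pr\brk{|U|/|F[0]|>1/2+n^{-\alpha}}\leq\exp(-n^{\Omega(1)})$ for any constant $0<\alpha<\theta/2$. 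Combining these estimates,
\begin{align*}
\pr\brk{x\text{ has no clean test in }F[0]}\leq\bc{\tfrac12+n^{-\alpha}}^D+\exp(-n^{\Omega(1)})=2^{-D}(1+o(1))=n^{-10\log^2 2+o(1)}.
\end{align*}

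Since $10\log^2 2>4$ and $|V[1]\cup\cdots\cup V[s]|\leq n$, a union bound over the seed yields $\pr\brk{\exists x:x\text{ has no clean test}}\leq n^{1-10\log^2 2+o(1)}=o(1)$, as required. I expect the main technical obstacle to be calibrating the slack $n^{-\alpha}$: it must be small enough that $(1+2n^{-\alpha})^D=1+o(1)$ (forcing $\alpha$ to be a positive constant so that $n^{-\alpha}D=o(1)$), yet large enough that the Azuma tail does not swamp the target $n^{-10\log^2 2}$. Since $|F[0]|$ is polynomial in $n$ while the Lipschitz constant $D$ is only $O(\log n)$, any sufficiently small positive $\alpha$ works and the argument closes.
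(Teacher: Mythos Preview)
Your argument is correct and follows essentially the same two-round exposure strategy as the paper: reveal the tests joined by all infected seed individuals other than $x$, show by concentration that roughly half of $F[0]$ remains clean, and then bound the probability that $x$'s $D$ independent choices all miss the clean set by $(1/2+o(1))^D=o(1/n)$, concluding with a union bound. The only cosmetic difference is that the paper packages the concentration step as a citation of \Lem~\ref{Lemma_F0pos} (adjusted by $O(\log n)$ for the removal of $x$), whereas you recompute $\Erw|U|$ and apply Azuma--Hoeffding directly; both routes yield the same $|U|/|F[0]|=\tfrac12+O(n^{-\Omega(1)})$ with probability $1-\exp(-n^{\Omega(1)})$.
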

\begin{proof}
We expose the random bipartite graph induced on $V[1]\cup\cdots\cup V[s]$ and $F[0]$ in two rounds.
In the first round we expose $\SIGMA$ and all neighbourhoods $(\partial y)_{y\in (V[1]\cup\cdots\cup V[s])\setminus\cbc x}$.
In the second round we expose $\partial x$.
Let $\vX$ be the number of negative tests $a\in F[0]$ after the first round.
Since $x$ has degree $D=O(\log n)$, \Lem~\ref{Lemma_F0pos} implies that $\vX=|F[0]|(\frac12+O(n^{-\Omega(1)}))$ \whp\
Furthermore, given $\vX$ the number of tests $a\in\partial x$ all of whose other individuals are uninfected has distribution $\Hyp(|F[0]|,\vX,D)$.
Hence,
\begin{align}\label{eqCor_F0pos}
\pr\brk{\forall a\in\partial x:V_1\cap\partial a\setminus\cbc x\neq\emptyset\mid\vX}&=\binom{|F[0]|-\vX}{D}\binom{|F[0]|}{D}^{-1}\leq\exp(-D\vX/|F[0]|).
\end{align}
Assuming $\vX/|F[0]|=\frac12+O(n^{-\Omega(1)})$ and recalling that $D=\lceil 10\log(2)\log n\rceil$, we obtain  $\exp(-D\vX/|F[0]|)=o(1/n)$.
Thus, the assertion follows from \eqref{eqCor_F0pos} and the union bound.
\end{proof}

\begin{proof}[Proof of \Prop~\ref{prop_seed}]
Due to \Cor~\ref{Cor_F0pos} we may assume that for every $x\in V[1]\cup\cdots\cup V[s]$ there is a test $a_x\in F[0]$ such that $\partial a_x\setminus\cbc x\subset V_0$.
Hence, recalling the \DD\ algorithm from \Sec~\ref{Sec_random_bip}, we see that the first step {\bf DD1} will correctly identify all healthy individuals $x\in V_0[1]\cup\cdots\cup V_0[s]$.
Moreover, the second step {\bf DD2} will correctly classify all remaining individuals $V_1[1]\cup\cdots\cup V_1[s]$ as infected, and the last step {\bf DD3} will be void.
\end{proof}

\subsection{Proof of \Lem~\ref{lemma_v0+}}\label{Sec_lemma_v0+}
Let $\cE$ be the event that properties (i) and (iii) from \Prop~\ref{Prop_basic} hold; then $\pr\brk\cE=1-o(n^{-2})$.
Moreover, let $\fE$ be the $\sigma$-algebra generated by $\SIGMA$ and the neighbourhoods $(\partial x)_{x\in V_1}$.
Then the event $\cE$ is $\fE$-measurable while the neighbourhoods $(\partial x)_{x\in V_0}$ of the healthy individuals are independent of $\fE$.
Recalling from {\bf SC1} that the individuals $x\in V_0[i]$ choose $\Delta/s$ random tests in each of the compartments $F[i+j]$, $0\leq j\leq s-1$ independently and remembering that $x\in V_{0+}[i]$ iff none of these tests is negative, on $\cE$ we obtain
\begin{align}\nonumber
\pr\brk{x\in V_{0+}[i]\mid \fE}&=\binom{m/(2\ell)+O(\sqrt m\log^3n)}{\Delta/s}^s\binom{m/\ell}{\Delta/s}^{-s}
=\bcfr{1+O(m^{-1/2}\ell\log^3n)}{2}^\Delta\\
&=2^{-\Delta}+O(m^{-1/2}\Delta\ell\log^3n)=2^{-\Delta}(1+O(n^{-\theta/2}\log^4n))&&\mbox{[due to \eqref{eqell} and \eqref{eqDelta}].}
\label{eqlemma_v0+1}
\end{align}
Because all $x\in V_0[i]$ choose their neighbourhoods independently, \eqref{eqlemma_v0+1} implies that the conditional random variable $|V_{0+}[i]|$ given $\fE$ has distribution $\Bin(|V_0[i]|,2^{-\Delta}(1+O(n^{-\Omega(1)})))$.
Therefore, since on $\cE$ we have $|V_0[i]|=|V[i]|+O(n^\theta)=n/\ell+O(n^\theta)$, the assertion follows from the Chernoff bound from \Lem~\ref{lem_chernoff}.

\subsection{Proof of \Lem~\ref{lem_1dev}}\label{Sec_lem_1dev}
The aim is to estimate the weighted sum $\vW_x^\star$ for infected individuals $x\in V[i+1]$ with $s\leq i<\ell$.
These individuals join tests in the $s$ compartments $F[i+j]$, $j\in[s]$.
Conversely, for each such $j$ the tests $a\in F[i+j]$ recruit their individuals from the compartments $V[i+j-s+1],\ldots,V[i+j]$.
Thus, the compartments preceding $V[i+1]$ that the tests in $F[i+j]$ draw upon are $V[h]$ with $i+j-s<h\leq i$.
We begin by investigating the set $\cW_{i,j}$ of tests $a\in F[i+j]$ without an infected individual from these compartments, i.e.,
\begin{align*}
\cW_{i,j}={\cbc{a\in F[i+j]:(V_1[1]\cup\cdots\cup V_1[i])\cap\partial a=\emptyset}}={\cbc{a\in F[i+j]:\bigcup_{i+j-s+1<h\leq i}V_1[h]\cap\partial a=\emptyset}}.
\end{align*}

\begin{claim}\label{Claim_Qxj}
With probability $1-o(n^{-2})$ for all $s\leq i<\ell$, $j\in[s]$ we have $|\cW_{i,j}|=2^{-(s-j)/s}\frac m\ell(1+O(n^{-\Omega(1)}))$.
\end{claim}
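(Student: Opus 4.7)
My plan is to prove Claim~\ref{Claim_Qxj} by a mean computation followed by Azuma--Hoeffding concentration and a union bound over the $O(\log n)$ pairs $(i,j)$. The key inputs are the independence structure of {\bf SC1} together with the count estimates from \Prop~\ref{Prop_basic}(i).

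First I would condition on the event $\cE$ that $|V_1[h]|=(k/\ell)(1+O(n^{-\Omega(1)}))$ uniformly for all $h\in[\ell]$; by \Prop~\ref{Prop_basic}(i) this event has probability $1-o(n^{-2})$. Fix $s\le i<\ell$ and $j\in[s]$, and set $I=I_{i,j}=\{i+j-s+1,\ldots,i\}$, so that $|I|=s-j$. A test $a\in F[i+j]$ lies in $\cW_{i,j}$ iff no infected individual from $\bigcup_{h\in I}V_1[h]$ joins $a$. Since distinct individuals choose their tests independently by {\bf SC1}, \eqref{eqax} yields
\begin{align*}
\pr[a\in\cW_{i,j}\mid\SIGMA]=\prod_{h\in I}\bc{1-\frac{\Delta\ell}{ms}+O\bc{\frac{\Delta^2\ell^2}{m^2s^2}}}^{|V_1[h]|}.
\end{align*}
Taking logarithms and plugging in \eqref{eqDelta} together with $|V_1[h]|=(k/\ell)(1+O(n^{-\Omega(1)}))$, each factor simplifies to $2^{-1/s}(1+O(n^{-\Omega(1)}))$, because $(k/\ell)\cdot\Delta\ell/(ms)\to(\ln 2)/s$ with additive error $O(n^{-\Omega(1)})$ in the exponent. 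Since $s=O(\log\log n)$, the $(s-j)$-fold product remains $2^{-(s-j)/s}(1+O(n^{-\Omega(1)}))$ uniformly in $(i,j)$, and multiplying by $|F[i+j]|=m/\ell$ yields the claimed value for $\Erw[|\cW_{i,j}|\mid\SIGMA]$.

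For the concentration step I would expose the tests chosen in $F[i+j]$ by the infected individuals in $\bigcup_{h\in I}V_1[h]$ one at a time (these are the only random choices relevant to $|\cW_{i,j}|$). Resampling any single such individual's set of $\Delta/s$ tests in $F[i+j]$ changes $|\cW_{i,j}|$ by at most $2\Delta/s$ (the bound coming from the symmetric difference of the old and new neighbourhood), and on $\cE$ there are at most $sk/\ell$ such individuals. Azuma--Hoeffding therefore gives
\begin{align*}
\pr\brk{\abs{|\cW_{i,j}|-\Erw|\cW_{i,j}|}\ge t\mid\SIGMA}\le 2\exp\bc{-\frac{t^2 s\ell}{8k\Delta^2}}.
\end{align*}
Choosing $t=(m/\ell)n^{-\eta}$ for a small $\eta\in(0,\theta/2)$ makes the exponent $n^{\Omega(1)}$, which easily absorbs the $O(\log n)$ penalty from union-bounding over the $\ell s=O(\log n)$ pairs $(i,j)$. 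I do not anticipate a serious obstacle: the only point requiring some care is verifying that the $s$-fold product of error factors $1+O(n^{-\Omega(1)})$ remains of the same form, which is routine since $s\cdot n^{-\eta}=o(n^{-\eta/2})$ as $s=O(\log\log n)$.
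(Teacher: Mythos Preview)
Your proposal is correct and follows essentially the same route as the paper: condition on the event from \Prop~\ref{Prop_basic}(i), compute the mean of $|\cW_{i,j}|$ via the independence in {\bf SC1} and \eqref{eqax}, then concentrate by Azuma--Hoeffding with respect to the neighbourhoods of the infected individuals and take a union bound over the $O(\ell s)$ pairs $(i,j)$. The only cosmetic differences are that the paper uses the cruder Lipschitz bound $\Delta$ taken over all $k$ infected individuals (versus your sharper $2\Delta/s$ over the $O(sk/\ell)$ relevant ones) and sets $t=\sqrt m\log^2 n$; both choices comfortably yield the $1-o(n^{-2})$ probability and the $1+O(n^{-\Omega(1)})$ relative error.
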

\begin{proof}
We may condition on the event $\cE$ that (i) from \Prop~\ref{Prop_basic} occurs.
To compute the mean of $|\cW_{i,j}|$ fix a test $a\in F[i+j]$ and an index $i+j-s<h\leq i$.
Then \eqref{eqax} shows that the probability that a fixed individual $x\in V[h]$ joins $a$ equals
$\pr\brk{x\in\partial a}=\frac{\Delta\ell}{ms}(1+O(\frac{\Delta\ell}{ms}))$.
Hence, the choices \eqref{eqell} and \eqref{eqDelta} of $\Delta$ and $\ell$  and the assumption $m=\Theta(k\log n)$ ensure that
\begin{align}\nonumber
\Erw\brk{\abs{(V_1[i+j-s+1]\cup\cdots\cup V_1[i])\cap\partial a}\mid\cE}&=\bc{s-j}\bc{\frac{\Delta\ell}{ms}\cdot\frac{k}{\ell}+
	O\bcfr{\Delta^2k}{m^2s^2}+O\bcfr{\Delta\ell\sqrt k\log n}{ms}}\\
	&=\frac{s-j}s\log 2+O(n^{-\Omega(1)}).\label{eqClaim_Amin1_1}
\end{align}
Since by {\bf SC1} the events $\{x\in\partial a\}_x$ are independent,  $|V_1[h]\cap\partial a|$ is a binomial random variable for every $h$ and all these random variables $(|V_1[h]\cap\partial a|)_h$ are mutually independent.
Therefore, \eqref{eqClaim_Amin1_1} implies that
\begin{align}\label{eqClaim_Amin1_1a}
\pr\brk{(V_1[i+j-s+1]\cup\cdots V_1[i])\cap\partial a=\emptyset\mid\cE}=2^{-(s-j)/s}+O(n^{-\Omega(1)}).
\end{align}
Hence,
\begin{align}\label{eqClaim_Amin1_2}
\Erw\brk{|\cW_{i,j}|\mid\cE}&=\sum_{a\in F[i+j]}
\pr\brk{(V_1[i+j-s+1]\cup\cdots\cup V_1[i])\cap\partial a=\emptyset\mid\cE}=\frac m\ell2^{-(s-j)/s}(1+O(n^{-\Omega(1)})).
\end{align}
Finally, changing the neighbourhood $\partial x$ of one infected individual $x\in V_1$ can alter $|\cW_{i,j}|$ by at most $\Delta$.
Therefore, the Azuma--Hoeffing inequality shows that for any $t>0$,
\begin{align}\label{eqClaim_Amin1_3}
\pr\brk{\abs{|\cW_{i,j}|-\Erw[|\cW_{i,j}|\mid\cE]}>t\mid\cE}&\leq2\exp\bc{-\frac{t^2}{2k\Delta^2}}.
\end{align}
Combining \eqref{eqClaim_Amin1_2} and \eqref{eqClaim_Amin1_3}, applied with $t=\sqrt m\log^2 n$, and taking a union bound on $i,j$ completes the proof.
\end{proof}

As a next step we use Claim~\ref{Claim_Qxj} to estimate the as yet unexplained tests counts $\vW_{x,j}$ from \eqref{eqWxj}.

\begin{claim} \label{Claim_unexp}
For all $s \leq i < \ell$, $x \in V_1[i+1]$ and $j \in [s]$ we have
\begin{align*}
\pr\brk{\wxj xj<(1-\eps/2)2^{j/s-1}\Delta/s}\leq\exp\bc{-\frac{\Omega(\log n)}{(\log\log n)^4}}.
\end{align*}
\end{claim}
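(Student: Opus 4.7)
The plan is to reduce $\wxj{x}{j}$ to a hypergeometric random variable by conditioning on an appropriately chosen $\sigma$-algebra, and then invoke the Chernoff bound of \Lem~\ref{lem_hyperchernoff}. Concretely, let $\fE$ be the $\sigma$-algebra generated by $\SIGMA$ together with the neighborhoods $(\partial y)_{y \in V_1 \setminus \cbc x}$ of all infected individuals other than $x$. Because $x \in V_1[i+1]$ is disjoint from $V_1[1]\cup\cdots\cup V_1[i]$, the set $\cW_{i,j}$ of tests in $F[i+j]$ that avoid the infected individuals from the earlier compartments is $\fE$-measurable. On the other hand, by {\bf SC1} the $(\Delta/s)$-element set $\partial x \cap F[i+j]$ is drawn independently of every other individual's choices, so given $\fE$ it is still uniformly distributed over $(\Delta/s)$-subsets of $F[i+j]$.

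The first step will be to invoke Claim~\ref{Claim_Qxj} to pin down the size of $\cW_{i,j}$: with probability $1-o(n^{-2})$ the $\fE$-measurable event
\[
\cG = \cbc{|\cW_{i,j}| = 2^{-(s-j)/s}(m/\ell)(1 + O(n^{-\Omega(1)}))}
\]
holds. On $\cG$, the conditional law of $\wxj{x}{j} = |\partial x \cap F[i+j] \cap \cW_{i,j}|$ given $\fE$ is $\Hyp(|F[i+j]|, |\cW_{i,j}|, \Delta/s)$, whose mean equals $(\Delta/s)\,|\cW_{i,j}|/|F[i+j]| = 2^{j/s-1}(\Delta/s)(1+O(n^{-\Omega(1)}))$.

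To conclude, I will apply \Lem~\ref{lem_hyperchernoff} with parameters $p = |\cW_{i,j}|/|F[i+j]| = 2^{-(s-j)/s}(1+o(1))$ and $q = (1-\eps/2)\,2^{-(s-j)/s}$, so that $q<p$ for large $n$. Since both $p$ and $q$ lie in a compact subinterval of $(0,1)$ bounded away from the endpoints, the Kullback--Leibler divergence satisfies $\KL{q}{p} = \Omega(\eps^2)$. Recalling from \eqref{eqs} and \eqref{eqDelta} that $\Delta/s = \Theta(\log n / \log\log n)$, this yields on $\cG$
\[
\pr\brk{\wxj{x}{j} < (1-\eps/2)\,2^{j/s-1}\Delta/s \mid \fE} \leq \exp\bc{-\Omega(\eps^2 \log n / \log\log n)},
\]
which together with the $o(n^{-2})$ probability of $\cG^c$ comfortably dominates the target bound $\exp(-\Omega(\log n)/(\log\log n)^4)$.

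The main obstacle is the independence bookkeeping: pinpointing a $\sigma$-algebra $\fE$ that simultaneously determines $\cW_{i,j}$ and yet leaves the marginal of $\partial x \cap F[i+j]$ uniform. This subtlety is exactly what the independent test-recruitment mechanism of {\bf SC1} is designed to accommodate; once $\fE$ is pinned down, the remainder is a routine application of the hypergeometric Chernoff inequality.
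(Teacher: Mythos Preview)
Your proposal is correct and follows essentially the same approach as the paper: condition on a $\sigma$-algebra that renders $\cW_{i,j}$ measurable while leaving $\partial x\cap F[i+j]$ uniformly random, identify $\wxj{x}{j}$ as hypergeometric, and apply the Chernoff bound of \Lem~\ref{lem_hyperchernoff}. Your choice $\fE=\sigma(\SIGMA,(\partial y)_{y\in V_1\setminus\{x\}})$ differs cosmetically from the paper's $\fE=\sigma(\SIGMA,(\partial y)_{y\in V[1]\cup\cdots\cup V[i]})$, but both work for the same reason.

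One small imprecision: your claim that $p=2^{-(s-j)/s}$ stays in a compact subinterval of $(0,1)$ is not literally true, since for $j=s-1$ we have $p=2^{-1/s}\to 1$ as $s\to\infty$. This does not hurt the conclusion: $p\geq 1/2$ for all $j\in[s]$, so $p-q\geq(\eps/2)p-o(1)\geq\eps/4-o(1)$, and Pinsker's inequality gives $\KL{q}{p}\geq 2(p-q)^2=\Omega(\eps^2)$ uniformly in $j$. (In fact as $p\to 1$ the divergence only grows.) With that fix the bound $\exp(-\Omega(\eps^2)\Delta/s)=\exp(-\Omega(\log n/\log\log n))$ goes through and dominates the stated target.
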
			
\begin{proof}
Fix a pair of indices $i,j$ and an individual $x\in V_1[i+1]$.
We also condition on the event $\cE$ that (i) from \Prop~\ref{Prop_basic} occurs.
Additionally, thanks to Claim~\ref{Claim_Qxj} we may condition on the event
\begin{align*}
\cE'&=\cbc{|\cW_{i,j}|=2^{-(s-j)/s}\frac m\ell(1+O(n^{-\Omega(1)}))}.
\end{align*}
Further, let $\fE$ be the $\sigma$-algebra generated by $\SIGMA$ and by the neighbourhoods $(\partial y)_{y\in V[1]\cup\cdots\cup V[i]}$.
Recall from {\bf SC1} that $x$ simply joins $\Delta/s$ random tests in compartment $F[i+j]$, independently of all other individuals, and remember from \eqref{eqWxj} that $\wxj xj$ counts tests $a\in \cW_{i+j}\cap\partial x$.
Therefore, since the events $\cE,\cE'$ and the random variable $|\cW_{i,j}|$ are $\fE$-measurable while $\partial x$ is independent of $\fE$,  given $\fE$ the random variable $\wxj xj$ has a hypergeometric distribution $\Hyp(m/\ell,|\cW_{i,j}|,\Delta/s)$.
Thus, the assertion follows from the hypergeometric Chernoff bound from \Lem~\ref{lem_hyperchernoff} and the choice \eqref{eqwj} of $\zeta$.
\end{proof}
			
\begin{proof}[Proof of \Lem~\ref{lem_1dev}]
Since $\wxstar x=\sum_{j=1}^s w_j \wxj xj$, the lemma is an immediate consequence of Markov's inequality and Claim~\ref{Claim_unexp}.
\end{proof}
			
\subsection{Proof of \Lem~\ref{lem_large_deviation0+}}\label{Sec_lem_large_deviation0+}
We need to derive the rate functions of the random variable $\wxj xj$ that count as yet unexplained tests for $x \in \zeroplus[i+1]$.
To this end we first investigate the set of positive tests in compartment $i+j$ that do not contain any infected individuals from the first $i$ compartments.
In symbols,
\begin{align*}
	\cP_{i+1,j}&=\cbc{a\in F_1[i+j]:\partial a\cap(V_1[1]\cup\cdots\cup V_1[i])=\emptyset}&&(s\leq i<\ell,\,j\in[s]).
\end{align*}

\begin{claim}\label{lem_U}
\Whp\ for all $s\leq i<\ell,j\in[s]$ we have $|\cP_{i+1,j}| = \bc{1+O(n^{-\Omega(1)})} \bc{2^{j/s}-1}\frac m{2\ell}$.
\end{claim}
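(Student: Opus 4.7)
The plan is to express $|\cP_{i+1,j}|$ as a difference of two quantities we already control, namely $|\cW_{i,j}|$ from Claim~\ref{Claim_Qxj} and $|F_0[i+j]|$ from Proposition~\ref{Prop_basic}(iii). The key set-theoretic observation is that every completely negative test $a\in F_0[i+j]$ trivially contains no infected individual from the first $i$ compartments, so $F_0[i+j]\subseteq\cW_{i,j}$. Conversely, if $a\in\cW_{i,j}$ then either $a$ contains no infected individual at all, placing it in $F_0[i+j]$, or the infected individuals in $a$ all come from $V[i+1]\cup\cdots\cup V[i+j]$, in which case $a$ is positive and hence lies in $\cP_{i+1,j}$. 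Thus we have the disjoint decomposition
\begin{align*}
\cW_{i,j}=\cP_{i+1,j}\sqcup F_0[i+j],
\qquad\text{so that}\qquad
|\cP_{i+1,j}|=|\cW_{i,j}|-|F_0[i+j]|.
\end{align*}

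Plugging in the two bounds, Claim~\ref{Claim_Qxj} yields $|\cW_{i,j}|=(1+O(n^{-\Omega(1)}))\,2^{j/s-1}\frac{m}{\ell}$ and Proposition~\ref{Prop_basic}(iii) yields $|F_0[i+j]|=\frac{m}{2\ell}+O(\sqrt{m}\ln^{3}n)=(1+O(n^{-\Omega(1)}))\frac{m}{2\ell}$, each with probability $1-o(n^{-2})$. Subtracting, the main terms combine to
\begin{align*}
\frac{m}{\ell}\bc{2^{j/s-1}-\tfrac12}=\frac{m}{2\ell}\bc{2^{j/s}-1},
\end{align*}
which is the claimed leading order. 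A union bound over the $O(\ell s)=O(\ln n)$ choices of $(i,j)$ then gives the statement w.h.p.

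The only delicate point, and what I would flag as the main thing to verify, is that the relative error remains $O(n^{-\Omega(1)})$ even when $j$ is small, because the subtraction $|\cW_{i,j}|-|F_0[i+j]|$ is close to a cancellation: for $j=1$ we have $2^{1/s}-1=\Theta(1/s)$. However, since $s=\lceil\ln\ln n\rceil$ by \eqref{eqs} and $m=\Theta(n^\theta\ln n)$, the main term is $\frac{m}{2\ell}(2^{j/s}-1)=\Omega(n^\theta\ln^{1/2}n/\ln\ln n)$, whereas the absolute errors in $|\cW_{i,j}|$ and $|F_0[i+j]|$ are both at most $O(\sqrt m\,\ln^{O(1)}n)=O(n^{\theta/2}\ln^{O(1)}n)$. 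The ratio is therefore $n^{-\theta/2}\ln^{O(1)}n\cdot\ln\ln n=n^{-\Omega(1)}$, as required. All remaining steps are straightforward bookkeeping.
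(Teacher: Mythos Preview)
Your proof is correct and takes a genuinely different route from the paper's. The paper argues from scratch: it directly computes the probability that a fixed test $a\in F[i+j]$ satisfies both $\partial a\cap(V_1[i+j-s+1]\cup\cdots\cup V_1[i])=\emptyset$ and $\partial a\cap(V_1[i+1]\cup\cdots\cup V_1[i+j])\neq\emptyset$, combining \eqref{eqClaim_Amin1_1a} with an analogous calculation for the later compartments to get $\Pr[a\in\cP_{i+1,j}\mid\cE]=\frac{2^{j/s}-1}{2}+O(n^{-\Omega(1)})$, and then applies Azuma--Hoeffding for concentration. By contrast, you recognise the set identity $\cW_{i,j}=\cP_{i+1,j}\sqcup F_0[i+j]$ and simply subtract two quantities already controlled by Claim~\ref{Claim_Qxj} and \Prop~\ref{Prop_basic}(iii). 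Your approach is shorter and avoids repeating the expectation-plus-Azuma machinery; the paper's approach is more self-contained and makes the independence structure between the ``early'' and ``late'' compartments of a test explicit. Your handling of the near-cancellation at $j=1$ is careful and correct: since the main term $(2^{j/s}-1)\frac{m}{2\ell}=\Omega(m/(\ell s))$ still dominates the polylogarithmic-times-$\sqrt m$ absolute errors by a factor of $n^{\theta/2-o(1)}$, the $O(n^{-\Omega(1)})$ relative bound survives. One minor remark: when you assert that the absolute error in $|\cW_{i,j}|$ is $O(\sqrt m\ln^{O(1)}n)$, strictly speaking the \emph{statement} of Claim~\ref{Claim_Qxj} only guarantees a relative error $O(n^{-\Omega(1)})$; the specific $\sqrt m\log^2 n$ bound comes from inspecting its proof. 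This is harmless, since even the coarser relative bound, once multiplied by $s=O(\ln\ln n)$ to account for the cancellation, remains $O(n^{-\Omega(1)})$.
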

\begin{proof}
We may condition on the event $\cE$ that (i) from \Prop~\ref{Prop_basic} occurs.
As a first step we calculate the probability that $(V_1[i+1]\cup\cdots\cup V_1[i+j])\cap\partial a\neq\emptyset$ for a specific test $a\in F[i+j]$.
To this end we follow the steps of the proof of Claim~\ref{Claim_Qxj}.
Since by \eqref{eqax} a specific individual $x\in V[h]$, $i<h\leq i+j$, joins $a$ with probability $\pr\brk{x\in\partial a}=(\Delta\ell/(ms))(1+O(\Delta\ell/(ms)))$ and since given $\cE$ each compartment $V[h]$ contains $k/\ell+O(\sqrt{k/\ell}\log n)$ infected individuals, we obtain, in perfect analogy to \eqref{eqClaim_Amin1_1},
\begin{align}\label{eqClaim_Amin2_1}
\Erw\brk{\abs{(V_1[i+1]\cup\cdots\cup V_1[i+j])\cap\partial a}\mid\cE}&=\frac{j}{s}\log 2+O(n^{-\Omega(1)}).
\end{align}
Since the individuals $x\in V[i+1]\cup\cdots\cup V[i+j]$ join tests independently, \eqref{eqClaim_Amin2_1} implies that
\begin{align}\label{eqClaim_Amin2_2}
\pr\brk{(V_1[i+1]\cup\cdots\cup V_1[i+j])\cap\partial a\neq\emptyset\mid\cE}&=1-2^{-j/s}+O(n^{-\Omega(1)}).
\end{align}
Furthermore, we already verified in \eqref{eqClaim_Amin1_1a} that
\begin{align}\label{eqClaim_Amin2_2a}
\pr\brk{(V_1[i+j-s+1]\cup\cdots V_1[i])\cap\partial a=\emptyset\mid\cE}=2^{-(s-j)/s}+O(n^{-\Omega(1)}).
\end{align}
Because the choices for the compartments $V[i+j-s+1]\cup\cdots\cup V[i+j]$ from which $a$ draws its individuals are mutually independent, we can combine \eqref{eqClaim_Amin2_2} with \eqref{eqClaim_Amin2_2a} to obtain
\begin{align}
\pr\brk{\bigcup_{i+j-s<h\leq i} V_1[h]\cap\partial a=\emptyset\neq\bigcup_{i<h\leq i+j}V_1[h]\cap\partial a\mid\cE}
&=\frac{2^{j/s}-1}2+O(n^{-\Omega(1)}).
\label{eqClaim_Amin2_2b}
\end{align}
Further, \eqref{eqClaim_Amin2_2b} implies
\begin{align}\label{eqClaim_Amin2_3}
\Erw\brk{|\cP_{i+1,j}|\mid\cE}=
\Erw\brk{\abs{\cbc{a\in F_1[i+j]:
\bigcup_{h\leq i} V_1[h]\cap\partial a=\emptyset\neq\bigcup_{i<h}V_1[h]\cap\partial a}}
\mid\cE}&=(2^{j/s}-1)\frac{m}{2\ell}\bc{1+O(n^{-\Omega(1)})}.
\end{align}
Finally, altering the neighbourhood $\partial x$ of any infected individual can shift $|\cP_{i+1,j}|$  by at most $\Delta$.
Therefore, the Azuma--Hoeffding inequality implies that
\begin{align}\label{eqClaim_Amin2_4}
\pr\brk{\abs{|\cP_{i+1,j}|-\Erw[|\cP_{i+1,j}|\mid\cE]}>t\mid\cE}&\leq2\exp\bc{-\frac{t^2}{2k\Delta^2}}.
\end{align}
Thus, the assertion follows from \eqref{eqDelta}, \eqref{eqClaim_Amin2_3} and \eqref{eqClaim_Amin2_4} by setting $t=\sqrt m\log^2 n$.
\end{proof}

Thanks to \Prop~\ref{Prop_basic} (iii) and \Lem~\ref{lem_U} in the following we may condition on the event
\begin{align}
\cU=\cbc{\forall s<i\leq \ell,j\in[s]:
|F_1[i+j]|=\bc{1 + O(n^{-\Omega(1)})}\frac m{2\ell}\wedge
|\cP_{i+1,j}| = \bc{1 + O(n^{-\Omega(1)})} \bc{2^{j/s}-1}\frac m{2\ell}}. \label{def_event_U}
\end{align}
As a next step we will determine the conditional distribution of $\wxj xj$ for $x\in V_{0+}[i+1]$ given $\cU$.
			
\begin{claim} \label{Lemma_wxj_xj}
Let $s<i\leq\ell$ and $j\in[s]$.
Given $\cU$ for every $x \in V_{0+}[i+1]$ we have
\begin{align}\label{eqLemma_wxj_xj}
\wxj xj \sim \Hyp \bc{\bc{1 + O(n^{-\Omega(1)})}\frac m{2\ell},\bc{1 + O(n^{-\Omega(1)})} \bc{2^{j/s}-1}\frac m{2\ell},\frac\Delta s}.
\end{align}
\end{claim}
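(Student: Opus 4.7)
\medskip

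\noindent\textbf{Proof plan for Claim~\ref{Lemma_wxj_xj}.}
The argument is a conditional independence computation organised around the observation that a \emph{healthy} individual $x$ contributes nothing to the test outcomes, so the randomness in $\partial x$ is independent of the statistics used to define $\cU$. Let $\fE$ denote the $\sigma$-algebra generated by $\SIGMA$ together with the neighbourhoods $(\partial y)_{y\in V_1}$ of all infected individuals. Because $x\in V_0$, the vector $\hat\SIGMA$, the sets $F_1[i+j]$, the set $\cW_{i,j}$ of tests without prior infections, and hence $\cP_{i+1,j}=\cW_{i,j}\cap F_1[i+j]$ are all $\fE$-measurable; so is the event $\cU$. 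Moreover, by~{\bf SC1} the neighbourhood $\partial x$ is independent of $\fE$, and $\partial x\cap F[i+j']$ is a uniformly random $(\Delta/s)$-subset of $F[i+j']$, with these subsets for distinct $j'\in[s]$ being mutually independent.

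The key step will be to argue that conditioning on $x\in V_{0+}[i+1]$ leaves $\partial x\cap F[i+j]$ uniformly distributed over $(\Delta/s)$-subsets of $F_1[i+j]$. By the definition of $V_{0+}[i+1]$, the event $\{x\in V_{0+}[i+1]\}$ equals $\bigcap_{j'\in[s]}\{\partial x\cap F[i+j']\subseteq F_1[i+j']\}$. Since each factor depends only on the draw in the corresponding compartment and the draws across compartments are conditionally independent given $\fE$, the conditional distribution of $\partial x\cap F[i+j]$ factorises to uniform on $(\Delta/s)$-subsets of $F_1[i+j]$, as desired.

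Having identified the conditional law of $\partial x\cap F[i+j]$, I would finish by rewriting $\wxj xj$ as a hypergeometric count. Since $\partial x\cap F[i+j]\subseteq F_1[i+j]$ on the event $\{x\in V_{0+}[i+1]\}$, the defining formula~\eqref{eqWxj} gives
\begin{align*}
\wxj xj=\abs{\partial x\cap F[i+j]\cap\cW_{i,j}}=\abs{\partial x\cap F[i+j]\cap(\cW_{i,j}\cap F_1[i+j])}=\abs{\partial x\cap F[i+j]\cap\cP_{i+1,j}}.
\end{align*}
Thus $\wxj xj$ counts the special items in a uniform draw of size $\Delta/s$ from $F_1[i+j]$, where $\cP_{i+1,j}\subseteq F_1[i+j]$ is the special set. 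This is precisely $\Hyp(|F_1[i+j]|,|\cP_{i+1,j}|,\Delta/s)$. Substituting the bounds on $|F_1[i+j]|$ and $|\cP_{i+1,j}|$ guaranteed by $\cU$ in~\eqref{def_event_U} yields~\eqref{eqLemma_wxj_xj}.

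The main obstacle is the bookkeeping around the multi-compartment conditioning: one must check that conditioning on the full event $\{x\in V_{0+}[i+1]\}$, which couples the draws $\partial x\cap F[i+j']$ for all $j'\in[s]$, does not introduce spurious dependencies into the single compartment of interest. This is where the compartment-wise independence built into~{\bf SC1} is essential, and where the fact that $x$ is healthy (so that the $F_1[\cdot]$ and $\cP_{i+1,\cdot}$ are determined by $\fE$ alone) is used crucially.
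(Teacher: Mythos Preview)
Your proposal is correct and follows essentially the same route as the paper's proof: identify $\wxj xj$ as $|\partial x\cap\cP_{i+1,j}|$, argue that given $x\in V_{0+}[i+1]$ the set $\partial x\cap F[i+j]$ is a uniform $(\Delta/s)$-subset of $F_1[i+j]$, and read off the hypergeometric law with parameters supplied by $\cU$. Your version is in fact more careful than the paper's, which asserts the uniform-over-positive-tests conditional law in one sentence; you make explicit the $\sigma$-algebra $\fE$, the $\fE$-measurability of $F_1[\cdot]$ and $\cP_{i+1,\cdot}$, and the compartment-wise factorisation of the event $\{x\in V_{0+}[i+1]\}$ that justifies the conditioning.
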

\begin{proof}
By {\bf SC1} each individual $x \in V_{0+}[i+1]$ joins $\Delta/s$ positive test from $F[i+j]$, drawn uniformly without replacement.
Moreover, by \eqref{eqWxj} given $x \in V_{0+}[i+1]$ the random variable $\wxj xj$ counts the number of tests $a\in\cP_{i+1,j}\cap\partial x$.
Therefore, $\wxj xj\sim \Hyp(|F_1[i+j],|\cP_{i+1,j}|,\Delta/s)$.
Hence, given $\cU$ we obtain \eqref{eqLemma_wxj_xj}.
\end{proof}

\noindent
The estimate \eqref{eqLemma_wxj_xj} enables us to bound the probability that $\wxstar x$ gets `too large'.

\begin{claim}\label{lem_happen1}
Let
\begin{align*}
	\cM =&\min\frac1s\sum_{j=1}^{s-1} \vecone\cbc{z_j \geq 2^{j/s}-1} \KL{z_j}{2^{j/s}-1}\\
	 &\mbox{s.t.}\quad\sum_{j=1}^{s-1}\bc{z_j-(1-2\zeta)2^{j/s-1}} w_j = 0,\qquad z_1,\ldots,z_{s-1}\in[0,1].
\end{align*}
Then for all $s \leq i < \ell$ and all $x\in V[i+1]$ we have
\begin{align*}
	\pr\brk{\wxstar x > (1-2 \zeta)\frac{\Delta}{s} \wq \mid \cU,\,x\in \zeroplusi {i+1}} &\leq \exp(-(1+o(1))\cM \Delta).
\end{align*}
\end{claim}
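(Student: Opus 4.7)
The plan is to exploit independence of $(\wxj xj)_{j=1}^{s-1}$ and apply a standard large-deviations argument compartment by compartment. By construction \textbf{SC1}, every individual $x$ selects its $\Delta/s$ tests in each compartment $F[i+j]$ independently of its choices in the other compartments. Since the event $\cU$ is measurable with respect to the infection pattern and the neighbourhoods of individuals outside $V[i+1]$, and since the event $\cbc{x\in\zeroplus[i+1]}$ merely constrains each $\wxj xj$ to lie in the set $\cP_{i+1,j}$ of positive tests in $F[i+j]$ not hit by earlier infected individuals, conditional on $\cU$ and on $\cbc{x\in\zeroplus[i+1]}$ the family $(\wxj xj)_{j=1}^{s-1}$ consists of mutually independent hypergeometric random variables whose parameters are pinned down by Claim~\ref{Lemma_wxj_xj}, with mean $(1+o(1))(2^{j/s}-1)\Delta/s$.

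First, I would take a union bound over integer tuples $(n_1,\dots,n_{s-1})\in\cbc{0,1,\dots,\Delta/s}^{s-1}$ satisfying $\sum_{j=1}^{s-1} w_j n_j\geq (1-2\zeta)(\Delta/s)\wq$. There are at most $(\Delta/s+1)^{s-1}=\exp(O(s\log\Delta))=\exp(o(\Delta))$ such tuples, so the combinatorial prefactor is absorbed into the $1+o(1)$ in the exponent. For each factor I would apply the hypergeometric Chernoff bound from Lemma~\ref{lem_hyperchernoff}: writing $z_j=n_j/(\Delta/s)$ and $p_j=(1+o(1))(2^{j/s}-1)$, we have $\pr\brk{\wxj xj=n_j\mid\cU,x\in\zeroplus[i+1]}\leq\exp(-(\Delta/s)\KL{z_j}{p_j})$ whenever $z_j\geq p_j$, and simply the trivial bound $1$ otherwise. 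This is precisely where the indicator $\vecone\cbc{z_j\geq 2^{j/s}-1}$ in the definition of $\cM$ enters: downward deviations from the mean incur no exponential cost and are absorbed into the prefactor. Multiplying the per-compartment bounds yields, for each admissible tuple,
\begin{align*}
\prod_{j=1}^{s-1}\pr\brk{\wxj xj=n_j\mid\cU,x\in\zeroplus[i+1]}&\leq\exp\bc{-(\Delta/s)\sum_{j=1}^{s-1}\vecone\cbc{z_j\geq p_j}\KL{z_j}{p_j}}.
\end{align*}

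Finally, passing from the discrete minimization over integer tuples to the continuous infimum defining $\cM$ costs at most an $o(1)$ error in the exponent: the grid spacing is $s/\Delta=o(1)$, the KL-divergence is uniformly continuous on compact subsets of $(0,1)^2$ staying away from the boundary, and replacing the perturbed parameters $p_j=(2^{j/s}-1)(1+o(1))$ by $2^{j/s}-1$ costs at most another $o(1)$ term. Recalling $\cM\Delta=(\Delta/s)\sum_{j=1}^{s-1}\vecone\cbc{z_j\geq 2^{j/s}-1}\KL{z_j}{2^{j/s}-1}$, these ingredients combine to give
\begin{align*}
\pr\brk{\wxstar x>(1-2\zeta)(\Delta/s)\wq\mid\cU,x\in\zeroplus[i+1]}&\leq\exp(-(1+o(1))\cM\Delta),
\end{align*}
as claimed. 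The main technical subtlety will be verifying uniform continuity of the KL-divergence at the optimum of the variational problem defining $\cM$ and showing that the minimizer stays away from the boundary of $[0,1]^{s-1}$; lower semicontinuity and compactness handle the rest routinely, and no minimizer can sit at the boundary because infinite KL-cost there would contradict the feasibility of the mean point, which (with suitable choice of signs of the $w_j$) lies close to the constraint hyperplane.
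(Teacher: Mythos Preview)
Your approach is essentially the paper's: conditional independence of $(\wxj xj)_{j}$ from \textbf{SC1}, a union bound over at most $(\Delta+1)^s=\exp(o(\Delta))$ integer tuples, the hypergeometric Chernoff bound from \Lem~\ref{lem_hyperchernoff} applied via Claim~\ref{Lemma_wxj_xj}, and then the substitution $z_j=sy_j/\Delta$ to land on the variational problem $\cM$. The only remark is that your closing technical worries about the minimiser staying in the interior are unnecessary for this claim---the passage from the inequality constraint $\sum_j w_j z_j\geq(1-2\zeta)\sum_j w_j 2^{j/s-1}$ to the equality constraint defining $\cM$ follows directly from the monotonicity of $y\mapsto\Pr[\wxj xj\geq y]$, and the actual location of the optimiser only matters in the subsequent Claim~\ref{lem_lagrange}.
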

\begin{proof} 
Let $s \leq i < \ell$ and $x\in \zeroplusi {i+1}$.
Step {\bf SC1} of the construction of $\G$ ensures that the random variables $(\vW_{x,j})_{j\in[s]}$ are independent because the tests in the various compartments $F[i+j]$, $j\in[s]$, that $x$ joins are drawn independently.
Therefore, the definition \eqref{eqWstar} of $\wxstar x$ and \Lem~\ref{Lemma_wxj_xj} yield
\begin{align}
\Pr&\brk{ \wxstar x > (1-2 \zeta)\frac{\Delta}{s} \wq \mid \cU,\,x\in \zeroplusi {i+1}}=\pr\brk{\sum_{j=1}^{s-1}w_j\vW_{x,j}\geq \frac{1-2 \zeta}{s} \wq \mid \cU,\,x\in \zeroplusi {i+1}}
 \nonumber\\
	& \leq \sum_{y_1, \ldots, y_s=0}^{\Delta}
		\vecone\cbc{\sum_{j=1}^{s-1} w_j y_j\geq\frac{1-2 \zeta}{s} \wq }
		\prod_{j=1}^{s-1} \Pr[\vW_{x,j}\geq y_j \mid \cU,\,x\in \zeroplusi {i+1}].
\label{Eq_Lagrange_1}
\end{align}
Further, let
\begin{align*}
\cZ&=\cbc{(z_1,\ldots,z_{s-1})\in[0,1]^{s-1}:\sum_{j=1}^{s-1}\bc{z_j-(1-2\zeta)2^{j/s-1}} w_j = 0}.
\end{align*}
Substituting  $y_j=\Delta z_j/s$ in \eqref{Eq_Lagrange_1} and bounding the total number of summands by $(\Delta+1)^s$, we obtain
\begin{align}\label{Eq_Lagrange_2}
\Pr\brk{ \wxstar x > (1-2 \zeta)\frac{\Delta}{s} \wq \mid \cU,\,x\in \zeroplusi {i+1}}&\leq(\Delta+1)^s\max_{(z_1,\ldots,z_s)\in\cZ}\prod_{j=1}^{s-1}\Pr[\vW_{x,j}\geq \Delta z_j/s \mid \cU,\,x\in \zeroplusi {i+1}].
\end{align}
Moreover, Claim~\ref{Lemma_wxj_xj} and the Chernoff bound from \Lem~\ref{lem_hyperchernoff} yield
\begin{align*}
\Pr[\vW_{x,j}\geq \Delta z_j/s \mid \cU,\,x\in \zeroplusi {i+1}]
\leq\exp\bc{-\vecone\cbc{z_j\geq p_j}\frac\Delta s\KL{z_j}{p_j}}&&\mbox{where }p_j=2^{j/s}-1+O(n^{-\Omega(1)}).
\end{align*}
Consequently, since \eqref{eqDelta} and the assumption $m=\Theta(k\log n)$ ensure that $\Delta=\Theta(\log n)$, we obtain
\begin{align}\label{Eq_Lagrange_3}
\Pr[\vW_{x,j}\geq \Delta z_j/s \mid \cU,\,x\in \zeroplusi {i+1}]
\leq\exp\bc{-\vecone\cbc{z_j\geq 2^{j/s}-1}\frac\Delta s\KL{z_j}{2^{j/s}-1}+O(n^{-\Omega(1)})}.
\end{align}
Finally, the assertion follows from \eqref{Eq_Lagrange_2} and \eqref{Eq_Lagrange_3}.
\end{proof}
			
\noindent
As a next step we solve the optimisation problem $\cM$ from Claim~\ref{lem_happen1}.

\begin{claim} \label{lem_lagrange}
We have $\cM=1-\log2+O(\log(s)/s)$.
\end{claim}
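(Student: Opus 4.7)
The approach is Lagrangian/convex analysis combined with a Riemann-sum-to-integral conversion. Write $p_j := 2^{j/s}-1$ and $q_j := (1-2\zeta)2^{j/s-1}$, and note the key identity $w_j = \log\tfrac{q_j(1-p_j)}{(1-q_j)p_j} = \partial_z\KL{z}{p_j}\big|_{z=q_j}$. The per-coordinate objective $f_j(z) := \KL{z}{p_j}\vecone\{z\ge p_j\}$ is convex in $z$ with continuous derivative (zero on $[0,p_j]$ and $\log\tfrac{z(1-p_j)}{(1-z)p_j}$ on $[p_j,1]$, which agree at $z=p_j$), since $\KL{\cdot}{p_j}$ is flat to first order at its minimiser. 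Hence the minimisation is a smooth convex program and KKT stationarity reads $f'_j(z_j)/s=\lambda w_j$ for every $j$. If $\lambda=0$, then $z_j\le p_j$ for all $j$, forcing $\sum_j(z_j-q_j)w_j<0$ (as $q_j>p_j$ and $w_j>0$), which contradicts feasibility. Thus $\lambda\ne 0$, every $z_j>p_j$, and the stationarity condition becomes $\log\tfrac{z_j(1-p_j)}{(1-z_j)p_j}=s\lambda w_j$; substituting the identity for $w_j$ and exploiting monotonicity of the resulting family in $s\lambda$ yields the unique solution $\lambda=1/s$ and $z_j=q_j$. A short calculation shows $q_j>p_j$ for every $j\in[s-1]$ provided $\zeta=1/s^2$ and $s\ge 3$, so the candidate is feasible and $\cM=\frac{1}{s}\sum_{j=1}^{s-1}\KL{q_j}{p_j}$ exactly.

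A routine expansion of the KL divergence, absorbing the $O(\zeta)=O(1/s^2)$ perturbations, gives
\begin{align*}
\KL{q_j}{p_j} &= \frac{2^{j/s}}{2}\log\frac{2^{j/s}}{2^{j/s}-1} \;-\; \log 2 \;+\; O(1/s^2).
\end{align*}
Summing over $j\in[s-1]$ and dividing by $s$ then yields $\cM=-\log 2+\frac{1}{s}\sum_{j=1}^{s-1}\tfrac{2^{j/s}}{2}\log\tfrac{2^{j/s}}{2^{j/s}-1}+O(1/s)$. The remaining Riemann sum converges to $I:=\int_0^1\tfrac{2^\alpha}{2}\log\tfrac{2^\alpha}{2^\alpha-1}\,d\alpha$; the substitution $t=2^\alpha$ turns this into $\frac{1}{2\log 2}\int_1^2\log\tfrac{t}{t-1}\,dt$, whose antiderivative $\int\log\tfrac{t}{t-1}\,dt=t\log t-(t-1)\log(t-1)$ evaluates to $2\log 2$ between $1$ and $2$ (using $0\log 0=0$), so $I=1$.

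The main technical obstacle is controlling the Riemann-sum error near $\alpha=0$, where the integrand has a logarithmic singularity: $\tfrac{2^\alpha}{2}\log\tfrac{2^\alpha}{2^\alpha-1}=-\tfrac12\log(\alpha\log 2)+O(1)$ as $\alpha\to 0$. The missing contribution from $[0,1/s]$ is $\int_0^{1/s}\bigl[-\tfrac12\log(\alpha\log 2)+O(1)\bigr]\,d\alpha=O(\log(s)/s)$, which dominates the overall error; on $[1/s,1]$ the integrand is smooth with derivative of order $1/\alpha$, so a standard midpoint/trapezoidal bound contributes a further $O\bigl(\tfrac{1}{s}\int_{1/s}^1\alpha^{-1}\,d\alpha\bigr)=O(\log(s)/s)$. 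Combining these with the $O(1/s)$ from the $\zeta$-corrections gives $\cM=1-\log 2+O(\log(s)/s)$, proving the claim.
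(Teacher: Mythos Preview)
Your proof is correct and follows essentially the same approach as the paper: Lagrangian/KKT analysis pins down the minimiser $z_j=q_j$, and the resulting sum $\frac{1}{s}\sum_j\KL{q_j}{p_j}$ is evaluated via a Riemann-sum-to-integral argument with an $O(\log(s)/s)$ error coming from the logarithmic singularity near $j/s\to 0$. Your execution is a touch cleaner in two spots---you note directly that $z\mapsto\KL{z}{p_j}\vecone\{z\ge p_j\}$ is $C^1$-convex, bypassing the paper's auxiliary $\delta$-regularisation, and you simplify the summand to $\tfrac{2^{j/s}}{2}\log\tfrac{2^{j/s}}{2^{j/s}-1}-\log 2$ before integrating---but the structure is the same.
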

\begin{proof}
Fixing an auxiliary parameter $\delta\geq0$ we set up the Lagrangian
\begin{align*}
	\cL_\delta(z_1,\ldots,z_s,\lambda)&=\sum_{j=1}^{s-1}
		\bc{\vecone\cbc{z_j\geq2^{j/s}-1}+\delta\vecone\cbc{z_j<2^{j/s}-1}}
		\KL {z_j}{2^{j/s}-1}+\frac\lambda s\sum_{j=1}^{s-1}w_j \bc{z_j - (1 - 2 \zeta) 2^{j/s-1}}.
\end{align*}
The partial derivatives come out as
\begin{align*}
	\frac{\partial\cL_\delta}{\partial\lambda}&=-\frac1s\sum_{j=1}^{s-1} ((1 - 2 \zeta) 2^{j/s-1}-z_j) w_j,&
	\frac{\partial\cL_\delta}{\partial z_j}&=-\lambda w_j+\bc{\vecone\cbc{z_j\geq2^{j/s}-1}+\delta\vecone\cbc{z_j<2^{j/s}-1}}\log\frac{z_j(2-2^{j/s})}{(1-z_j)(2^{j/s}-1)}.
\end{align*}
Set $z_j^*=(1 - 2 \zeta) 2^{j/s-1}$ and $\lambda^*=1$.
Then clearly
\begin{align}\label{eqLdiff1}
\frac{\partial\cL_\delta}{\partial\lambda}\bigg|_{\lambda^*,z_1^*,\ldots,z_{s-1}^*}=0.
\end{align}
Moreover, the choice \eqref{eqwj} of $\zeta$ guarantees that $z_j^*\geq2^{j/s}-1$.
Hence, by the choice \eqref{eqwj} of the weights $w_j$,
\begin{align}\label{eqLdiff2}
\frac{\partial\cL_\delta}{\partial z_j}\bigg|_{\lambda^*,z_1^*,\ldots,z_{s-1}^*}=0.
\end{align}
Since $\cL_\delta(y_1,\ldots,y_s,\lambda)$ is strictly convex in $z_1,\ldots,z_s$ for every $\delta>0$, \eqref{eqLdiff1}--\eqref{eqLdiff2} imply that $\lambda^*,z_1^*,\ldots,z_{s-1}^*$ is a global minimiser.
Furthermore, since this is true for any $\delta>0$ and since $z_j^*\geq2^{j/s}-1$, we conclude that $(z_1^*,\ldots,z_{s-1}^*)$ is an optimal solution to the minimisation problem $\cM$.
Hence,
\begin{align}\label{eqlem_lagrange10}
\cM= \frac 1s \sum_{j=1}^{s-1} \KL {z_j^*}{2^{j/s}-1}=\frac1s\sum_{j=1}^{s-1}\KL {(1 - 2 \zeta) 2^{j/s-1}}{2^{j/s}-1}. 
\end{align}
Since
\begin{align*}
\frac\partial{\partial\alpha}\KL {(1 - 2 \alpha) 2^{z-1}}{2^{z}-1}&=2^z\brk{-z\log(2)+\log(1-2^{z-1}+\alpha2^z)-\log(1-2^{z-1})-\log(1-2\alpha)+\log(2^z-1)},
\end{align*}
we obtain $\frac\partial{\partial\alpha}\KL {(1 - 2 \alpha) 2^{z-1}}{2^{z}-1}=O(\log s)$ for all $z=1/s,\ldots,(s-1)/s$ and $\alpha\in[0,2\zeta]$.
Combining this bound with \eqref{eqlem_lagrange10}, we arrive at the estimate
\begin{align}\label{eqlem_lagrange11}
\cM= O(\zeta\log s)+\frac1s\sum_{j=1}^{s-1}\KL {2^{j/s-1}}{2^{j/s}-1}. 
\end{align}
Additionally, the function $f: z\in[0,1]\mapsto\KL{2^{z-1}}{2^z-1}$ is strictly decreasing and convex.
Indeed,
\begin{align*}
   f'(z) &=  \frac{2^{z-1}\log 2}{2^z-1} \bc{(2^z-1) \log \bc{\frac{2^z}{2^z-1}}-1},&
   f''(z) &= \bc{2^{z-1}\log^22}\bc{\log \bc{\frac{2^z}{2^z-1}} +\frac{2 - 2^z}{(2^z-1)^2} }.
\end{align*}
The first derivative is negative because $2^{z-1}/(2^z-1)>0$ while $(2^z-1) \log \bc{2^z/(2^z-1)}<1$ for all $z \in (0,1)$.
Moreover, since evidently $f''(z)>0$ for all $z\in(0,1)$, we obtain convexity.
Further, l'H\^opital's rule yields $$\KL{2^{1/s-1}}{2^{1/s}-1}=O(\log s).$$
As a consequence, we can approximate the sum \eqref{eqlem_lagrange11} by an integral and obtain
\begin{align*}
\cM&=O(\log(s)/s)+\int_0^1 \KL {2^{z-1}}{2^{z}-1} \dd z \\
&= O(\log(s)/s)+ \frac{2(1-z) \log^2(2) + 2^z \log 2^z + (1-2^z) \log (2^z-1)}{2\log 2}\bigg|_{z=0}^{z=1} = 1-\log(2) + O(\log(s)/s),
\end{align*}
as claimed.
\end{proof}

\begin{proof}[Proof of \Lem~\ref{lem_large_deviation0+}]
Fix $s\leq i<\ell$ and let $\vX_i$ be the number of  $x\in\zeroplus[i]$ such that  $\wxstar x > (1-2\zeta) \frac \Delta s \wq$.
Also recall that \Prop~\ref{Prop_basic} (iii) and Claim~\ref{lem_U} imply that $\pr\brk\cU=1-o(1)$.
Combining \Lem~\ref{lemma_v0+} with Claims~\ref{lem_happen1} and~\ref{lem_lagrange}, we conclude that
\begin{align}\label{lem_large_deviation0+_1}
\Erw[\vX_i\mid\cU]&\leq
\bc{1+O \bc{n^{-\Omega(1)}}} 2^{-\Delta} n\exp(-(1-\log(2)+o(1)) \Delta)=\exp\bc{\log n-(1+o(1))\Delta}.
\end{align}
Recalling the definition \eqref{eqDelta} of $\Delta$ and using the assumption that $m\geq(1+\eps)\madapt$ for a fixed $\eps>0$, we obtain  $\Delta\geq(1-\theta+\Omega(1))\log n$.
Combining this estimate with \eqref{lem_large_deviation0+_1}, we find
\begin{align}\label{lem_large_deviation0+_3}
\Erw[\vX_i\mid\cU]&\leq n^{\theta-\Omega(1)}.
\end{align}
Finally, the assertion follows from \eqref{lem_large_deviation0+_3} and Markov's inequality.
\end{proof}

\subsection{Proof of \Prop~\ref{prop_dist_psi}}\label{sec_prop_dist_psi}

The following lemma establishes an expansion property of $\G$.
Specifically, if $T$ is a small set of individuals, then there are few individuals $x$ that share many tests with another individual from $T$.

\begin{lemma} \label{lemma_endgame_misclassified}
Suppose that $m=\Theta(n^\theta\log n)$.
\Whp\ for any set $T \subset V$ of size at most $\exp(-\log^{7/8} n)k$ we have
\begin{align*}
\abs{\cbc{x\in V:\sum_{a\in\partial x\setminus F[0]}\vecone\cbc{T\cap\partial a\setminus \cbc x\neq\emptyset}\geq\ln^{1/4}n}}\leq\frac{|T|}3.
\end{align*}
\end{lemma}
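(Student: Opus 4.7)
The plan is to fix a candidate $T$, prove the expansion failure for that $T$ has probability $\exp(-\Omega(|T|\log^{9/8}n))$, and then union-bound over all $T$ of size at most $t_{\max}:=\exp(-\log^{7/8}n)k$. Writing $\cB_T=\{a\in F\setminus F[0]:T\cap\partial a\neq\emptyset\}$ for the set of $T$-touching tests, the random variable $Z_x:=\sum_{a\in\partial x\setminus F[0]}\vecone\{T\cap\partial a\setminus\{x\}\neq\emptyset\}$ appearing in the lemma equals $|\partial x\cap\cB_T|$ when $x\notin T$ and $|\partial x\cap\cB_{T\setminus\{x\}}|$ when $x\in T$; these two cases will be handled separately, and if the lemma fails for $T$ then by pigeonhole either at least $|T|/6$ individuals outside $T$ or at least $|T|/6$ individuals inside $T$ have $Z_x\geq\ln^{1/4}n$.

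The main step is a pointwise tail bound on $|\partial x\cap\cB_T|$ for $x\notin T$. Fix $T$ with $|T|=t\leq t_{\max}$ and condition on the neighbourhoods $(\partial y)_{y\in T}$; then $\cB_T$ is determined and satisfies $|\cB_T|\leq t\Delta$. By \textbf{SC1} the neighbourhood $\partial x$ of any $x\in V\setminus T$ is drawn independently of this conditioning, and decomposing compartment by compartment, $|\partial x\cap F[i_x+j-1]\cap\cB_T|$ is hypergeometric $\mathrm{Hyp}(m/\ell,|\cB_T\cap F[i_x+j-1]|,\Delta/s)$, jointly independent across $j=1,\ldots,s$. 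The total mean satisfies
\[
\mu\leq\frac{\Delta\ell}{sm}|\cB_T|\leq\frac{t\Delta^2\ell}{sm}=\frac{t\Delta\ell\log 2}{sk},
\]
which, using $\Delta=\Theta(\log n)$, $\ell=\Theta(\sqrt{\log n})$, $s=\Theta(\log\log n)$ from \eqref{eqell}--\eqref{eqDelta} and $t\leq t_{\max}$, evaluates to $\exp(-\log^{7/8}n+O(\log\log n))$. A Chernoff bound for a sum of independent hypergeometrics (an immediate consequence of \Lem~\ref{lem_hyperchernoff} via the moment generating function $\Erw[\eul^{\lambda Y}]\leq\exp((\eul^\lambda-1)\mu)$) gives
\[
\pr[|\partial x\cap\cB_T|\geq\ln^{1/4}n\mid \cB_T]\leq\bc{\frac{\eul\mu}{\ln^{1/4}n}}^{\ln^{1/4}n}\leq\exp\bc{-\log^{9/8}n+o(\log n)}.
\]

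Crucially, conditioning on $\cB_T$ renders the neighbourhoods $(\partial x)_{x\in V\setminus T}$ jointly independent, so the events $\{Z_x\geq\ln^{1/4}n\}$, $x\in V\setminus T$, are conditionally independent. Hence for any $S\subseteq V\setminus T$ with $|S|=\lceil t/6\rceil$ the probability that every $x\in S$ satisfies the bound is at most $\exp(-(t/6)\log^{9/8}n+o(t\log n))$; a union bound over $S$ and then over $T$ gives
\[
\pr\brk{\exists T:|T|\leq t_{\max},\,|\{x\in V\setminus T:Z_x\geq\ln^{1/4}n\}|>|T|/6}\leq\sum_{t=1}^{t_{\max}}\binom{n}{t}\binom{n}{\lceil t/6\rceil}\exp\bc{-\Omega(t\log^{9/8}n)}=o(1),
\]
since $\log^{9/8}n\gg\log n$. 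For the $x\in T$ contribution I would use a peeling argument: for any candidate $\{x_1,\ldots,x_{\lceil t/6\rceil}\}\subseteq T$, bound
\[
Z_{x_i}\leq|\partial x_i\cap\cB_{T\setminus\{x_1,\ldots,x_{\lceil t/6\rceil}\}}|+|\partial x_i\cap\cB_{\{x_1,\ldots,x_{\lceil t/6\rceil}\}\setminus\{x_i\}}|.
\]
The first summand is controlled by the conditional-independence argument applied to the external $T$-members, while the second forces a dense-shared-test event inside a set of size $\leq t_{\max}$ whose probability is bounded by a direct second-moment computation using $\Erw|\partial x\cap\partial y|=O(\Delta^2\ell/(sm))$.

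The main obstacle is precisely this $x\in T$ contribution: because $\partial x$ itself helps determine $\cB_{T\setminus\{x\}}$, the clean conditional-independence structure used for $x\notin T$ breaks, and the peeling trick---isolating the internal dense-overlap subevent on a small subset of $T$---is what preserves the exponential-in-$t$ savings needed to absorb the $\binom{n}{t}$ factor in the final union bound.
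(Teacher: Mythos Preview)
Your treatment of the $x\notin T$ case is correct and takes a genuinely different route from the paper: conditioning on $(\partial y)_{y\in T}$ fixes $\cB_T$, the neighbourhoods $(\partial x)_{x\notin T}$ remain jointly independent, the hypergeometric Chernoff bound gives $\pr[|\partial x\cap\cB_T|\geq\ln^{1/4}n]\leq\exp(-\log^{9/8}n+o(\log n))$, and the union bound over $(T,S)$ is absorbed because $\log^{9/8}n\gg\log n$. The first summand in your peeling for $x\in T$ is handled the same way, since conditioning on $(\partial y)_{y\in T\setminus S}$ leaves the $(\partial x_i)_{x_i\in S}$ independent.

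The gap is the second summand, the internal overlap $|\partial x_i\cap\cB_{S\setminus\{x_i\}}|$. A ``second-moment computation'' built on $\Erw|\partial x\cap\partial y|=O(\Delta^2\ell/(sm))$ yields at best Chebyshev-strength tails, not the $\exp(-\Omega(t\log^{9/8}n))$ decay you need to beat $\binom nt$; and the peeling does not close, because the internal-overlap event within $S$ is a smaller instance of the very same $x\in T$ problem---each iteration halves the threshold, so after $O(\log\log n)$ rounds the threshold drops below $1$ while the residual set is still large. To finish you would have to show directly that no small set $S$ admits $\Omega(|S|\ln^{1/4}n)$ tests each containing two members of $S$, and bounding that via a union bound over pairs of $S$-members per test is precisely the paper's calculation. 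The paper therefore avoids the case split altogether: it fixes the candidate bad set $R$ of size $\lceil|T|/3\rceil$ together with a witnessing test set $U$, bounds the probability that every $a\in U$ contains two individuals from $R\cup T$ by $\bigl[\binom{|R|+|T|}{2}((1+o(1))\Delta\ell/(ms))^2\bigr]^{|U|}$, and takes a single union bound over $(R,T,U)$, treating internal and external overlaps symmetrically.
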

\begin{proof}
Fix a set $T\subset V$ of size $t=|T|\leq \exp(-\log^{7/8} n)k$, a set $R\subset V$ of size $r=\lceil t/3\rceil $ and let $\gamma=\lceil\ln^{1/4}n\rceil$.
Furthermore, let $U\subset F[1]\cup\cdots\cup F[\ell]$ be a set of tests of size $\gamma r\leq u\leq \Delta t$.
Additionally, let $\cE(R,T,U)$ be the event that every test $a\in U$ contains two individuals from $R\cup T$.
Then
\begin{align}\label{eq_lemma_endgame_misclassified_1}
\pr\brk{R\subset\cbc{x\in V:\sum_{a\in\partial x\setminus F[0]}\vecone\cbc{T\cap\partial a\setminus \cbc x\neq\emptyset}\geq\gamma}}
\leq\pr\brk{\cE(R,T,U)}.
\end{align}
Hence, it suffices to estimate $\pr\brk{\cE(R,T,U)}$.

Given a test $a\in U$ there are at most $\binom{r+t}2$ way to choose two individuals $x_a,x_a'\in R\cup T$.
Moreover, \eqref{eqax} shows that the probability of the event $\{x_a,x_a'\in\partial a\}$ is bounded by $(1+o(1))(\Delta\ell/(ms))^2$.
Therefore,
\begin{align*}
\pr\brk{\cE(R,T,U)}&\leq \brk{\binom{r+t}2\bcfr{(1+o(1))\Delta\ell}{ms}^2}^u.
\end{align*}
Consequently, the event $\cE(t,u)$ that there exist sets $R,T,U$ of sizes $|R|=r=\lceil t/3\rceil,|T|=t,|U|=u$ such that $\cE(R,T,U)$ occurs has probability
\begin{align*}
\pr\brk{\cE(t,u)}&\leq \binom nr\binom nt\binom mu\brk{\binom{r+t}2\bcfr{(1+o(1))\Delta\ell}{ms}^2}^u.
\end{align*}
Hence, the bounds $\gamma t/3\leq\gamma r\leq u\leq \Delta t$ yield
\begin{align*}
\pr\brk{\cE(t,u)}&\leq \binom nt^2\binom m{u}\brk{\binom{2t}2\bcfr{(1+o(1))\Delta\ell}{ms}^2}^{u}
	\leq\bcfr{\eul n}{t}^{2t}\bcfr{2\eul \Delta^2\ell^2t^2}{ms^2u}^u\\
	&\leq\brk{\bcfr{\eul n}{t}^{3/\gamma}\frac{6\eul \Delta^2\ell^2t}{\gamma ms^2}}^u
	\leq\brk{\bcfr{\eul n}{t}^{3/\gamma}\cdot\frac{t\ln^4n}m}^u&&\mbox{[due to \eqref{eqell}, \eqref{eqDelta}]}.
\end{align*}
Further, since $\gamma=\Omega(\log^{1/4}n)$ and $m=\Omega(k\log n)$ while $t\leq \exp(-\log^{7/8} n)k$, we obtain $\pr\brk{\cE(t,u)}\leq \exp(-u\sqrt{\log n})$.
Thus,
\begin{align}\label{eq_lemma_endgame_misclassified_2}
\sum_{\substack{1\leq t\le k^{1-\alpha}\\\gamma t/3\leq u\leq \Delta t}}\pr\brk{\cE(t,u)}
&\leq\sum_{1\leq u\leq \Delta t}u\exp(-u\sqrt{\log n})=o(1).
\end{align}
Finally, the assertion follows from \eqref{eq_lemma_endgame_misclassified_1} and \eqref{eq_lemma_endgame_misclassified_2}.
\end{proof}

\begin{proof}[Proof of \Prop~\ref{prop_dist_psi}]
With $\tau$ the result of steps 1--10 of \SPIV\ let $\cM[i] = \cbc{ x \in V[i]: \tau_x \neq \SIGMA_x  }$ be the set of misclassified individuals in compartment $V[i]$.
\Prop~\ref{prop_seed} shows that \whp\ $\cM[i]=\emptyset$ for all $i\leq s$.
Further, we claim that for every $s\leq i<\ell$ and any individual $x\in\cM[i+1]$ one of the following three statements is true.
\begin{description}
    \item[M1] $x\in V_1[i+1]$ and $\wxstar x < (1- \zeta/2) \frac \Delta s \wq$,
    \item[M2] $x \in \zeroplusi{i+1}$ and $ \wxstar x > (1-2\zeta) \frac \Delta s \wq$, or
    \item[M3] $x \in V[i+1]$ and $\sum_{a\in\partial x}\vecone\{\partial a\cap(\cM[1]\cup\cdots\cup\cM[i])\neq\emptyset\}\geq\ln^{1/4} n$.
\end{description}
To see this, assume that $x\in\cM[i+1]$ while {\bf M3} does not hold.
Then comparing \eqref{eqWxj} and \eqref{eqWtau} we obtain
\begin{align}\label{eqprop_dist_psi_9}
\abs{W_{x,j}(\tau)-\vW_{x,j}}&\leq\ln^{1/4}n&&\mbox{for all }1\leq j<s.
\end{align}
Moreover, the definition \eqref{eqwj} of the weights, the choice \eqref{eqs} of $s$, and the choices \eqref{eqwj} of $\zeta$ and the weights $w_j$ ensure that $0\leq w_j\leq O(s)=O(\ln\ln n)$.
This bound implies together with the definition \eqref{eqWstar} of the scores $\wxstar x$ and \eqref{eqprop_dist_psi_9} that
\begin{align}\label{eqprop_dist_psi_10}
|\wxstar{x}-W_x^\star(\tau)|=o(\zeta\Delta).
\end{align}
Thus, combining \eqref{eqprop_dist_psi_10} with the definition of $\tau_x$ in Steps 5--10 of \SPIV, we conclude that either {\bf M1} or {\bf M2} occurs.

Finally, to bound $\cM[i+1]$ let $\cM_1[i+1]$, $\cM_2[i+1]$, $\cM_3[i+1]$ be the sets of individuals $x\in V[i+1]$ for which {\bf M1}, {\bf M2} or {\bf M3} occurs, respectively.
Then \Lem s \ref{lem_1dev} and \ref{lem_large_deviation0+} imply that \whp{}
\begin{align*}
\abs{\cM_1[i+1]},\abs{\cM_2[i+1]}\leq k\exp\bc{-\frac{\ln n}{(\ln\ln n)^5}}.
\end{align*}
Furthermore, \Lem~\ref{lemma_endgame_misclassified} shows that  $\abs{\cM_3[i+1]}\leq \sum_{h=1}^i\abs{\cM[h]}$ \whp\
Hence, we obtain the relation
\begin{align}\label{eqprop_dist_psi_13}
\abs{\cM[i+1]}\leq k\exp\bc{-\frac{\ln n}{(\ln\ln n)^5}}+\sum_{h=1}^i\abs{\cM[h]}.
\end{align}
Because \eqref{eqell} ensures that the total number of compartments is $\ell=O(\ln^{1/2}n)$, the bound \eqref{eqprop_dist_psi_13} implies that $\abs{\cM[i+1]}\leq O(\ell^2k\exp(-(\ln n)/(\ln\ln n)^{5})$ for all $i\in[\ell]$ \whp{}
Summing on $i$ completes the proof.
\end{proof}


\subsection{Proof of \Prop~\ref{prop_endgame}}\label{sec_prop_endgame}

\noindent
For an infected individual $x\in V$ let
\begin{align*}
\vS_x[j]&= \abs{ \cbc{ a \in  F[j]\cap\partial x: V_1\cap\partial a=\cbc x}}&\mbox{and}&&\vS_x&=\sum_{j=1}^\ell\vS_x[j].
\end{align*}
Thus, $\vS_x[j]$ is the number of positive sets $a\in F[j]$ that $x$ has to itself, i.e., tests that do not contain a second infected individual, and $\vS_x$ is the total number of such tests.

\begin{lemma} \label{Lem_distphixstar}
Assume that $m\geq(1+\eps)\minf$.
\Whp\ we have $\min_{x\in\one}\vS_x\geq\sqrt \Delta$.
\end{lemma}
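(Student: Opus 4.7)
The strategy is to show that for each fixed $x\in V_1$, the random variable $\vS_x$ is, conditional on a high-probability event, a sum of $s$ independent hypergeometric variables of total mean $(1+o(1))\Delta/2$; a Chernoff bound will then yield $\Pr[\vS_x<\sqrt\Delta]\leq 2^{-(1-o(1))\Delta}$, and a union bound over $V_1$ will suffice because the hypothesis $m\geq(1+\eps)\minf$ makes $\Delta$ just large enough that $2^{-\Delta}\cdot k=o(1)$.

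Fix $x\in V_1[i]$ and, for $j=0,\ldots,s-1$, set
\begin{align*}
G_j(x)=\cbc{a\in F[i+j]:V_1\cap\partial a\setminus\cbc x=\emptyset},
\end{align*}
so that $\vS_x[i+j]=|\partial x\cap G_j(x)|$. Let $\fE_x$ be the $\sigma$-algebra generated by $\SIGMA$ and by the neighbourhoods $(\partial y)_{y\neq x}$; then each $G_j(x)$ is $\fE_x$-measurable, while by \textbf{SC1} the sets $\partial x\cap F[i+j]$ are, given $\fE_x$, independent uniform $(\Delta/s)$-subsets of $F[i+j]$. Hence, conditional on $\fE_x$, the variables $\vS_x[i+j]$ are independent with $\vS_x[i+j]\mid\fE_x\sim\Hyp(|F[i+j]|,|G_j(x)|,\Delta/s)$. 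A Poisson-approximation argument essentially identical to the proof of Claim~\ref{Claim_Qxj} gives $\Pr[a\in G_j(x)]=1/2+o(1)$ for any fixed $a\in F[i+j]$, and concentration of $|G_j(x)|$ around $m/(2\ell)$ follows from an Azuma--Hoeffding bound applied to the neighbourhoods of the infected individuals (each of which can shift $|G_j(x)|$ by at most $\Delta$). After a union bound over $(x,j)\in V_1\times\brk s$ we may condition on the event that $|G_j(x)|=(1+o(1))m/(2\ell)$ for all such $(x,j)$.

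On this event, given $\fE_x$ the variable $\vS_x$ is a sum of $s$ independent hypergeometrics of total mean $(1+o(1))\Delta/2$, each bounded by $\Delta/s$. The Chernoff bound (\Lem~\ref{lem_hyperchernoff}, applied compartment-wise together with independence of the $\vS_x[i+j]$) then yields
\begin{align*}
\Pr\brk{\vS_x<\sqrt\Delta\mid\fE_x}\leq\exp\bc{-(1+o(1))\Delta\cdot\KL{1/\sqrt\Delta}{1/2}}=2^{-(1-o(1))\Delta},
\end{align*}
using $\KL{q}{1/2}\to\log 2$ as $q\to 0^+$. Finally, $\Delta=m\log(2)/k+O(s)\geq(1+\eps/2)\max\cbc{\theta/\log 2,\,(1-\theta)\log 2}\log n$ by the hypothesis $m\geq(1+\eps)\minf$ and \eqref{eqInfUpper}; a short case split on whether $\theta\geq\log(2)/(1+\log 2)$ confirms that $2^{-\Delta}\cdot k\leq n^{-\Omega(1)}$ in both regimes, and a union bound over $V_1$ completes the proof. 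The main subtlety is the tightness of this final inequality: without the $\eps$-slack in $m\geq(1+\eps)\minf$, a positive fraction of infected individuals would w.h.p.\ satisfy $\vS_x=0$, and the Phase~3 clean-up could not get off the ground.
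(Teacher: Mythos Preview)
Your proof is correct and follows essentially the same approach as the paper's: condition on the ``half the tests are negative'' event, identify $\vS_x[i+j]$ as hypergeometric given the remaining randomness in $\partial x$, apply the Chernoff bound, and finish with a union bound over $V_1$ using $2^{-\Delta}k=n^{-\Omega(1)}$. The only cosmetic differences are that the paper conditions directly on $\cN=\{|F_0[i]|\sim m/(2\ell)\}$ from \Prop~\ref{Prop_basic}(iii) and observes $|G_j(x)|=|F_0[i+j]|+O(\Delta/s)$, which spares the union bound over $x$; and the paper bounds each compartment by $\Pr[\vS_x[i+j]\leq\sqrt\Delta]\leq 2^{-(1+o(1))\Delta/s}$ and multiplies, rather than applying Chernoff to the full sum---both routes yield the same $2^{-(1+o(1))\Delta}$.
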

\begin{proof}
Due to \Prop~\ref{Prop_basic} we may condition on the event
\begin{align*}
    \cN = \cbc{\forall i\in[\ell]:\frac{m}{2\ell}-\sqrt m\ln n\leq\abs{F_0[i]}\leq\frac m{2\ell}+\sqrt m\ln n}.
\end{align*}
We claim that given $\cN$ for each $x \in V_{1}[i]$, $i\in[\ell]$, the random variable $\vS_x$ has distribution
\begin{align}\label{eqLem_distphixstar1}
\vS_x[i+j-1]\sim \Hyp\bc{\frac m\ell,\frac m{2\ell}+O(\sqrt m\log n),\frac\Delta s}.
\end{align}
To see this, consider the set $F_{x}[i+j-1] = \cbc{a \in F[i+j-1]: \partial a\cap \one \setminus\cbc x = \emptyset}$ of all tests in compartment $F[i+j-1]$ without an infected individual besides possibly $x$.
Since $x$ joins $\Delta/s=O(\ln n)$ tests in $F[i+j-1]$, given $\cN$ we have
\begin{align}\label{eqLem_distphixstar2}
\abs{F_{0,x}[i+j]}&=|F_0[i+j]|+O(\ln n)=\frac{m}{2\ell}+O(\sqrt m\ln n).
\end{align}
Furthermore, consider the experiment of first constructing the test design $\G$ and then re-sampling the set $\partial x$ of neighbours of $x$; i.e., independently of $\G$ we have $x$ join $\Delta/s$ random tests in each compartment $F[i+j]$.
Then the resulting test design $\G'$ has the same distribution as $\G$ and hence the random variable $\vS_x'[i+j-1]$ that counts tests $a\in F[i+j-1]\cap\partial x$ that do not contain another infected individual has the same distribution as $\vS_x[i+j-1]$.
Moreover, the conditional distribution of $\vS_x'[i+j-1]$ given $\G$ reads
\begin{align}\label{eqeqLem_distphixstar3}
\vS_x'[i+j-1]\sim\Hyp\bc{\frac m\ell,|F_{0,x}[i+j-1]|,\frac\Delta s}.
\end{align}
Combining \eqref{eqLem_distphixstar2} and \eqref{eqeqLem_distphixstar3}, we obtain \eqref{eqLem_distphixstar1}.

To complete the proof we combine \eqref{eqLem_distphixstar1} with \Lem~\ref{lem_hyperchernoff}, which implies that
\begin{align}
\pr\brk{\vS_x[i+j-1]\leq \sqrt\Delta\mid x\in V_1}&\leq\exp\bc{-\frac \Delta s\KL{(1+o(1))s/\sqrt{\Delta}}{1/2+o(1)}}
=\exp\bc{-(1+o(1))\frac{\Delta\log 2}{s}}.\label{eqeqLem_distphixstar4}
\end{align}
Since {\bf SC1} ensures that the random variables $(\vS_x[i+j-1])_{j\in[s]}$ are mutually independent, \eqref{eqeqLem_distphixstar4} yields
\begin{align}\label{eqeqLem_distphixstar5}
\pr\brk{\vS_x\leq \sqrt\Delta\mid x\in V_1}&\leq2^{-(1+o(1))\Delta}.
\end{align}
Finally, the assumption $m\geq(1+\eps)\minf$ for a fixed $\eps>0$ and the choice \eqref{eqDelta} of $\Delta$ ensure that $2^{-(1+o(1))\Delta}=o(1/k)$.
Thus, the assertion follows from \eqref{eqeqLem_distphixstar5} by taking a union bound on $x\in V_1$.
\end{proof}
			 
\begin{proof}[Proof of \Prop~ \ref{prop_endgame}]
For $j = 1 \dots \ceil{\ln n}$, let
\begin{align*}
\cM_j = \cbc{ x \in V: \tau^{(j)}_x \neq \SIGMA_x}
\end{align*}
contain all individuals that remain misclassified at the $j$-th iteration of the clean-up step.
\Prop~\ref{prop_dist_psi} shows that \whp
\begin{align}\label{eqprop_endgame1}
\abs{\cM_1} \leq k\exp\bc{-\frac{\log n}{(\log\log n)^6}}.
\end{align}
Furthermore, in light of \Lem~\ref{Lem_distphixstar} we may condition on the event $\cA=\{\min_{x\in\one}\vS_x\geq\sqrt \Delta\}$.

We now claim that given $\cA$ for every $j\geq1$
\begin{align}\label{eqprop_endgame2}
\cM_{j+1}\subset{\cbc{x\in V: \sum_{a\in\partial x\setminus F[0]}\abs{\partial a \cap \cM_j\setminus\cbc x} \geq \ceil{\ln^{1/4} n}}}.
\end{align}
To see this, suppose that $x\in\cM_{j+1}$ and recall that the assumption $m\geq\minf$ and \eqref{eqDelta} ensure that $\Delta=\Omega(\ln n)$.
Also recall that \SPIV's Step~15 thresholds the number 
$$S_x(\tau^{(j)})=\sum_{a\in\partial x:\hat\SIGMA_a=1}\vecone\cbc{\forall y\in\partial a\setminus\cbc x:\tau^{(j)}_y=0}$$
of positive tests containing $x$ whose other individuals are deemed uninfected.
There are two cases to consider.
\begin{description}
\item[Case 1:  $x\in V_0$]
in this case every positive tests $a\in\partial x$ contains an individual that is actually infected.
Hence, if $\tau^{(j)}_y=0$ for all $y\in\partial a\setminus\cbc x$, then $\partial a\cap\cM_j\setminus\cbc x\neq\emptyset$.
Consequently, since Step~15 of \SPIV\ applies the threshold of $S_x(\tau^{(j)})\geq\log^{1/4}n$, there are at least $\ln^{1/4}n$ tests $a\in\partial x$ such that $\partial a\cap\cM_j\setminus\cbc x\neq\emptyset$.
\item[Case 2:  $x\in V_1$]
given $\cA$ every infected $x$ participates in at least $\vS_x\geq\sqrt\Delta=\Omega(\ln^{1/2}n)$ tests that do not actually contain another infected individual.
Hence, if $S_x(\tau^{(j)})\leq\log^{1/4}n$, then at least $\sqrt\Delta-\log^{1/4}n\geq\log^{1/4}n$ tests $a\in\partial x$ contain an individual from   $\cM_j\setminus\cbc x$.
\end{description}
Thus, we obtain \eqref{eqprop_endgame2}.
Finally, \eqref{eqprop_endgame1}, \eqref{eqprop_endgame2} and \Lem~\ref{lemma_endgame_misclassified} show that \whp\ $|\cM_{j+1}|\leq|\cM_{j}|/3$ for all $j\geq1$.
Consequently, $\cM_{\lceil\ln n\rceil}=\emptyset$ \whp\ 
\end{proof}

\section{Optimal adaptive group testing}\label{sec_prop_adaptive}


\noindent
In this final section we show how the test design $\G$ from \Sec~\ref{Sec_alg} can be extended into an optimal two-stage adaptive design.
The key observation is that \Prop~\ref{prop_dist_psi}, which summarises the analysis of the first two phases of \SPIV\ (i.e., steps 1--10) only requires $m\geq(1+\eps)\madapt$ tests.
In other words, the excess number $(1+\eps)(\minf-\madapt)$ of tests required for non-adaptive group testing is necessary only to facilitate the clean-up step, namely phase~3 of \SPIV.

Replacing phase~3 of \SPIV\ by a second test stage,  we obtain an optimal adaptive test design.
To this end we follow Scarlett~\cite{Scarlett_2018}, who observed that a single-stage group testing scheme that correctly diagnoses all but $o(k)$ individuals with $(1+o(1))\madapt$ tests could be turned into a two-stage design  that diagnoses all individuals correctly \whp{} with $(1+o(1))\madapt$ tests in total.
(Of course, at the time no such optimal single-stage test design and algorithm were known.)
The second test stage works as follows.
Let $\tau$ denote the outcome of phases~1 and~2 of \SPIV\ applied to $\G$ with $m=(1+\eps)\madapt$.
\begin{description}
\item[T1] Test every individual from the set $V_1(\tau)=\{x\in V:\tau_x=1\}$ of individuals that \SPIV\ diagnosed as infected separately.
\item[T2] To the individuals $V_0(\tau)=\{x\in V:\tau_x=0\}$ apply the random $d$-out design and the {\tt DD}-algorithm from \Sec~\ref
{Sec_random_bip} with a total of $m=k$ tests and $d=\lceil 10\log n\rceil $.
\end{description}
Let $\tau'\in\{0,1\}^V$ be the result of {\bf T1}--{\bf T2}.

\begin{proposition}\label{Prop_ad}
\Whp\ we have $\tau'_x=\SIGMA_x$ for all $x\in V$.
\end{proposition}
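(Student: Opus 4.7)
\emph{Proof plan.} The starting point is Proposition \ref{prop_dist_psi}, which already guarantees that after phases 1--2 of \SPIV\ the assignment $\tau$ misclassifies at most $k\exp(-\log n/(\log\log n)^6)$ individuals \whp\ Step~T1 diagnoses every $x\in V_1(\tau)$ correctly by individual testing, so the task reduces to showing that T2, which applies \DD\ to a fresh random $d$-out design ($d=\lceil 10\log n\rceil$, $k$ tests) on $V_0(\tau)=\{x:\tau_x=0\}$, correctly classifies every individual in $V_0(\tau)$ \whp

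Set $k':=|V_0(\tau)\cap V_1|$, so $k'\leq k\exp(-\log n/(\log\log n)^6)$ \whp\ Correctness of DD on this instance reduces to two standard claims: \textbf{(A)} every healthy $y\in V_0(\tau)\cap V_0$ appears in at least one test of T2 that contains no infected individual from $V_0(\tau)\cap V_1$ (so that DD1 marks it healthy), and \textbf{(B)} every infected $x\in V_0(\tau)\cap V_1$ appears in at least one test whose other participants all lie in $V_0(\tau)\cap V_0$. Granted (A) and (B), DD1 correctly marks every healthy individual as healthy; then in the witness test for (B) the individual $x$ becomes the unique unmarked participant and DD2 declares it infected, while DD3 leaves every remaining (healthy) individual as healthy.

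Both (A) and (B) boil down to bounding, for a fixed individual $u$ with neighbourhood $\partial u=\{a_1,\ldots,a_d\}$ in T2, the probability that every $a_j$ contains some infected individual from $V_0(\tau)\cap V_1\setminus\{u\}$. Since the at most $k'$ infected individuals choose their T2 neighbourhoods independently of each other and of $u$, inclusion-exclusion over subsets of $\{a_1,\ldots,a_d\}$ identifies this probability as $(1+o(1))(1-e^{-\mu})^d$ with $\mu=k'd/k\leq 10\log n\cdot\exp(-\log n/(\log\log n)^6)$. Since $\mu\to 0$, this is bounded by $(1+o(1))\mu^d$, which in turn evaluates to
\[
\exp\bc{O(\log n\,\log\log n)-10(\log n)^2/(\log\log n)^6}=n^{-\omega(1)},
\]
where we used $(\log\log n)^7=o(\log n)$. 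A union bound over the at most $n$ choices of $u$ establishes both (A) and (B), completing the proof. The main technical point is to carry out the inclusion-exclusion carefully (which is clean precisely because the contributions of distinct infected individuals are genuinely independent) and to absorb the lower-order corrections coming from the without-replacement nature of the $d$-out design.
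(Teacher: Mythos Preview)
Your high-level reduction is the same as the paper's: step~{\bf T1} is trivially correct, and for~{\bf T2} one establishes that every $x\in V_0(\tau)$ has a test $b\in\partial x\cap F'$ with $\partial b\setminus\{x\}\subset V_0$ (your claims (A) and (B) combined), after which~\DD\ succeeds. The issue is the probabilistic bound you invoke.

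The inclusion--exclusion step as written has a genuine gap. The exact identity is
\[
\Pr\Bigl[\bigcap_{j=1}^d B_j\Bigr]=\sum_{s=0}^{d}(-1)^s\binom ds\,p_s^{\,k'},\qquad p_s=\binom{k-s}{d}\Big/\binom kd,
\]
and one checks $p_s^{k'}=r^s\bigl(1+O(k'd^3/k^2)\bigr)$ with $r=p_1^{k'}$. The main term $(1-r)^d$ is of order $\mu^d$, but the accumulated approximation error is bounded only by $O(k'd^3/k^2)\sum_s\binom ds r^s\le O(k'd^3/k^2)\cdot 2^d$. Since $d=\lceil 10\log n\rceil$ we have $2^d=n^{10\log 2}$, and this swamps the main term by many powers of $n$; the ``$1+o(1)$'' is not justified, and there is no evident way to absorb these corrections in an alternating sum with $2^d$ terms when the target $(1-e^{-\mu})^d$ is itself exponentially small in~$d$.

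The paper sidesteps this entirely. It first shows (\Lem~\ref{Lemma_ad}) that \whp\ at most $k/2$ of the T2-tests are positive: since $\Erw|F_1'|\le k'd=o(k)$, Markov suffices. Then (\Cor~\ref{Cor_ad}) for each fixed $x$ it exposes $(\partial y)_{y\ne x}$ first, so that \whp\ at least $k/2$ tests are free of infected individuals other than~$x$; exposing $\partial x$ last, the number of such ``clean'' tests in $\partial x$ is hypergeometric, and the probability it equals zero is at most $2^{-d}=o(1/n)$. A union bound over $x$ finishes.

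Your direct, per-individual estimate can be salvaged without inclusion--exclusion. The indicators $(\vecone\{y_i\in a_j\})_{i\in[k'],\,j\in[d]}$ are negatively associated (sampling without replacement within each $i$, independence across $i$), whence the increasing events $B_j$ are negatively associated and $\Pr[\bigcap_j B_j]\le\prod_j\Pr[B_j]\le\mu^d$. Alternatively, $\bigcap_j B_j$ forces $\sum_{i,j}\vecone\{y_i\in a_j\}\ge d$, a sum of negatively associated Bernoullis with mean $\mu d=o(d)$, so a Chernoff bound yields $(e\mu)^d=n^{-\omega(1)}$.
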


As a matter of course {\bf T1} renders correct results, i.e., for all individuals $x\in V_1(\tau)$ we have $\tau_x'=\SIGMA_x$.
Further, to analyse {\bf T2} we use a similar argument as in the analysis of the first phase of \SPIV\ in \Sec~\ref{Sec_prop_seed}; we include the analysis for the sake of completeness.
We begin by investigating the number of negative tests.
Let $\G'$ denote the test design set up by {\bf T2}, let $F'=\{b_1,\ldots,b_k\}$ denote its set of tests and let $\hat\SIGMA_{b_1},\ldots,\hat\SIGMA_{b_k}$ signify the corresponding test results.
Further, let $F'_0=\{b\in F':\hat\SIGMA_b=0\}$ and $F'_1=\{b\in F':\hat\SIGMA_b=1\}$ be the set of negative and positive tests, respectively.

\begin{lemma}\label{Lemma_ad}
\Whp\ we have $|F'_1|\leq\frac k2$.
\end{lemma}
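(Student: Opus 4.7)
The plan is to exploit \Prop~\ref{prop_dist_psi}, which guarantees that after phases~1--2 of \SPIV\ the estimate $\tau$ misclassifies at most $k \exp(-\log n/(\ln\ln n)^6) = o(k)$ individuals \whp{}  In particular, letting $k_0 = |V_1 \cap V_0(\tau)|$ be the number of \emph{false negatives} (truly infected individuals that \SPIV\ placed in $V_0(\tau)$), we have $k_0 \leq k \exp(-\log n/(\ln\ln n)^6)$ \whp.  The individuals in $V_1 \cap V_1(\tau)$ are tested separately in step {\bf T1} and therefore do not affect any test in $\G'$; hence a test $b \in F'$ is positive iff one of these $k_0$ false negatives joins $b$.

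The first main step is to bound $\Erw[|F'_1| \mid k_0]$.  In step {\bf T2} every individual in $V_0(\tau)$ independently joins $d = \lceil 10\log n\rceil$ of the $k$ tests in $F'$, chosen uniformly without replacement.  Hence for a fixed test $b \in F'$,
\begin{align*}
\pr\brk{\hat\SIGMA_b = 0 \mid k_0} = \bc{1 - d/k}^{k_0},
\qquad\text{so}\qquad
\pr\brk{\hat\SIGMA_b = 1 \mid k_0} \leq \frac{d\, k_0}{k}.
\end{align*}
By linearity of expectation,
\begin{align*}
\Erw\brk{|F'_1| \mid k_0} \leq k \cdot \frac{d\, k_0}{k} = d\, k_0 \leq 10 \log n \cdot k \exp\bc{-\frac{\log n}{(\ln\ln n)^6}} = o(k),
\end{align*}
since the exponential factor dominates the $\log n$ factor as $n\to\infty$.

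The final step is concentration.  Since \Prop~\ref{prop_dist_psi} gives the bound on $k_0$ with probability $1-o(1)$, Markov's inequality applied to the conditional expectation above yields
\begin{align*}
\pr\brk{|F'_1| > k/2} \leq \pr\brk{k_0 > k \exp\bc{-\log n/(\ln\ln n)^6}} + \frac{2}{k}\Erw\brk{|F'_1| \mid k_0 \leq k \exp\bc{-\log n/(\ln\ln n)^6}} = o(1),
\end{align*}
which is the desired conclusion.  No delicate large-deviations argument is needed; the heavy lifting was done by \Prop~\ref{prop_dist_psi}, and the only subtlety is to carefully condition on the (high-probability) bound on $k_0$ before taking the expectation over the fresh randomness of $\G'$.
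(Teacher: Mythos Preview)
Your proof is correct and follows essentially the same approach as the paper: invoke \Prop~\ref{prop_dist_psi} to bound the number $k_0$ of false negatives, then use that each test $b\in F'$ is positive with probability at most $dk_0/k=o(1)$ (the paper bounds $\Erw[|\partial b\cap V_1|]$ instead, which amounts to the same thing via Markov), and finish with linearity of expectation and Markov's inequality.
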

\begin{proof}
\Prop~\ref{prop_dist_psi} implies that \whp{}
\begin{align}\label{eqLemma_ad_2}
|V_0(\tau)\cap V_1|\leq \sum_{x\in V}\vecone\cbc{\tau_x\neq\SIGMA_x}\leq k\exp\bc{-\frac{\log n}{(\ln\ln n)^6}}.
\end{align}
Moreover, since every individual $x\in V_0(\tau)$ joins $d$ random tests, for any specific test $b\in F'$ we have
\begin{align*}
\pr\brk{x\in\partial_{\G'}b}&=1-\pr\brk{x\not\in\partial_{\G'}b}=1-\binom{k-1}{d}\binom{k}{d}^{-1}=\frac dk(1+O(n^{-\Omega(1)})).
\end{align*}
Hence, for every test $b\in F'$,
\begin{align*}
\Erw\brk{|\partial b\cap V_1|\,\bigg|\, |V_0(\tau)\cap V_1|\leq k\exp\bc{-\frac{\log n}{(\ln\ln n)^6}}}=O(1/\log n).
\end{align*}
Consequently,
\begin{align}\label{eqLemma_ad_1}
\Erw\brk{|F_1'|\mid |V_0(\tau)\cap V_1|\leq k/\log n}=O(k/\log n).
\end{align}
Finally, combining \eqref{eqLemma_ad_2} and \eqref{eqLemma_ad_1} and applying Markov's inequality, we conclude that $|F'_1|\leq\frac k2$ \whp
\end{proof}

\begin{corollary}\label{Cor_ad}
\Whp\ for every $x\in V_0(\tau)$ there is a test $b\in F'$ such that $\partial b\setminus\cbc x\subset V_0$.
\end{corollary}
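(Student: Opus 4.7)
The plan is to mimic the two-round exposure argument used for Corollary~\ref{Cor_F0pos}, using Lemma~\ref{Lemma_ad} as the analog of Lemma~\ref{Lemma_F0pos}. We fix $x\in V_0(\tau)$ and aim to show $\pr[\nexists b\in\partial_{\G'}x:\partial_{\G'}b\setminus\{x\}\subset V_0]=o(1/n)$, from which a union bound over the at most $n$ individuals in $V_0(\tau)$ yields the corollary.

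In the first round we expose $\SIGMA$ together with the neighborhoods $(\partial_{\G'}y)_{y\in V_0(\tau)\setminus\{x\}}$, and define
\begin{align*}
\vX\;=\;\abs{\cbc{b\in F':V_1\cap(V_0(\tau)\setminus\{x\})\cap\partial_{\G'}b=\emptyset}}.
\end{align*}
I would show $\vX\geq k/2$ w.h.p. as follows. By \Prop~\ref{prop_dist_psi} we may condition on $|V_0(\tau)\cap V_1|\leq k\exp(-\log n/(\log\log n)^6)$. Because in {\bf T2} each $y\in V_0(\tau)\setminus\{x\}$ independently picks $d=\lceil10\log n\rceil$ of the $k$ tests uniformly without replacement, a fixed test $b\in F'$ contains each infected $y\in V_0(\tau)\cap V_1$ with probability $d/k(1+O(n^{-\Omega(1)}))$. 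The expected number of such $y$ in $b$ is thus $O(1/\log n)$, and hence $\Pr[\vecone\{b\in F'$ is safe$\}]=1-O(1/\log n)$ so that $\Erw\vX\geq k(1-O(1/\log n))$. A standard Azuma-Hoeffding argument over the $|V_0(\tau)|$ independent neighborhoods---each of which affects $\vX$ by at most $d$---yields $\vX\geq k/2$ with probability $1-o(1/n)$, analogously to the proof of Lemma~\ref{Lemma_ad}.

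In the second round we expose $\partial_{\G'}x$, a uniformly random subset of $F'$ of size $d$ drawn without replacement and independent of the information collected in round one. Given $\vX\geq k/2$, the number of tests in $\partial_{\G'}x$ that are safe (i.e.\ satisfy $E_b=1$) is hypergeometric $\Hyp(k,\vX,d)$, so
\begin{align*}
\pr\brk{\forall b\in\partial_{\G'}x:\partial_{\G'}b\setminus\{x\}\not\subset V_0\,\Big|\,\vX}
\;\leq\;\binom{k-\vX}{d}\binom{k}{d}^{-1}\;\leq\;\exp(-d\vX/k)\;\leq\;n^{-5},
\end{align*}
exactly as in \eqref{eqCor_F0pos}. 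Combined with $\pr[\vX<k/2]=o(1/n)$ and a union bound over $x\in V_0(\tau)$, this completes the proof.

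The argument is essentially a routine repeat of Corollary~\ref{Cor_F0pos}; the only mildly delicate point is verifying that $\vX\geq k/2$ w.h.p.\ after round one---i.e.\ that removing the single individual $x$ from the randomness does not destroy the concentration provided by Lemma~\ref{Lemma_ad}. Since $x$ contributes to at most $d=O(\log n)$ tests, this is immediate from the Lipschitz-type dependence of $\vX$ on $\partial_{\G'}x$.
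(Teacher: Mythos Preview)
Your approach mirrors the paper's two-round exposure argument almost exactly. There is, however, one genuine slip: your Azuma--Hoeffding step runs over all $|V_0(\tau)|\sim n$ neighbourhoods with Lipschitz constant $d$, which yields a bound of order $\exp\bigl(-\Theta(k^2/(nd^2))\bigr)$; for $\theta\le 1/2$ this is not $o(1/n)$, so the subsequent union bound over $x$ would fail. The fix is immediate: only the neighbourhoods of \emph{infected} individuals in $V_0(\tau)\setminus\{x\}$ affect $\vX$, and once you have conditioned on $|V_0(\tau)\cap V_1|\le k':=k\exp(-\log n/(\log\log n)^6)$ via \Prop~\ref{prop_dist_psi}, there are at most $k'$ of them. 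In fact no concentration argument is needed at all: since each such individual occupies at most $d$ tests, deterministically $\vX\ge k-k'd\ge k/2$ on this event. (The paper sidesteps the issue by invoking \Lem~\ref{Lemma_ad} for the single event $|F_1'|\le k/2$ and observing $\vX\ge k-|F_1'|$; note also that \Lem~\ref{Lemma_ad} is proved via Markov, not Azuma.) With this correction your argument is complete and essentially identical to the paper's.
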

\begin{proof}
We construct the random graph $\G'$ in two rounds.
In the first round we first expose the neighbourhoods $(\partial_{\G'} y)_{y\in V_0(\tau)\setminus\cbc x}$.
\Lem~\ref{Lemma_ad} implies that after the first round the number $\vX$ of tests  that do not contain an infected individual $y\in V_0(\tau)\cap V_1$ exceeds $k/2$ \whp{}
In the second round we expose $\partial_{\G'}x$.
Because $\partial_{\G'}x$ is chosen independently of the neighbourhoods $(\partial_{\G'} y)_{y\in V_0(\tau)\setminus\cbc x}$, the number of tests $b\in\partial_{\G'}x$ that do not contain an infected individual $y\in V_0(\tau)\cap V_1$ has distribution $\Hyp(k,\vX,d)$.
Therefore, since $d\geq10\log n$ we obtain
\begin{align}\label{eqCor_ad}
\pr\brk{\forall b\in\partial x:V_1\cap\partial b\setminus\cbc x\neq\emptyset\mid\vX\leq k/2}&\leq\pr\brk{\Hyp(k,k/2,d)=0}
\leq2^{-d}=o(1/n).
\end{align}
Finally, the assertion follows \eqref{eqCor_ad} and the union bound.
\end{proof}

\begin{proof}[Proof of \Prop~\ref{Prop_ad}]
\Cor~\ref{Cor_ad} shows that we may assume that for every $x\in V_0(\tau)$ there is a test $b_x\in F'$ with $\partial b_x\setminus\cbc x\subset V_0$.
As a consequence, upon executing the first step {\bf DD1} of the {\tt DD} algorithm, {\bf T2} will correctly diagnose all individuals $x\in V_0(\tau)\cap V_0$.
Therefore, if $x\in V_0(\tau)\cap V_1$, then {\bf DD2} will correctly identify $x$ as infected because all other individuals $y\in\partial b_x$ were already identified as healthy by {\bf DD1}.
Thus, $\tau_x'=\SIGMA_x$ for all $x\in V$.
\end{proof}

\begin{proof}[Proof of \Thm~\ref{Thm_ad}]
\Prop~\ref{Prop_ad} already establishes that the output of the two-stage adaptive test is correct \whp{}
Hence, to complete the proof we just observe that the total number of tests comes to $(1+\eps)\madapt$ for the first stage plus $|V_1(\tau)|+k$ for the second stage.
Furthermore, \Prop~\ref{prop_dist_psi} implies that \whp{}
\begin{align*}
|V_1(\tau)|\leq |V_1|+\sum_{x\in V}\vecone\cbc{\tau_x\neq\SIGMA_x}\leq k\bc{1+\exp\bc{-\frac{\log n}{(\ln\ln n)^6}}}=(1+o(1))k.
\end{align*}
Thus, the second stage conducts $O(k)=o(\madapt)$ tests.
\end{proof}

\subsection*{Acknowledgment}
We thank Arya Mazumdar for bringing the group testing problem to our attention.


\begin{thebibliography}{29}

\bibitem{Abbe_2017}
E.\ Abbe: Community detection and stochastic block models: recent developments. The Journal of Machine Learning Research {\bf 18} (2017) 6446--6531.

\bibitem{Lenka_pooled_data}
A.~Alaoui, A.~Ramdas, F.~Krzakala, L.~Zdeborov\'a, M.~Jordan: Decoding from pooled data: phase transitions of message passing.
IEEE Transactions on Information Theory {\bf{65}} (2019) 572--585.

\bibitem{Lenka_pooled_data2}
A.~Alaoui, A.~Ramdas, F.~Krzakala, L.~Zdeborov\'a, M.~Jordan: Decoding from pooled data: Sharp information-theoretic bounds. SIAM Journal on Mathematics of Data Science {\bf 1} (2019) 161--188.

\bibitem{Aldridge_2014}
M.\ Aldridge, L.\ Baldassini, O.\ Johnson: Group testing algorithms: bounds and simulations.
IEEE Transactions on Information Theory {\bf{60}} (2014) 3671--3687.



\bibitem{Aldridge_2018}
M.\ Aldridge: Individual testing is optimal for nonadaptive group testing in the linear regime. IEEE Transactions on Information Theory {\bf{65}} (2019) 2058--2061.

\bibitem{Aldridge_2019}
M.\ Aldridge, O.\ Johnson, J.\ Scarlett: Group testing: an information theory perspective. Foundations and Trends in Communications and Information Theory (2019).


\bibitem{Alon_1998}
N.\ Alon, M.\ Krivelevich, B.\ Sudakov: Finding a large hidden clique in a random graph.
Proc.~9th SODA (1998) 594--598.




\bibitem{Berger_2002}
T.\ Berger, V.\ Levenshtein: Asymptotic efficiency of two-stage disjunctive
testing. IEEE Transactions on Information Theory, {\bf48} (2002) 1741--1749. 

\bibitem{Brennan_2019}
M.\ Brennan, G.\ Bresler: Optimal average-case reductions to sparse pca: From weak assumptions to strong hardness. arXiv:1902.07380.


\bibitem{Chen_2008}
H.\ Chen, F.\ Hwang: A survey on nonadaptive group testing algorithms through the angle of decoding.
Journal of Combinatorial Optimization {\bf{15}} (2008) 49--59.

\bibitem{Coja_2019}
A.\ Coja-Oghlan, O.\ Gebhard, M.\ Hahn-Klimroth, P.\ Loick: Information-theoretic and algorithmic thresholds for group testing. 
Proc.\ 46th ICALP (2019) \#43.

\bibitem{Decelle}
A.\ Decelle, F.\ Krzakala, C.\ Moore, L.\ Zdeborov\'a:
Asymptotic analysis of the stochastic block model for modular networks and its algorithmic applications.
Phys.\ Rev.\ E {\bf 84} (2011) 066106.

\bibitem{Donoho_2006}
D.\ Donoho: Compressed sensing. IEEE Transactions on Information Theory  {\bf 52} (2006) 1289--1306.

\bibitem{Donoho_2013}
D.\ Donoho, A.\ Javanmard , A.\ Montanari:
Information-theoretically optimal compressed sensing via spatial coupling and approximate message passing.  IEEE Transactions on Information Theory {\bf 59} (2013) 7434--7464.

\bibitem{Dorfman_1943}
R.\ Dorfman: The detection of defective members of large populations.
Annals of Mathematical Statistics {\bf{14}} (1943) 436--440.



\bibitem{Dyachkov_1982}
A.\ D{'}yachkov, V.\ Rykov: Bounds on the length of disjunctive codes.
Problemy Peredachi Informatsii {\bf 18} (1982) 166--171.


\bibitem{Erdos_1963}
P.\ \Erdos,  A. \Renyi{}: On Two Problems of Information Theory. Magyar Tud. Akad. Mat. Kutat{\'o} Int. K{\"o}zl {\bf 8} (1963) 229--243.

\bibitem{Felstrom_1999}
A.\ Felstrom, K.\ Zigangirov: Time-varying periodic convolutional codes with low-density parity-check  matrix. IEEE Transactions on Information Theory {\bf 45} (1999) 2181--2191.

\bibitem{Hoeffding}
W.\ Hoeffding: Probability inequalities for sums of bounded random variables. In
N.~Fisher, P.~Sen (eds.): The collected works of Wassily Hoeffding.
Springer Series in Statistics (Perspectives in Statistics). Springer, New York, NY (1994) 409--426.


\bibitem{Hwang_1972}
F.\ Hwang: A method for detecting all defective members in a population by group testing. Journal of the American Statistical Association {\bf 67} 
(1972) 605--608.


\bibitem{Janson_2011}
S. Janson, T. Luczak, A. Rucinski: Random Graphs. John Wiley \& Sons (2011).

\bibitem{Johnson_2019}
O.\ Johnson, M.\ Aldridge, J.\ Scarlett: Performance of group testing algorithms with near-constant tests per item.
IEEE Transactions on Information Theory {\bf 65} (2018) 707--723.

\bibitem{Kautz_1964}
W.\ Kautz, R.\ Singleton: Nonrandom binary superimposed codes. IEEE Transactions on Information Theory {\bf 10} (1964), 363--377.

\bibitem{Krzakala_2012}
F.\ Krzakala, M.\ M\'ezard, F.\ Sausset, Y.\ Sun, L.\ Zdeborov\'a:
Statistical-physics-based reconstruction in compressed sensing. Physical Review X {\bf 2} (2012) 021005.


\bibitem{Kudekar_2010_2}
S.\ Kudekar, H.\ Pfister: The effect of spatial coupling on compressive sensing.
Proc.~48th Allerton (2010)  347--353.

\bibitem{Kudekar_2011} 
S.\ Kudekar, T.\ Richardson, R.\ Urbanke: Threshold saturation via spatial coupling: 
why convolutional LDPC ensembles perform so well over the BEC.
IEEE Transaction on Information Theory {\bf 57} (2011) 803--834.

\bibitem{Kudekar_2013}
S.\ Kudekar, T.\ Richardson, R.\ Urbanke: Spatially coupled ensembles universally achieve capacity under belief propagation. IEEE Transaction on Information Theory {\bf 59} (2013) 7761--7813.

\bibitem{Kwang_2006}
H.\ Kwang-Ming, D.\ Ding-Zhu: Pooling designs and nonadaptive group testing: important tools for DNA sequencing. World Scientific (2006).


\bibitem{Mezard_2008}
M.\ M\'ezard, M.\ Tarzia, C.\ Toninelli: Group testing with random pools: phase transitions and optimal strategy. Journal of Statistical Physics {\bf 131} (2008) 783--801.


\bibitem{Moore}
C.~Moore: The computer science and physics of community detection: landscapes, phase transitions, and hardness.   
Bulletin of the EATCS {\bf 121} (2017).



\bibitem{Reeves_2019}
G.\ Reeves, H.\ Pfister (2019). Understanding phase transitions via mutual information and MMSE. arXiv:1907.02095.


\bibitem{Scarlett_2018}
J.\ Scarlett: Noisy adaptive group testing: Bounds and algorithms. IEEE Transactions on Information Theory {\bf 65} (2018) 3646--3661.

\bibitem{Scarlett_2019}
J.\ Scarlett: An efficient algorithm for capacity-approaching noisy adaptive group testing. Proc.\ IEEE International Symposium on Information Theory  (2019) 2679--2683.


\bibitem{Takeuchi_2011}
K.\ Takeuchi, T.\ Tanaka, T.\ Kawabata: Improvement of BP-based CDMA multiuser detection by
spatial coupling. Proc.\ IEEE International Symposium on Information Theory Proceedings (2011) 1489--1493.


\bibitem{Ungar_1960}
P.\ Ungar: The cutoff point for group testing. Communications on Pure and Applied Mathematics {\bf 13} (1960) 49--54.

\bibitem{Wang}
L.~Wang, X.~Li, Y.~Zhang, K.~Zhang: Evolution of scaling emergence in large-scale spatial epidemic spreading. PLoS ONE {\bf6} (2011).

\bibitem{Wu_2010}
Y.\  Wu, S.\ Verdu, Renyi information dimension: fundamental limits of almost lossless analog compression.
IEEE Transactions on Information Theory {\bf 56} (2010) 3721--3748.

\bibitem{Zdeborova_2016}
L.\ Zdeborov\'a, F.\ Krzakala: Statistical physics of inference: thresholds and algorithms. Advances in Physics {\bf 65} (2016) 453--552.

\end{thebibliography}

\end{document}